\newcommand {\Infer} [5] [] {
  \inferrule*[%
    left={\textsc{#2}},%
    right={$\begin{array}{l} {#5} \end{array}$}, 
    vcenter,%
    #1
  ]%
  {#3}{#4}}
\newcommand {\IInfer} [5] [] {
  \inferrule*[%
    fraction={===}, 
    left={\textsc{#2}},%
    right={$\begin{array}{l} #5 \end{array}$}, 
    vcenter,%
    #1
  ]%
  {#3}{#4}}
\newcommand{\NamedRule}[5][]{ \Infer[#1]{#2}{ #3 }{#4}{#5} }
\newcommand{\NamedCoRule}[5][]{\IInfer[#1]{#2}{ #3 }{#4}{#5}} 
\newcommand{\Nset}{\mathcal{N}}
\newcommand{\gtNet}[2]{{\sf gt}(#1,#2)}
\newcommand{\refToDef}[1]{Definition~\ref{#1}}
\newcommand{\refToFigure}[1]{Figure~\ref{#1}}
\newcommand{\refToSection}[1]{Section~\ref{#1}}
\newcommand{\refToExample}[1]{Example~\ref{#1}}
\newcommand{\refToLemma}[1]{Lemma~\ref{#1}}
\newcommand{\refToTheorem}[1]{Theorem~\ref{#1}}
\newcommand{\refToCor}[1]{Corollary~\ref{#1}}
\newcommand{\refToProp}[1]{Proposition~\ref{#1}}
\newcommand{\refToPropItem}[2]{\refToProp{prop:#1}(\ref{prop:#1:#2})} 
\newcommand{\rulename}[1]{{[\textsc{#1}]}}
\newcommand{\coDefGr}{::=_\rho}
\newcommand{\set}[1]{\{#1\}}
\newcommand{\kf}[1]{\ensuremath{\mathsf{#1}\xspace}}
\newcommand{\pp}{{\sf p}}
\newcommand{\q}{{\sf q}}
\newcommand{\pr}{{\sf r}}
\newcommand{\ps}{{\sf s}}
\newcommand{\PP}{\ensuremath{P}}
\newcommand{\la}{\M}
\newcommand{\M}{\lambda}
\newcommand{\sendL}[2]{#1!#2}
\newcommand{\rcvL}[2]{#1?#2}
\newcommand{\Participants}{\ensuremath{{\sf Part}}}
\newcommand{\Labels}{\ensuremath{{\sf Lab}}}
\newcommand{\lpart}[1]{\ensuremath{{\sf part}(#1)}}
\newcommand{\inpg}[4]{#1?\{\Seq{#3}{#4}\}_{#2}}
\newcommand{\inp}[5]{\inpg{#1}{#2\in#3}{#4_{#2}}{#5_{#2}}}
\newcommand{\outpone}[3]{#1!\Seq{#2}{#3}}
\newcommand{\outg}[4]{#1!\{\Seq{#3}{#4}\}_{#2}}
\newcommand{\oup}[5]{\outg{#1}{#2\in#3}{#4_{#2}}{#5_{#2}}}
\newcommand{\btg}[4]{#1\dagger\{\Seq{#3}{#4}\}_{#2}}
\newcommand{\bp}[5]{\btg{#1}{#2\in#3}{#4_{#2}}{#5_{#2}}}
\newcommand{\inact}{\ensuremath{\mathbf{0}}}
\newcommand{\val}{v}
\newcommand{\Q}{\ensuremath{Q}}
\newcommand{\R}{\ensuremath{R}}
\newcommand{\Nt}{\mathbbmss{N}}
\newcommand{\parN}{\mathrel{\|}}
\newcommand{\confAs}[2]{#1\parN#2}
\newcommand{\Msg}{\mathcal{M}} 
\newcommand{\addMsg}[2]{#1\cdot #2}
\newcommand{\commO}[3]{#1\,#3!#2}
\newcommand{\commI}[3]{#1\,#3?#2}
\newcommand{\commIO}[3]{#1\,#3\dagger#2}
\newcommand{\CommAs}[3]{\commO{#1}{#2}{#3}}
\newcommand{\CommAsI}[3]{\commI{#1}{#2}{#3}}
\newcommand{\mq}[3]{\langle#1,#2,#3\rangle}
\newcommand{\Seq}[2]{#1;#2}
\newcommand{\pP}[2] {#1[\![\,#2\,]\!]}
\newcommand{\stackred}[1]{\xrightarrow{#1}}
\newcommand{\lockred}[1]{\xRightarrow{#1}}
\newcommand{\lockredsub}[2]{\xRightarrow{#1}_{\!\!\!\!#2}}
\newcommand{\asCom}{\beta}
\newcommand{\play}[1]{\ensuremath{{\sf play}(#1)}}
\newcommand{\plays}[1]{\ensuremath{{\sf players}(#1)}}
\newcommand{\ms}{{\sf m}}
\newcommand{\lockSC}{\ensuremath{\gamma}}
\newcommand{\co}{\beta}
\newcommand{\wgs}[2]{\ensuremath{{\sf wg}(#2,#1)}}
\newcommand{\inc}[1]{\ensuremath{{\sf inp}(#1)}}
\newcommand{\G}{\ensuremath{{\sf G}}}
\newcommand{\End}{\kf{End}}
\newcommand{\agtbg}[5]{#1\,#2\dagger\{\Seq{#4}{#5}\}_{#3}}
\newcommand{\agtb}[6]{\agtbg{#1}{#2}{#3\in #4}{#5_#3}{#6_#3}}
\newcommand{\agtOg}[5]{#1\,#2!\{\Seq{#4}{#5}\}_{#3}}
\newcommand{\agtoneO}[4]{#1\,#2!\Seq{#3}{#4}}
\newcommand{\agtO}[6]{\agtOg{#1}{#2}{#3\in #4}{#5_#3}{#6_#3}}
\newcommand{\agtSOS}[3]{#1\,#2!#3}
\newcommand{\agtIg}[5]{#1\,#2?\{\Seq{#4}{#5}\}_{#3}} 
\newcommand{\agtII}[6]{\agtIg{#1}{#2}{#3\in #4}{#5_{#3}}{#6_{#3}}}  
\newcommand{\agtI}[4]{#1\,#2?\{\Seq{#3_i}{#4_i}\}_{i\in I}}
\newcommand{\agtJ}[4]{#1\,#2?\{\Seq{#3_j}{#4_j}\}_{j\in J}}
\newcommand{\agtIS}[3]{#1\,#2?#3}
\newcommand{\parG}{\mathrel{\|}}
\newcommand{\DD}{\mathcal D}
\newcommand{\tupleOK}[2]{\vdash_{\sf b}#1\parG#2}
\newcommand{\tupleOKW}[2]{\vdash_{\sf wb}#1\parG#2}
\newcommand{\OKA}[2]{\vdash_{\sf b}^{\mathcal I}#1\parG#2}
\newcommand{\concat}[2]{\ensuremath{#1\,{\cdot}\,#2}}
\newcommand{\weight}{\ensuremath{{\sf depth}}}
\newcommand{\IPaths}[1]{{\sf Paths}(#1)}
\newcommand{\ipth}{\xi}
\newcommand{\ple}[1]{\ensuremath{\langle #1 \rangle }} 
\newcommand{\rn}[1]{\rulename{#1}}
\newcommand{\algwf}[3]{#1\OKA{#2}{#3}}
\newcommand{\algcomp}[3]{ #1 \vdash_{\mathsf{agr}} (#2,#3) }
\newcommand{\algread}[2]{ \vdash_{\mathsf{read}} (#1,#2) } 
\newcommand{\algreadinf}[3]{ #1 \vdash_{\mathsf{dread}} (#2,#3) }
\newcommand{\algreadinfNot}[3]{ #1 \not\vdash_{\mathsf{dread}} (#2,#3) }
\newcommand{\Gset}{\mathcal{G}} 
\newcommand{\Bset}{\mathcal{H}}
\newcommand{\Hset}{\mathcal{H}}
\newcommand{\refToLem}[1]{Lemma~\ref{lem:#1}} 
\newcommand{\refToThm}[1]{Theorem~\ref{thm:#1}} 
\newcommand{\refToLemItem}[2]{\refToLem{#1}(\ref{lem:#1:#2})} 
\newcommand{\refToFig}[1]{Figure~\ref{fig:#1}}
\newcommand{\OK}[3]{\vdash_{\sf ok}(#1,#2,#3)}
\newcommand{\AS}{\mathcal{A}}
\newcommand{\GN}{\mathcal{GN}}
\newcommand{\pairA}[2]{(#1,#2)}
\newcommand{\cl}{{\sf p}}
\newcommand{\s}{{\sf s}}
\newcommand{\hq}{{\it nd}}
\newcommand{\lql}{{\it pr}}
\newcommand{\ok}{{\it ok}}
\newcommand{\ko}{{\it ko}}
\newcommand{\nD}{{\it nd}}
\newcommand{\pR}{{\it pr}}
\newcommand{\Srv}{\ensuremath{S}}
\newcommand{\X}{X}
\newcommand{\tyn}[2]{#1\vdash #2} 
\newcommand{\tynI}[2]{#1\vdash_{\sf i} #2} 
\newcommand{\tynIP}[2]{#1\vdash_{\sf i} #2}
\newcommand{\Y}{Y}
\newcommand{\agteq}[2]{#1 \eqcirc #2} 
\newcommand{\fun}[3]{\ensuremath{#1 \colon #2\rightarrow #3}\xspace} 
\newcommand{\N}{\mathbb{N}}
\newcommand{\itr}[1]{I$'$-#1} 
\newcommand{\itrp}[1]{I$'$-#1}
\newcommand{\ib}[1]{IB-#1}
\newcommand{\agr}[1]{A-#1}
\newcommand{\ub}{\eta} 
\newcommand{\ug}{\chi} 
\newcommand{\uG}{\Gamma}
\newcommand{\vars}[1]{\mathsf{vars}(#1)} 
\newcommand{\dom}[1]{\mathsf{dom}(#1)} 
\newcommand{\eqsys}{E} 
\newcommand{\gsol}[2][]{\mathsf{sol}_{#1}(#2)} 
\newcommand{\Goals}{\mathcal{S}} 
\newcommand{\tyalg}[3]{#1 \vdash #2 \Rightarrow #3} 
\newcommand{\infn}[1]{I-#1} 
\newcommand{\Gpat}{\mathbb{G}} 
\newcommand{\pst}[2]{\mathsf{pos}(#1,#2)}
\newcommand{\three}[3]{#1;#2;#3}
\newcommand{\pair}[2]{(#1,#2)}
\newcommand{\ipair}[2]{(#2,#1)}
\newcommand{\sbtplus}{+} 
\newcommand{\sbtord}{\preceq} 
\newcommand{\II}{\mathcal I}
\begin{document}

\title{Deconfined Global Types for Asynchronous Sessions}
\thanks{This work was partially funded by the MUR project ``T-LADIES'' (PRIN 2020TL3X8X)}

\author[F.~Dagnino]{Francesco Dagnino\lmcsorcid{0000-0003-3599-3535}}[a]	
\author[P.~Giannini]{Paola Giannini\lmcsorcid{0000-0003-2239-9529}}[b]
\author[M.~Dezani-Ciancaglini]{Mariangiola Dezani-Ciancaglini\lmcsorcid{0000-0002-3341-0941}}[c]
 
\address{DIBRIS, Universit\`{a} di Genova, Italy}	
\email{francesco.dagnino@dibris.unige.it}  

\address{DiSSTE, Universit\`{a} del Piemonte Orientale, Alessandria, Italy} 
\email{paola.giannini@uniupo.it}\thanks{This original research has the financial support of the Universit\`{a}  del Piemonte Orientale.}

\address{Dipartimento di Informatica, Universit\`{a} di Torino, Italy} 
\email{dezani@di.unito.it}

\begin{abstract}
Multiparty sessions with asynchronous communications and global types play an important role for the modelling of interaction protocols in distributed systems.  In designing such calculi the aim is to enforce, by typing,  good properties for all participants, maximising, at the same time, the accepted behaviours.   Our type system improves the state-of-the-art by typing all asynchronous sessions and preserving the key properties of Subject Reduction, Session Fidelity and Progress when some well-formedness conditions are satisfied. 
The type system comes together with a sound and complete type inference algorithm. 
The well-formedness conditions are undecidable, but an algorithm checking an expressive restriction of them recovers the effectiveness of typing. 

\end{abstract}

\maketitle


\section{Introduction}\label{intro} 

{\em Multiparty sessions}~\cite{CHY08,CHY16} are at the core of  communication-based programming,   since they formalise 
 message exchange protocols.  
A key choice in the modelling is synchronous versus asynchronous communications, giving rise to synchronous and asynchronous multiparty sessions. 
In the  multiparty session approach {\em global types} play the fundamental role of  describing the whole scenario, while the behaviour of participants is implemented by processes. 
A natural question  is   when a set of processes agrees with a global type,  meaning that participants behave  according  to the protocol described by the global type.  
The straightforward answer is the design of type assignment systems relating processes and global types.    
 Typically, global types are {\em projected} onto participants to get the local behaviours prescribed by the  protocol and then the  processes implementing the participants are checked against such local behaviours.   
In conceiving such systems one  wants to permit  all possible typings which guarantee desirable properties: the mandatory Subject Reduction, but also Session Fidelity and Progress. {\em Session Fidelity}~\cite{CHY08,CHY16} means that the content and the order of exchanged messages respect the prescriptions of  the global type. {\em Progress}~\cite{DY11,CDPY16} requires that all participants willing to communicate  will be able to  do it and, in case of asynchronous communication, also that all sent messages (which usually are in a queue)  will  be received.

A standard way of getting more permissive typings is through {\em subtyping}~\cite{GH05},  which is used to compare local behaviours obtained by projection to the actual behaviours of participants.  
Following the {\em substitution principle}~\cite{LW94},
  we can safely put a process of some type where a process of a bigger type is expected.   
 In the natural subtyping for synchronous multiparty sessions, bigger types have  less inputs and more outputs~\cite{DH12}.    
This subtyping is not only correct, but also {\em complete}, that is, 
any extension of this subtyping would be unsound~\cite{GJPSY19}. 
A powerful subtyping for asynchronous sessions was proposed in~\cite{HMY09} and recently proved to be complete~\cite{GPPSY21}. The key idea of this subtyping is the possibility of  anticipating outputs  before inputs   
to improve efficiency. 
 This additional flexibility is justified by the fact that in an asynchronous setting outputs are \emph{non-blocking} operations, hence they never prevent subsequent actions to be performed.  
 An important issue  
of this subtyping is its undecidability~\cite{BCZ17,LY17},  
 that makes the whole type assignment system undecidable.  
To overcome this problem, some decidable restrictions of this subtyping were proposed~\cite{BCZ17,LY17,BCZ18} 
and a sound, but  necessarily  not complete, decision algorithm  is  presented in~\cite{BCLYZ21}.

Asynchronous communications better represent the exchange of messages between participants in different localities, and  they  are more suitable for implementations. So it is interesting to find   alternatives to subtyping which increase       
typability of asynchronous multiparty sessions,  and still  ensure  all desired properties.  
Recently a more permissive design of global types has been proposed~\cite{CDG20}. 
 The key idea is to make the syntax of global types much closer to   processes.  
This allows us to simplify the type assignment and to recover its decidability.  
Then, we  study well-formedness conditions on global types ensuring the desired properties  on processes.  
In other words, instead of directly deriving such properties from the syntax of global types through the type assignment, 
we split the problem in two steps:  we  first assign global types to networks and then transfer properties from types to processes through  the  typing relation. 
In this way, potential undecidability issues are confined to the second step, that is, they only depend on the complexity of the property one tries to enforce. 
 In this way,  the type assignment remains decidable and provides an abstraction from a local to a global perspective, which  simplifies reasoning on global properties; 
however, by itself it does not ensure any of such properties.  

More in detail, the formalism  proposed in~\cite{CDG20} 
is based on the simple idea of  splitting   outputs and inputs in the syntax of global types,  rather than modelling each communication action as a whole.  
In this way outputs can anticipate inputs,  thus capturing their non-blocking nature directly at the level of global types.  
 We dub ``deconfined'' such global types. 
 The freedom gained by this definition is rather limited  
in~\cite{CDG20}, whose main focus was to define an event structure semantics for asynchronous  multiparty sessions.  
In particular,  the well-formedness conditions that global types had to satisfy  in~\cite{CDG20}
still strongly confined their use. 

In the present paper (which is the journal version of~\cite{DGD21}) we extend the syntax of global types in~\cite{CDG20,DGD21},  allowing input  choices  
for global types, and  significantly  enlarging  the set of well-formed global types. 
In this way we are able to type also an example requiring a subtyping 
which fails for the algorithm in~\cite{BCLYZ21}.  
The idea is that the well-formedness of global types must guarantee that all participants waiting for a message are not stuck  and that all sent messages find the corresponding readers. 
This last condition is particularly 
delicate for  non-terminating  computations in which the number of unread messages may  grow indefinitely.  
 Under this condition the type system  enjoys Subject Reduction, Session Fidelity and Progress.
 Not surprisingly, the well-formedness of global types turns out to be undecidable, hence 
we design a decidable restriction to keep our system effective. 
The proposed algorithm extends similar ones  presented in~\cite{CDG20,DGD21}, adapting them to the more expressive syntax of our global types. 
In particular, we gain expressivity by:
\begin{itemize}
\item 
requiring that at  least  one input in a choice of inputs matches an output or a message in the queue;
\item 
allowing an unbound number  of unread messages when all  of them  will be eventually   read. 
\end{itemize}

We illustrate the proposed calculus with a {\em running example} in which the number of unread messages is unbounded.    
We choose this example since typing this session in standard type systems for multiparty sessions requires subtyping.
Indeed, this is the running example of~\cite{BCLYZ21}, where a decidable restriction of 
asynchronous subtyping  is presented.  In addition this example is typable  neither  
 in~\cite{CDG20} nor  in~\cite{DGD21}. 

A hospital server $\ps$ waits   to receive  from a  patient  $\pp$ 
either some data $\nD$ or a request  to send her  a  report $\pR$. 
In our calculus such a process is represented by $\Srv=\pp?\{\nD;\Srv_1,\pR;\Srv_1\}$, where  $\pp?$ means an input from participant $\pp$  and the labels that can be read, in this case $\nD$  and $\pR$, are between curly brackets and followed by the process representing the remaining behaviour.
After the reception of one of the two labels 
the server answers  by sending  either $\ok$ or $\ko$ and then it waits for another request. 
 This is expressed by   $\Srv_1=\pp!\{\ok;\Srv,\ko;\Srv\}$, where $\pp!$ means an output to participant $\pp$ and again the two labels $\ok$ and $\ko$ are put between curly brackets. 
The patient,  to save time, starts by sending some data and then waits
for the response from the server, i.e., $\PP=\ps!\nD;\PP_1$ and $\PP_1=\ps?\{\ok;\PP,\ko;\ps!\pR;\PP\}$. If the patient receives $\ok$ she continues sending next data. 
In case she receives $\ko$ she sends the request for her report and then starts sending next data.
So the multiparty session $\confAs{\pP\pp\PP\parN\pP\ps\Srv}\emptyset$, where $\emptyset$ is the empty queue, can  execute  as follows:
\[
\confAs{\pP{\pp}{\PP} \parN  \pP{\ps}{\Srv}}\emptyset \stackred{\CommAs\pp{\nD}\ps}\confAs{\pP{\pp}{\PP_1}\parN\pP{\ps}{\Srv}}{\mq\pp{\nD}\ps}\stackred{\CommAsI\pp{\nD}\ps}
  \confAs{\pP{\pp}{\PP_1}\parN\pP{\ps}{\Srv_1}}{\emptyset}
\] 
decorating  transition  
arrows with communications and denoting by $\mq\pp{\nD}\ps$ the message   sent  from $\pp$ to $\ps$ with  label $\nD$. 
The interaction may continue  as shown in \refToFig{ths}.  If the server  repeatedly responds  $\ko$ the queue can grow unboundedly. However, each message will be
eventually read by the server. 

\begin{figure}
\begin{math}\begin{array}{lll}
  \confAs{\pP{\pp}{\PP_1}\parN\pP{\ps}{\Srv_1}}{\emptyset}
  &\stackred{\CommAs\ps{\ko}\pp}&
  \confAs{\pP{\pp}{\PP_1}\parN\pP{\ps}{\Srv}}{\mq\ps{\ko}\pp}\\
  &\stackred{\CommAsI\ps{\ko}\pp}&
  \confAs{\pP{\pp}{\ps!\pR;\PP} \parN  \pP{\ps}{\Srv}}{\emptyset}\\
  &\stackred{\CommAs\pp{\pR}\ps}&
  \confAs{\pP{\pp}{\PP} \parN  \pP{\ps}{\Srv}}{\mq\pp{\pR}\ps}\\
   &\stackred{\CommAs\pp{\nD}\ps}&
  \confAs{\pP{\pp}{\PP_1} \parN  \pP{\ps}{\Srv}}{\mq\pp{\pR}\ps\cdot\mq\pp{\nD}\ps}\\
  &\stackred{\CommAsI\pp{\pR}\ps}&
  \confAs{\pP{\pp}{\PP_1} \parN  \pP{\ps}{\Srv_1}}{\mq\pp{\nD}\ps}\\
   &\stackred{\CommAs\ps{\ko}\pp}&
  \confAs{\pP{\pp}{\PP_1} \parN  \pP{\ps}{\Srv}}{\mq\pp{\nD}\ps\cdot\mq\ps{\ko}\pp}\\
  &\stackred{\CommAsI\ps{\ko}\pp}&
  \confAs{\pP{\pp}{\ps!\pR;\PP} \parN  \pP{\ps}{\Srv}}{\mq\pp{\nD}\ps}\\
  &\stackred{\CommAs\pp{\pR}\ps}&
  \confAs{\pP{\pp}{\PP} \parN  \pP{\ps}{\Srv}}{\mq\pp{\nD}\ps\cdot\mq\pp{\pR}\ps}\\
  &\stackred{\CommAs\pp{\nD}\ps}&
  \confAs{\pP{\pp}{\PP_1} \parN  \pP{\ps}{\Srv}}{\mq\pp{\nD}\ps\cdot\mq\pp{\pR}\ps\cdot\mq\pp{\nD}\ps}\\
\end{array}
\end{math}
\caption{A transition of the hospital session.}\label{fig:ths}
\end{figure}

The network  $\pP{\pp}{\PP} \parN  \pP{\ps}{\Srv}$ can be typed by the global type $\G=\Seq{\CommAs{\pp}{\nD}{\ps}}{\G_1}$, where
$\G_1=\agtIS{\pp}{\ps}{\set{\Seq{\nD}{\G_2}, \Seq{\pR}{\G_2}}}$ and 
$\G_2=\agtSOS{\ps}{\pp}{\set{\Seq{\ok}{\Seq{\CommAsI{\ps}{\ok}{\pp}}{\G}}, \Seq{\ko}{\Seq{\CommAsI{\ps}{\ko}{\pp}}{\Seq{\CommAs{\pp}{\pR}{\ps}}{\G}}}}}$. 
The type $\G$ prescribes that $\pp$  put in the queue the label $\nD$ and then $\pp$ and $\ps$ follow the protocol described by $\G_1$. 
The type $\G_1$ asks the server to read either the label $\nD$ or the label $\pR$ and in both cases the interaction follows the communications in $\G_2$.

\subsubsection*{Outline}  Our calculus of multiparty sessions is presented in \refToSection{sec:calculus}, where the Progress property is defined.  \refToSection{sect:type-system} introduces our  type system: we call it ``wild'' since each network can be typed in it.  We define an LTS for global types with queues and we show Session Fidelity. 
 Together with the type system,  we give a sound  and complete  type inference algorithm 
proving that type inference is decidable. 
In \refToSection{sec:tt} we tame global types  to    guarantee  Subject Reduction and Progress. 
Unfortunately the balancing predicate, which 
ensures  these properties, is undecidable, as shown in \refToSection{sect:dec-algo}. 
The effectiveness of our type system is recovered in \refToSection{sect:algo} by an algorithm  that checks  an inductive restriction of this predicate. 
Related and future works are discussed in \refToSection{sec:rfw}.



\section{A Core Calculus for Multiparty Sessions}\label{sec:calculus}
 Since our focus is on typing by means of global types, we only consider one multiparty session instead of many interleaved multiparty sessions. This allows us to depart from the standard syntax of processes with channels~\cite{CHY08,BCDDDY08} in favour of simpler processes with output and input operators and explicit participants as in~\cite{DGJPY15,DS19,GJPSY19}.

We assume the following base sets:   \emph{
 participants}  $\pp,\q,\pr\in\Participants$, and
 \emph{labels} $\la\in\Labels$.

 \begin{defi}[Processes]\label{p} 
  {\em Processes} $\PP$ are defined by:
\[\begin{array}{rcl}
\PP & \coDefGr  & 
\inact
\mid
\oup\pp{i}{I}{\la}{\PP}%
\mid
\inp\pp{i}{I}{\la}{\PP}%
\end{array}
\]
where $I\neq\emptyset$ and $\la_j\not=\la_h\,$ for $j\neq h$.
\end{defi}
The symbol $ \coDefGr$, in the definition above and in  other definitions, 
indicates that the productions should be interpreted \emph{coinductively}.
 That is, they define possibly infinite processes.  
However, we assume such processes to be \emph{regular},  i.e.,  with 
finitely many distinct subprocesses. In this way, we only obtain processes which are solutions of 
 finite sets  of equations, see \cite{Cour83}.  
We choose this formulation as we will use coinduction in some definitions and proofs and, moreover, it allows us to avoid explicitly handling variables, thus simplifying a lot the technical development. 

 A process of shape $\oup\pp{i}{I}{\la}{\PP}$ (\emph{internal choice}) chooses a  label  in the set  $\{\la_i\mid i\in I\}$ to be sent to $\pp$, and then behaves differently depending on the sent  label.  
 A process of shape $\inp\pp{i}{I}{\la}{\PP}$ (\emph{external choice}) waits for receiving one of the  labels  $\{\la_i\mid i\in I\}$ from $\pp$, and then behaves differently depending on the received  label. Note that  the set of indexes in choices is assumed to be non-empty, and the corresponding  labels  to be all different.  
An internal choice which is a singleton is simply written $\outpone{\pp}{\la}{\PP}$, 
 analogously for an external choice.  We omit traling $\inact$ and we use $\bp\pp{i}{I}\la\PP$ to denote either $\oup\pp{i}{I}{\la}{\PP}$  or  $\inp\pp{i}{I}{\la}{\PP}$. 

In a full-fledged calculus,  labels would carry values, 
namely they would be of shape $\la(\val)$.  For simplicity, here we
consider pure labels.

 \emph{Messages} are triples $\mq\pp{\la}\q$ denoting that participant $\pp$ has sent label $\la$ to participant $\q$. 
Sent messages are stored in a queue,  from which  they
are subsequently fetched by the receiver.

Message queues $\Msg$ are defined by:
\[\Msg::=\emptyset \mid
  \addMsg{\mq\pp{\la}\q}{\Msg}\]
The order of messages in the queue is the order in which they will be
read. Since 
order matters only 
between messages with
the same sender and receiver, we  always  consider message queues modulo the  following  structural equivalence:\[\addMsg{\addMsg{\Msg}{\mq\pp{\la}\q}}{\addMsg{\mq\pr{\la'}\ps}{\Msg'}}\equiv
  \addMsg{\addMsg{\Msg}{\mq\pr{\la'}\ps}}{\addMsg{\mq\pp{\la}\q}{\Msg'}}
  ~~\text{if}~~\pp\not=\pr~~\text{or}~~\q\not=\ps 
\]
Note, in particular, that
$\addMsg{\mq\pp{\la}\q}{\mq\q{\la'}\pp} \equiv
\addMsg{\mq\q{\la'}\pp}{\mq\pp{\la}\q}$. These two
equivalent queues represent a situation in which both participants
$\pp$ and $\q$ have sent a label 
to the other one, and neither of
them has read the message. This situation may
happen in a multiparty session with asynchronous communication.

Multiparty sessions are comprised of  networks, i.e. pairs participant/process  
of shape $\pP{\pp}{\PP}$ composed  in parallel,  each with a different participant $\pp$, and a message queue.

\begin{defi}[Networks and multiparty sessions]\label{nd} 
\begin{enumerate}
\item\label{nd1} {\em Networks} are defined by\\ $\Nt  ::= \pP{\pp_1}{\PP_1} \parN \cdots \parN \pP{\pp_n}{\PP_n}$, where $n>0$ and  $\pp_i \neq \pp_j$  for $i\neq j$.
\item\label{nd2} {\em Multiparty sessions} are defined by $\Nt \parallel \Msg$, where $\Nt$ is a network and $\Msg$ is a message queue. 
\end{enumerate}
\end{defi}
In the following we use session as short for multiparty session.

 We assume the standard structural congruence on sessions (denoted $\equiv$), that is, we consider sessions modulo permutation of components and adding/removing components of  the  shape $\pP\pp\inact$.

 If $\PP\neq\inact$ we write $\pP{\pp}{\PP}\in\Nt$ as short for $\Nt\equiv  \pP{\pp}{\PP} \parN\Nt'$ for some $\Nt'$.
This abbreviation is justified by the associativity and commutativity of $\parN$.

We define $\plays\Nt=\set{\pp\mid\pP{\pp}{\PP}\in\Nt}$.

To define the {\em asynchronous operational semantics} of
sessions, we use an LTS whose labels record  the outputs and the inputs. 
To this end,  \emph{communications}  (ranged over by $\beta$) are either  the asynchronous emission of a
label  $\M$ from participant $\pp$ to participant $\q$ (notation $\CommAs{\pp}{\M}{\q}$) or 
the actual reading  by participant $\q$
of the  label  $\M$ sent by participant $\pp$ (notation $\CommAsI{\pp}{\M}{\q}$).  

\begin{figure}[t]
\begin{math}
\begin{array}{c}  
\\[5pt]
\rulename{Send}\quad\confAs{\pP{\pp}{\oup\q{i}{I}{\la}{\PP}}\parN\Nt}{\Msg} \stackred{\CommAs\pp{\la_h}\q}
  \confAs{\pP{\pp}{\PP_h}\parN\Nt}{\addMsg{\Msg}{\mq\pp{\la_h}\q}}\quad  \text{where }
   h \in I
   \\[3pt]
 \rulename{Rcv}\quad\confAs{\pP{\q}{\inp\pp{i}{I}{\la}{\Q}}\parN\Nt}{\addMsg{\mq\pp{\la_h}\q}{\Msg}}\stackred{\CommAsI\pp{\la_h}\q}
 \confAs{\pP{\q}{\Q_h}\parN\Nt}{\Msg}\quad   \text{where }  h \in I 
\end{array}
\end{math}
\caption{LTS for asynchronous sessions.}\label{fig:asynprocLTS}
\end{figure}
The LTS semantics of sessions 
 is  specified by
the two Rules \rulename{Send} and \rulename{Rcv} given in
\refToFigure{fig:asynprocLTS}.  Rule \rulename{Send} allows a
participant $\pp$ with an internal choice (a sender) to send one of
its possible  labels  $\la_h$, by adding  the corresponding message  to the
queue. Symmetrically, Rule \rulename{Rcv} allows a participant $\q$
with an external choice (a receiver) to read the first message 
in the queue sent  to her by a  given  participant $\pp$, if  its label $\la_h$  is one of those she is waiting for.

The semantic property we aim to ensure,  usually called 
\emph{progress}~\cite{DY11,CDPY16,H2016},  
is the conjunction of a safety property, {\em deadlock-freedom}, and two  liveness  
properties: \emph{input lock-freedom} and \emph{orphan message-freedom}. 
Intuitively, a session is deadlock-free
if, in every reachable state of computation, it is either terminated  (i.e. of the shape $\pP\pp\inact\parN\emptyset$)  or
it can move. It is input lock-free if every component
wishing to do an input can eventually do so. 
Finally, it is orphan-message-free if every message stored in
the queue is eventually read.

The following terminology and notational conventions are
standard. 

If $\Nt \parallel \Msg \stackred{\beta_1}\cdots \stackred{\beta_n}
\Nt' \parallel \Msg'$ for some $n\geq 0$ (where by convention
$\Nt' \parallel \Msg' = \Nt \parallel \Msg $ if $n = 0$), then we say that
$\Nt' \parallel \Msg'$ is a \emph{derivative} of $\Nt \parallel \Msg$.
We write $\Nt \parallel \Msg \stackred{\beta}$ if  $\Nt \parallel \Msg \stackred{\beta}\Nt' \parallel \Msg'$ for some $\Nt', \Msg'$.

\begin{defi}[Live, terminated, deadlocked sessions]
\label{def:session-status}
A session $\Nt \parallel \Msg$ is said to be\\
- \emph{live} if  $\Nt \parallel
\Msg \stackred{\beta}$ for some $\beta$;\\
- \emph{terminated}
if  $\Nt\equiv\pP\pp\inact$ 
and $\Msg =\emptyset$;\\
- \emph{deadlocked} if it is neither live 
nor terminated.
\end{defi}
 To formalise progress (Definition~\ref{def:lock-freedom}) we introduce  
another transition relation on sessions, which
describes their lockstep execution: at each step, all components that
are able to move execute exactly one asynchronous output or input. 

 We define the {\em player of a communication} as the sender in case of output and as the  receiver in case of input: \[
 \play{\CommAs{\pp}{\M}{\q}}=\pp\quad\quad
\play{\CommAsI{\pp}{\M}{\q}}=\q 
\]

Let $\Delta$ denote a non empty
set of communications.  
We say that $\Delta$ is \emph{coherent} for a session $\Nt\parN\Msg$ if 
\begin{enumerate}
\item for all $\co_1,\co_2\in\Delta$, $\play{\co_1} = \play{\co_2}$ implies $\co_1 = \co_2$, and 
\item for all $\co \in \Delta$, $\Nt\parN\Msg \stackred{\co}$. 
\end{enumerate}
The \emph{lockstep transition relation} 
$\Nt\parN\Msg \lockred{\Delta} \Nt' \parN\Msg'$ is defined by:
\[\Nt \parallel \Msg \lockred{\Delta} \Nt' \parallel \Msg' \text{ if } 
\begin{array}[t]{l} 
\Delta = \set{\beta_1,\ldots, \beta_n} \text{ is a maximal coherent set for $\Nt\parN\Msg$ and } \\[3pt] 
\Nt \parallel \Msg \stackred{\beta_1}\cdots \stackred{\beta_n} \Nt' \parallel \Msg'
\end{array}
\]
The notion of derivative can be reformulated for lockstep 
computations
as follows.

If $\Nt \parallel \Msg \lockred{\Delta_1}\cdots \lockred{\Delta_n}
\Nt' \parallel \Msg'$ for some $n\geq 0$ (where by convention
$\Nt' \parallel \Msg' = \Nt \parallel \Msg $ if $n = 0$), then we say that
$\Nt' \parallel \Msg'$ is a \emph{lockstep derivative} of
$\Nt \parallel \Msg$. Clearly each lockstep derivative is a
derivative, but   not vice versa. 

 A lockstep computation is an either finite or infinite sequence of lockstep transitions, and it is \emph{complete} if either it is
finite and cannot be extended (because the last session is not live), or it is infinite. 
Let $\lockSC$ range over lockstep
computations.  

Formally, a lockstep computation $\lockSC$ can be denoted
as follows, where $x \in \mathbf{N} \cup \set{\omega}$ is the length of $\lockSC$:
\[\lockSC = \set { \Nt_k \parallel \Msg_k
  \lockredsub{\Delta_k}{k}\Nt_{k+1} \parallel \Msg_{k+1}}_{k< x}
\]
That is, $\lockSC$ is represented as the set of its successive
lockstep transitions, where the  arrow  subscript $k$ is used to indicate that
the transition occurs in the \mbox{$k$-th} step of the computation. This is
needed in order to distinguish equal transitions occurring in
different steps. For instance, in the session $\Nt \parallel  \mq\pp{\la}\q 
$, where
$\Nt =\pP{\pp}{\PP} \parN \pP{\q}{\Q}$ with $\PP = \sendL{\q}\la; \PP$
and $\Q = \rcvL{\pp}{\la}; Q$, all lockstep
transitions with $k \geq 1$ are of the form
\[\Nt \parallel \mq\pp{\la}\q
\lockredsub{\set{\CommAs\pp{\la}\q,
    \CommAsI\pp{\la}\q}}{k}\Nt \parallel \mq\pp{\la}\q
\]  
We  can  now formalise the progress property: 
\begin{defi}[Input-enabling session] 
  A session $\Nt \parallel \Msg$ is \emph{input-enabling} if
  $\,\pP{\pp}{\inp\q{i}{I}{\la}{\PP}} \in \Nt$ implies that, 
  for all complete\[\lockSC = \set {
    {\Nt_k \parallel \Msg_k} \lockredsub{\Delta_k}{k}\Nt_{k+1} \parallel
    \Msg_{k+1}}_{k <  x}\]  with  
    $\Nt_0 \parallel \Msg_0$  
    =
  \mbox{$\Nt \parallel \Msg$}, there exists $h < x$ such that  $\CommAsI{\pp}{\M_j}{\q} \in\Delta_h$ for some $j\in I$. 
\end{defi}

\begin{defi}[Queue-consuming session] A session $\Nt \parallel
  \Msg\,$ is \emph{queue-consuming} if $\Msg\equiv\addMsg{\mq{\pp}{\la}{\q}}{\Msg'}$
  implies
  that,
  for all  complete 
  \[\lockSC = \set {
    \Nt_k \parallel \Msg_k \lockredsub{\Delta_k}{k}\Nt_{k+1} \parallel
    \Msg_{k+1}}_{k < x}\]  with  
    $  \Nt_0 \parallel \Msg_0   
 = \Nt \parallel \Msg$, there exists $h < x$ such that
  $\CommAsI{\pp}{\M}{\q} \in\Delta_h$.
\end{defi}

\begin{defi}[Progress] 
\label{def:lock-freedom} 
A session 
has the {\em progress} property if:
\begin{enumerate}
\item (Deadlock-freedom) None of its lockstep derivatives is deadlocked;
\item (No locked inputs) All its lockstep derivatives are
  input-enabling;
\item (No orphan messages)
All its lockstep derivatives are queue-consuming.
\end{enumerate}
\end{defi}
It is easy to see that deadlock-freedom implies no locked inputs and no orphan messages for finite computations. 
\begin{exa}
\label{ex:growing-queue}
Let 
$\Nt=\pP{\pp}{\PP} \parN \pP{\q}{\Q} \parN \pP{\pr}{\R}$, where $\PP =
\sendL{\q}\la; \PP$,  $\Q = \rcvL{\pp}{\la}; \rcvL{\pr}{\la'}; Q$
 and $\R = \sendL{\q}\la'; \R$.\\
The unique complete lockstep computation of $\confAs\Nt\emptyset$ is the following: 
\[\begin{array}{lll}
\confAs\Nt\emptyset &\lockred{\set{\CommAs\pp{\la}\q, \CommAs\pr{\la'}\q}}&
  \confAs{\Nt}{\addMsg{\mq{\pp}{\la}{\q}}{\mq{\pr}{\la'}{\q}}}\\
&\lockred{\set{\CommAs\pp{\la}\q, \CommAsI\pp{\la}\q, \CommAs\pr{\la'}\q}}&
  \confAs{\pP{\pp}{\PP} \parN \pP{\q}{\rcvL{\pr}{\la'}; Q} \parN
    \pP{\pr}{\R}}{\addMsg{\mq{\pr}{\la'}{\q}}{\addMsg{\mq{\pp}{\la}{\q}}{\mq{\pr}{\la'}{\q}}}}\\
&\lockred{\set{\CommAs\pp{\la}\q, \CommAsI\pr{\la'}\q, \CommAs\pr{\la'}\q}}&
  \confAs{\Nt}{\addMsg{\mq{\pp}{\la}{\q}}{\addMsg{\mq{\pr}{\la'}{\q}}{\addMsg{\mq{\pp}{\la}{\q}}{\mq{\pr}{\la'}{\q}}}}}\\
&\cdots &\cdots 
\end{array}
\]
 It is easy to check that $\confAs\Nt\emptyset$ has the progress property. Indeed, every input communication in $\Q$ is eventually enabled, and, even though the queue grows at each step of the lockstep computation, every message in the queue is eventually read. 
\end{exa}



\section{A Wild Type System}
\label{sect:type-system} 

In this section we first present global types  and the type system  for networks, then we give  an  LTS for global types  in parallel with queues  which allows us to get Session Fidelity. 
Lastly, we  present a sound and complete type inference algorithm. 
 We call \emph{wild} this type system since the freedom in the syntax of global types allows us to type all networks, see \refToTheorem{th:gtNet}. 

\subsection{Global types}
\label{sect:types} 

Our global types can be obtained  from  the standard ones~\cite{CHY08,CHY16} by splitting output and input communications. For this reason we  call  our global types deconfined.

\begin{defi}[Global  types] 
\label{def:GlobalTypesAs}
\emph{Global types} 
$\G$ are defined by the following grammar: 
\[\begin{array}{rcl}
\G & \coDefGr &   \End  \mid \agtO{\pp}{\q}i I{\la}{\G}
              \mid \agtI \pp\q \la \G                  
\end{array}\]
where $I\neq\emptyset$, $\pp\neq \q$ and $\la_j\not=\la_h\,$ for $j\neq h$. 
\end{defi}
\noindent
As for processes, $ \coDefGr$ indicates that global types are  \emph{regular}.   

The global type $\agtO{\pp}{\q}i I{\la}{\G}$  specifies that player $\pp$ sends a label $\la_k$ with $k\in I$ to participant $\q$ and then the interaction described by the global type $\G_k$ takes place. 
The global type $\agtI \pp\q \la \G   $  specifies that player $\q$ receives a label $\la_k$ with $k\in I$ from participant $\pp$  and then the interaction described by the global type $\G_k$ takes place.  
An output choice which is a singleton is simply written $\Seq{\commO{\pp}{\la}{\q}}{\G}$ and similarly for an input choice. 
We omit trailing $\End$ and we use $ \agtb\pp\q{i}{I}\la\G$ to denote either $\agtO{\pp}{\q}i I{\la}{\G}$ or $\agtI \pp\q \la \G   $. 

We define  $\play{ \agtO{\pp}{\q}i I{\la}{\G}}=\pp$ and $\play{ \agtI \pp\q \la \G }=\q$. 
The set of players of a global type, notation 
$\plays{\G}$, is  
the smallest set of participants satisfying the following equations: 
\[\begin{array}{lll}
\plays{\End} &= \emptyset \\
\plays{ \agtO{\pp}{\q}i I{\la}{\G}} &= \set{\pp}\cup\bigcup_{i\in I}\plays{\G_i}\\
\plays{\agtI \pp\q \la \G} &= \set{\q}\cup\bigcup_{i\in I}\plays{\G_i}
\end{array}\]
Notice that the sets of players are always finite thanks to the regularity of global types.

\begin{figure}
\begin{math}
\begin{array}{c}
\NamedCoRule{\rn{End}}{}{ \tyn\End{\pP\pp\inact} }{} \\[3ex] 
\NamedCoRule{\rn{Out}}{
  \tyn{\G_i}{\pP\pp{\PP_i}\parN\Nt}\ \ \plays{\G_i}\setminus\set\pp=\plays\Nt\ \ \forall i \in I
}{ \tyn{\agtO{\pp}{\q}{i}{I}{\la}{\G}}{\pP\pp{\oup{\q}{i}{I}{\la}{\PP}}\parN\Nt}}{} 
\\[3ex] 
\NamedCoRule{\rn{In}}{
  \tyn{\G_i}{\pP\pp{\PP_i}\parN\Nt} \ \ \plays{\G_i}\setminus\set\pp=\plays\Nt\ \ \forall i \in I
}{ \tyn{\agtI{\q}{\pp}{\la}{\G}}{\pP\pp{\inp{\q}{j}{J}{\la}{\PP}}\parN\Nt} }{I\subseteq J} 
\end{array} 
\end{math}
\caption{Typing rules for networks.}\label{fig:cntr} 
\end{figure}

Global types are an abstraction of networks,  
usually described by projecting global types to local types which are assigned to processes.   
The simplicity of our calculus and the flexibility of our global types  allow us to formulate a type system deriving directly global types for networks,  judgments $\tyn\G\Nt$, see \refToFig{cntr}. 
The double line here and in the following indicates that the rules are coinductive. 
Rules \rn{Out} and \rn{In} just add simultaneously outputs and inputs to global types and to corresponding processes inside networks.  
Rule \rn{Out} requires the same outputs in the process and in the global type, while the subtyping for session types~\cite{DH12} allows less outputs in the process than in the global  type. The only consequence of our choice is that  less global types are derivable for a  given network. The gain is a formulation of Session Fidelity ensuring that networks implement corresponding global types, see \refToTheorem{thm:sfA}. Instead
Rule \rn{In} allows more inputs in the process than in the global type, just mimicking the subtyping for session types~\cite{DH12}.\\ 
 The condition $ \plays{\G_i}\setminus\set\pp=\plays\Nt$  for all $i \in I$ ensures that  the players of $\Nt$ and of  $\G$ coincide whenever $\tyn\G\Nt$, as stated in the following lemma, whose proof is straightforward.  
\begin{lem}\label{ep}
If $\tyn\G\Nt$, then $ \plays{\G}=\plays\Nt$.
\end{lem}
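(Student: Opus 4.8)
The plan is to prove the statement by coinduction on the derivation of $\tyn\G\Nt$, exploiting the fact that the typing rules in \refToFig{cntr} are coinductive and that global types (and hence their player sets) are regular. Actually, since $\plays{\G}$ is defined as a least fixpoint (the smallest set satisfying the displayed equations) while the typing derivation is a possibly infinite tree, the cleanest route is a double inclusion: I would show $\plays\Nt \subseteq \plays\G$ by induction on the derivation of $\pP\pp\PP \in \Nt$ together with the defining equations of $\plays\G$, and $\plays\G \subseteq \plays\Nt$ by induction on the inductive definition of $\plays\G$, using at each node the side condition $\plays{\G_i}\setminus\set\pp = \plays\Nt$ supplied by the applied rule.

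Concretely, first I would do a case analysis on the last rule applied in $\tyn\G\Nt$. If it is \rn{End}, then $\G = \End$ and $\Nt \equiv \pP\pp\inact$, so both player sets are empty and we are done. If it is \rn{Out}, then $\G = \agtO{\pp}{\q}{i}{I}{\la}{\G}$, the network is $\pP\pp{\oup{\q}{i}{I}{\la}{\PP}}\parN\Nt'$, and for each $i\in I$ we have a subderivation $\tyn{\G_i}{\pP\pp{\PP_i}\parN\Nt'}$ together with $\plays{\G_i}\setminus\set\pp = \plays{\Nt'}$. For the inclusion $\plays\G \subseteq \plays\Nt$: by definition $\plays\G = \set\pp \cup \bigcup_{i\in I}\plays{\G_i}$; the element $\pp$ is a player of $\Nt$ since $\pP\pp{\oup{\q}{i}{I}{\la}{\PP}}\in\Nt$; and any other element of some $\plays{\G_i}$ lies in $\plays{\G_i}\setminus\set\pp = \plays{\Nt'} \subseteq \plays\Nt$. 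For the reverse inclusion $\plays\Nt \subseteq \plays\G$: $\pp \in \plays\G$ directly, and $\plays{\Nt'} = \plays{\G_i}\setminus\set\pp \subseteq \plays{\G_i} \subseteq \plays\G$ for any fixed $i\in I$ (the set $I$ is nonempty). The case \rn{In} is entirely symmetric, with $\q$ playing the role of $\pp$ and the extra hypothesis $I\subseteq J$ being irrelevant to the player sets.

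To make this rigorous rather than merely a top-level case analysis, the inclusion $\plays\Nt\subseteq\plays\G$ should be argued by structural induction on the proof that a participant belongs to $\plays\Nt$ — equivalently, one fixes $\pr\in\plays\Nt$, i.e. $\pP\pr\Q\in\Nt$ with $\Q\neq\inact$, and chases it through finitely many applications of \rn{Out}/\rn{In} down the derivation until $\pr$ appears as the explicit player of the current node; regularity guarantees this happens after finitely many steps along that branch. The inclusion $\plays\G\subseteq\plays\Nt$ is a genuine induction on the inductive definition of $\plays\G$ (the least set closed under the three equations), where at each inductive step we invoke the matching typing rule and its side condition to relate $\plays{\G_i}$ to $\plays{\Nt'}$, and then note $\plays{\Nt'}\subseteq\plays\Nt$. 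The only mildly delicate point — the reason the lemma is not completely trivial — is the interaction between the coinductive (infinitary) nature of the typing derivation and the inductive (finitary) nature of $\plays{\cdot}$; regularity of global types and networks is what reconciles the two, ensuring every player is ``witnessed'' at finite depth. Apart from that, the argument is the routine bookkeeping the authors describe as ``straightforward.''
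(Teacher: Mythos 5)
Your proposal is correct, and its core (the second paragraph) is exactly the ``straightforward'' argument the paper has in mind -- the paper itself gives no proof, stating only that the proof is straightforward. The point worth stressing is that your second paragraph is already a \emph{complete} proof: for each of the three rules, the side condition $\plays{\G_i}\setminus\set\pp=\plays{\Nt'}$ (for all $i\in I$, with $I\neq\emptyset$) together with $\plays{\pP\pp{\bp\q{i}{I}\la\PP}\parN\Nt'}=\set\pp\cup\plays{\Nt'}$ immediately yields $\set\pp\cup\bigcup_{i\in I}\plays{\G_i}=\set\pp\cup\plays{\Nt'}$, i.e.\ $\plays\G=\plays\Nt$, by elementary set manipulation at the \emph{top} rule application alone. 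No recursion into the (possibly infinite) derivation is needed, because the side conditions are hypotheses of the last rule applied, not facts to be re-derived.

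Your third paragraph, by contrast, introduces machinery that is both unnecessary and, in one respect, misleading. For the inclusion $\plays\Nt\subseteq\plays\G$ you propose to ``chase'' a participant $\pr$ down the derivation until it appears as an explicit player, claiming that \emph{regularity} guarantees this happens at finite depth. That is not what makes the lemma true: regularity alone would not prevent a player of $\Nt$ from never being witnessed in $\G$ -- the paper's own example with $\G=\Seq{\CommAs\pp{\la}\q}\G$ and $\Nt=\pP\pp{\Seq{\q!\la}\PP}\parN\pP\q\Q$ shows exactly this failure, and it is precisely the side condition that rules it out. The one-step argument $\pr\in\plays{\Nt'}=\plays{\G_i}\setminus\set\pp\subseteq\plays\G$ is what you should keep; the chasing argument should be dropped. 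With that trimming, your proof coincides with the intended one.
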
 
\noindent
This lemma 
forbids for example to derive $\tyn\G{\pP\pp\PP\parN\pP\q\Q}$ with $\G=\Seq{\CommAs\pp{\la}\q}\G$ and $\PP=\Seq{\q!\la}\PP$ and $\Nt=\pP\q\Q$ and $\Q$ arbitrary, 
 since $\q$ is not a player of $\G$. 
Anyway the given type system is wild since it allows us to type  networks with non-matching inputs and outputs, as shown in \refToExample{mp}.  This motivates the well-formedness conditions on global types we will discuss in \refToSection{sec:tt}. 
\begin{exa}\label{mp} 
We can derive $\tyn{\pp\q!\la;\pp\q?\la'}{\pP\pp{\q!\la}\parN\pP\q{\pp?\la'}}$ and this network in parallel with the empty queue evolves to $\pP\q{\pp?\la'}\parN\mq\pp{\la}\q$ which is deadlocked. 
\end{exa}
\begin{exa}\label{ex:vs}
The typing of the running example discussed in the Introduction is the content of \refToFig{exvd}.  
Notice that the premises of the  
application of Rule \rn{In}  with conclusion $\tyn{\G_1}{\pP\cl{\PP_1}\parN\pP\s\Srv}$  are two occurrences of the derivation $\DD$, since the process  
$\Srv$ is 
$\pp?\{\nD;\Srv_1,\pR;\Srv_1\}$. 
\end{exa}

\begin{figure} 
\prooftree
\hspace{-5em}\DD\qquad\qquad\qquad\DD
\Justifies
\prooftree
 \tyn{\G_1}{\pP\cl{\PP_1}\parN\pP\s\Srv}
\Justifies
 \tyn{\G}{\pP\cl\PP\parN\pP\s\Srv}
 \using \rn{Out}
\endprooftree
 \using \rn{In}
\endprooftree

\bigskip

where $\DD$ is the following derivation

\bigskip
\prooftree
\prooftree
 \tyn{\G}{\pP\cl\PP\parN\pP\s\Srv}
\Justifies
 \tyn{\s\cl?\ok;\G}{\pP\cl{\PP_1}\parN\pP\s\Srv}
  \using \rn{In}
\endprooftree\quad
\prooftree
 \tyn{\G}{\pP\cl\PP\parN\pP\s\Srv}
\Justifies
\prooftree
 \tyn{\cl\s!\lql;\G}{\pP\cl{\s!\lql;\PP}\parN\pP\s\Srv}
\Justifies
 \tyn{\s\cl?\ko;\cl\s!\lql;\G}{\pP\cl{\PP_1}\parN\pP\s\Srv}
  \using \rn{In}
\endprooftree
 \using \rn{Out}
\endprooftree
\Justifies
 \tyn{\G_2}{\pP\cl{\PP_1}\parN\pP\s{\Srv_1}}
  \using \rn{Out}
\endprooftree 
\caption{A typing of the hospital network $\pP{\pp}{\PP} \parN  \pP{\ps}{\Srv}$, where $\PP=\ps!\nD;\PP_1$, $\PP_1=\ps?\{\ok;\PP,\ko;\ps!\pR;\PP\}$, $\Srv=\pp?\{\nD;\Srv_1,\pR;\Srv_1\}$, $\Srv_1=\pp!\{\ok;\Srv,\ko;\Srv\}$ with the global type 
$\G=\Seq{\CommAs{\pp}{\nD}{\ps}}{\G_1}$, where
$\G_1=\agtIS{\pp}{\ps}{\set{\Seq{\nD}{\G_2}\, ,\ \Seq{\pR}{\G_2}}}$ and 
$\G_2=\agtSOS{\ps}{\pp}{\set{\Seq{\ok}{\Seq{\CommAsI{\ps}{\ok}{\pp}}{\G}}\, ,\ \Seq{\ko}{\Seq{\CommAsI{\ps}{\ko}{\pp}}{\Seq{\CommAs{\pp}{\pR}{\ps}}{\G}}}}}$. }\label{fig:exvd}
\end{figure} 

 The regularity of processes and global types ensures the decidability of type checking. 

\medskip

We now show  that our type system allows us to type \emph{any} network, thus justifying the name ``wild''. 
We start by defining  the map $\sf gt$ which associates a pair, made of a network and a list of participants, with a global type. 
We denote by $\epsilon$ the empty list, by $\cdot$ the concatenation  of lists and by $\preceq$ the lexicographic order on labels.
 We use $\pi$ to range over lists of participants and write $\lpart\pi$ for the set of participants occurring in $\pi$. 
 Essentially $\gtNet\Nt\pi$ is the global type whose first communication is done by the first participant $\pp$ of $\pi$ having a process different from $\inact$. If $\pp$ does an output, then the global type is an output with the same set of labels.  If $\pp$ does an input, then the global type is an input with only one label. For inputs 
the global type could have as set of labels 
any subset of the set of labels in the process; our choice produces simpler global types, which make examples more readable.

\begin{defi}\label{gtN}
Let $\Nt$ be a network and $\pi$ be a list of participants.   
The global type {\em $\gtNet\Nt\pi$ associated with $\Nt$ and $\pi$} is corecursively defined by the following equations: 
\begin{align*} 
\gtNet{\Nt}{\epsilon} &= \End \\ 
\gtNet{\Nt}{\pp\cdot\pi} &= \begin{cases} 
\gtNet{\Nt'}{\pi} & \text{if } \Nt\equiv\pP\pp\inact\parN\Nt'\\ 
\agtOg\pp\q{i\in I}{\la_i}{\gtNet{\pP\pp{\PP_i}\parN\Nt'}{\pi\cdot\pp}} & \text{if } \Nt\equiv\pP\pp{\oup\q{i}{I}\la\PP}\parN\Nt'\\ 
 \Seq{\CommAsI{\q}{\la_k}{\pp}}{\gtNet{\pP\pp{\PP_k}\parN\Nt'}{\pi\cdot\pp}}  & \text{if } \Nt\equiv\pP\pp{\inp\q{i}{I}\la\PP}\parN\Nt' \\
&\text{and }k\in I\text{ and } \la_k\preceq\la_i ~\forall i\in I 
\end{cases} 
\end{align*} 
\end{defi}

Notice that $\gtNet\Nt\pi$ is well defined as equations are \emph{productive}, i.e., they always eventually unfold a 
constructor.\footnote{This can be easily checked as in the only non-productive clause   (the first alternative in the second equation)   the list of participants decreases.}

\begin{exa}\label{ex:gyN}
Let $\Nt$ be the hospital network, i.e. $\Nt\equiv\pP{\pp}{\PP} \parN  \pP{\ps}{\Srv}$, where $\PP=\ps!\nD;\PP_1$, $\PP_1=\ps?\{\ok;\PP,\ko;\ps!\pR;\PP\}$, $\Srv=\pp?\{\nD;\Srv_1,\pR;\Srv_1\}$, $\Srv_1=\pp!\{\ok;\Srv,\ko;\Srv\}$. Starting from the participant list $\pp\cdot\ps$ and letting $\Nt'=\pP{\pp}{\PP_1} \parN  \pP{\ps}{\Srv}$ we get  
\[
\begin{array}{lcl}
\gtNet\Nt{\pp\cdot\ps}&= & \Seq{\CommAs{\pp}{\nD}{\ps}}{\agtIS{\pp}{\ps}\Seq{\nD}{\Seq{\CommAsI{\ps}{\ko}{\pp}}{\agtSOS{\ps}{\pp}{\set{\Seq{\ok}{\Seq{\CommAs{\pp}{\pR}{\ps}}{\gtNet\Nt{\ps\cdot\pp}}}\, ,\ \Seq{\ko}{\Seq{\CommAs{\pp}{\pR}{\ps}}{\gtNet\Nt{\ps\cdot\pp}}}}}}}}\\
\gtNet\Nt{\ps\cdot\pp}&=& \Seq{\CommAsI{\pp}{\nD}{\ps}}{\Seq{\CommAs{\pp}{\nD}{\ps}}{\agtSOS{\ps}{\pp}{\set{\Seq{\ok}{\gtNet{\Nt'}{\pp\cdot\ps}}\, ,\ \Seq{\ko}{\gtNet{\Nt'}{\pp\cdot\ps}}}}}}\\
\gtNet{\Nt'}{\pp\cdot\ps} &=& \Seq{\CommAsI{\ps}{\ko}{\pp}}
{\Seq{\CommAsI{\pp}{\nD}{\ps}}
  {\Seq{\CommAs{\pp}{\pR}{\ps}}
    {\agtSOS{\ps}{\pp}{\set{\Seq{\ok}{\gtNet\Nt{\pp\cdot\ps}}\, ,\ \Seq{\ko}{\gtNet\Nt{\pp\cdot\ps}}}}}
}
}
\end{array} 
\] 
Starting from the participant list $\pi=\pp\cdot\ps\cdot\pp\cdot\pp\cdot\ps$ we get 
\[\gtNet\Nt{\pi}=\Seq{\CommAs{\pp}{\nD}{\ps}}{\agtIS{\pp}{\ps}\Seq{\nD}{\Seq{\CommAsI{\ps}{\ko}{\pp}}{\Seq{\CommAs{\pp}{\pR}{\ps}}{\agtSOS{\ps}{\pp}{\set{\Seq{\ok}{\gtNet\Nt{\pi}}\, ,\ \Seq{\ko}{\gtNet\Nt{\pi}}}}}}}}\] 
These examples show that we can obtain different global types for the same network starting from different lists of participants,  i.e. we can get $\gtNet\Nt\pi\neq\gtNet\Nt{\pi'}$ when $\pi\neq\pi'$.  
Notice that there is no $\pi$ such that  $\Nt$ is the network and $\gtNet\Nt\pi$ is the global type of \refToFigure{fig:exvd}, since in the two branches of the derivation $\DD$ the participants $\pp$ and $\ps$ alternate in different ways.
\end{exa}

Since processes are regular terms,  they have finite sets of subterms. Therefore  given a network $\Nt$ 
the set of networks $\Nt'$ such that, if $\pP{\pp}{\PP'}\in\Nt'$, then $\pP{\pp}{\PP}\in\Nt$ and
$\PP'$ is a subterm of $\PP$ is finite. Moreover the number of rotations of a list is the length of the list. 
Therefore $\gtNet\Nt\pi$ is a regular term, since it is generated by a finite number of equations. 

The following lemma shows why we use the list $\pi$ to build the global type $\gtNet\Nt\pi$. 
Namely, this lemma ensures that, when $\plays\Nt\subseteq\lpart\pi$, each player of the network $\Nt$ is also a player  of $\gtNet\Nt\pi$.

\begin{lem}\label{lem:gtNet}
If $\plays{\Nt}\subseteq\lpart{\pi}$, then $\plays{\Nt}=\plays{\gtNet{\Nt}{\pi}}$.
\end{lem}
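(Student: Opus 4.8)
The plan is to prove the two inclusions $\plays{\Nt} \subseteq \plays{\gtNet{\Nt}{\pi}}$ and $\plays{\gtNet{\Nt}{\pi}} \subseteq \plays{\Nt}$ separately. The second inclusion is the easy direction: by inspection of \refToDef{gtN}, every constructor emitted in the definition of $\gtNet{\Nt}{\pi}$ introduces as its $\play{\cdot}$ a participant $\pp$ that has a non-$\inact$ process in the current network, and the networks reached in the recursive calls only ever shrink the processes of participants already in $\Nt$ (in the output/input clauses) or drop a participant entirely (in the $\inact$ clause). Hence every player arising in $\gtNet{\Nt}{\pi}$ is a player of $\Nt$; this is a routine coinductive argument (or an induction on the generation of the regular term $\gtNet{\Nt}{\pi}$), and it does not even need the hypothesis $\plays{\Nt}\subseteq\lpart{\pi}$.

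The first inclusion is where the hypothesis is used and where the real work lies. Fix a participant $\pp \in \plays{\Nt}$, so $\pP{\pp}{\PP} \in \Nt$ with $\PP \neq \inact$. I want to show $\pp$ occurs as a player somewhere in $\gtNet{\Nt}{\pi}$. The key observation is that the recursive clauses of \refToDef{gtN} consume $\pi$ from the front, but the output and input clauses re-append the just-processed participant at the \emph{end} of the list ($\pi \cdot \pp$), so along any branch of the corecursion every participant that still has work to do keeps cycling back to the head of the list. Concretely, I would argue: since $\plays{\Nt} \subseteq \lpart{\pi}$, the list $\pi$ contains some occurrence of $\pp$; let $\pi = \pi_0 \cdot \pp \cdot \pi_1$ with $\pp \notin \lpart{\pi_0}$. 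Walking through the clauses for the prefix $\pi_0$: each participant $\q$ in $\pi_0$ is either already $\inact$ (it is dropped, and since $\pp \neq \q$ the component $\pP{\pp}{\PP}$ survives), or it performs an output/input, emitting a constructor with player $\q \neq \pp$ and recursing on a network that still contains $\pP{\pp}{\PP'}$ for some $\PP'$ which is $\PP$ or a subterm of it — in particular still non-$\inact$ on that branch, or possibly $\inact$, but in the latter case $\pp$ was already ``used'' earlier. The cleanest way to make this rigorous is to strengthen the statement to an auxiliary claim by induction on the length of the shortest prefix of $\pi$ whose head (after skipping dropped $\inact$-components) is a participant with a non-$\inact$ process: that head participant is emitted immediately as a player of $\gtNet{\Nt}{\pi}$, and for the remaining players one appeals to the fact that they reappear in the tail list together with a network in which their process is unchanged.

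I expect the main obstacle to be handling the interaction between the $\inact$-dropping clause and the bookkeeping of $\pi$: when we skip over $\inact$ components we shorten the network and the list head advances without emitting anything, and one must be careful that a participant $\pp$ with $\PP \neq \inact$ is never ``lost'' this way — it is not, precisely because the $\inact$ clause only removes components that are syntactically $\inact$. A secondary subtlety is the corecursive/regularity aspect: because $\gtNet{\Nt}{\pi}$ is an infinite regular term, ``$\pp$ occurs as a player'' must be read as: $\pp$ belongs to the least set $\plays{\gtNet{\Nt}{\pi}}$ closed under the equations in \refToDef{def:GlobalTypesAs}, so the argument for the first inclusion is genuinely an induction (on how far down the list we must travel before $\pp$'s first emission), not a coinduction. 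Combining both inclusions gives the equality $\plays{\Nt} = \plays{\gtNet{\Nt}{\pi}}$.
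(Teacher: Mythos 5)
Your proposal is correct and follows essentially the same route as the paper: the inclusion $\plays{\gtNet{\Nt}{\pi}}\subseteq\plays{\Nt}$ is read off \refToDef{gtN}, and the reverse inclusion is an induction on the position $\pst\pi\pp$ of the first occurrence of $\pp$ in $\pi$ --- exactly your ``how far down the list we must travel before $\pp$'s first emission'', where an $\inact$ head is dropped and an active head $\q\neq\pp$ emits a constructor and is re-appended at the tail, so $\pp$'s position strictly decreases in every branch. The only inaccuracy is your parenthetical worry that $\pp$'s process might become $\inact$ while walking the prefix $\pi_0$: it cannot, since each step rewrites only the \emph{head} participant's process, so $\pP\pp\PP$ is untouched until $\pp$ reaches the head, which is precisely what makes the induction go through.
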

\begin{proof}
The inclusion $\plays{\gtNet{\Nt}{\pi}}\subseteq\plays{\Nt}$ easily follows from  \refToDef{gtN}.  
In fact, only a player of $\Nt$ which occurs as first element of $\pi$ becomes a player of $\gtNet{\Nt}{\pi}$ and then ${\sf gt}$ is applied to networks $\Nt'$ such that $\plays{\Nt'}\subseteq\plays{\Nt}$.

To show the reverse implication, i.e. $\plays{\Nt}\subseteq\plays{\gtNet{\Nt}{\pi}}$,
we define the position of participant $\pp$ in the list $\pi$, notation $\pst\pi\pp$, by:
\[\pst{\q\cdot\pi'}\pp=\begin{cases}
 1     & \text{if }\pp=\q \\
1+ \pst{\pi'}\pp     & \text{otherwise}
\end{cases}\]
Let $\pp$ be a player of $\Nt$.  
The condition $\plays{\Nt}\subseteq\lpart{\pi}$ implies $n=\pst\pi\pp\geq 1$. 
We prove by induction on $n$ that $\pp \in \plays{\gtNet\Nt\pi}$. 
If $n=1$, then $\pi=\pp\cdot\pi'$ and, since $\pp$ is a player of $\Nt$, we have
$\Nt\equiv\pP\pp{\bp\q{i}{I}\la\PP}\parN\Nt'$. Therefore
$\gtNet{\Nt}{\pi} $ is either $\agtOg\pp\q{i\in I}{\la_i}{\G_i}$ or 
$\Seq{\CommAsI{\q}{\la_k}{\pp}}{\G_k}$, where $k\in I$ and $\la_k\preceq\la_i$ for all $i\in I$. 
Hence $\pp \in \plays{\gtNet\Nt\pi}$. 
If $n>1$, then $\pi=\q\cdot\pi'$ with $\pp\neq\q$ and $\pst\pi\pp = 1 + \pst{\pi'}{\pp}$. 
We distinguish two cases. 
\begin{itemize} 
\item If $\q \notin\plays\Nt$, that is, $\Nt \equiv \pP\q\inact \parN\Nt'$, then 
$\gtNet{\Nt}{\pi} = \gtNet{\Nt'}{\pi'} $. 
Since $\pst{\pi'}{\pp} = n-1$, by induction hypothesis, we get $\pp \in \plays{\gtNet{\Nt'}{\pi'}} = \plays{\gtNet\Nt\pi}$, as needed. 
\item Otherwise, $\Nt \equiv \pP\q{\bp\pr{i}{I}\la\Q} \parN\Nt'$ and so 
$\gtNet{\Nt}{\pi} $ is either $\agtOg\q\pr{i\in I}{\la_i}{\G_i}$ or 
$\Seq{\CommAsI{\pr}{\la_k}{\q}}{\G_k}$,  with $k\in I$ and $\la_k\preceq\la_i$ for all $i\in I$, 
 where $ \G_i=\gtNet{\pP\q{\Q_i}\parN\Nt'}{\pi'\cdot\q}$ for all $i\in I$. 
Note that, by definition, we have $\pst{\pi'\cdot\q}\pp = \pst{\pi'}\pp = n-1$, then by induction  hypothesis, we get $\pp\in\plays{\G_i}$ for all $i\in I$, hence $\pp \in \plays{\gtNet\Nt\pp}$. \qedhere
\end{itemize}
\end{proof} 

\begin{thm}\label{th:gtNet}
Let  $\Nt=\pP{\pp_1}{\PP_1}\parN\cdots\parN\pP{\pp_n}{\PP_n}$, then $ \tyn{\gtNet{\Nt}{\pp_1\cdots\pp_n}}{\Nt}$.
\end{thm}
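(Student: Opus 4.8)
The plan is to prove a stronger, more general statement by coinduction and then specialise. Concretely, I would show: for every network $\Nt$ and every list $\pi$ with $\plays\Nt\subseteq\lpart\pi$, the judgment $\tyn{\gtNet\Nt\pi}{\Nt}$ holds. Since $\tyn{\cdot}{\cdot}$ is defined coinductively (the double line in \refToFig{cntr}), the natural proof principle is coinduction: I would exhibit the set
\[
\mathcal{R} = \set{(\gtNet\Nt\pi,\Nt) \mid \Nt\text{ a network},\ \pi\text{ a list with }\plays\Nt\subseteq\lpart\pi}
\]
and check that it is backward closed under the typing rules, i.e. that every pair in $\mathcal{R}$ is the conclusion of an instance of \rn{End}, \rn{Out} or \rn{In} whose premises are again in $\mathcal{R}$. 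The theorem then follows by taking $\pi=\pp_1\cdots\pp_n$, for which $\plays\Nt\subseteq\lpart\pi$ trivially holds.

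The heart of the argument is a case analysis following \refToDef{gtN}. Given $(\gtNet\Nt\pi,\Nt)\in\mathcal{R}$, unfold $\gtNet\Nt\pi$ once. If $\pi=\epsilon$ then $\gtNet\Nt\epsilon=\End$; since $\plays\Nt\subseteq\lpart\epsilon=\emptyset$ we get $\Nt\equiv\pP\pp\inact$, so this is an instance of \rn{End}. If $\pi=\pp\cdot\pi'$ and $\Nt\equiv\pP\pp\inact\parN\Nt'$, then $\gtNet\Nt\pi=\gtNet{\Nt'}{\pi'}$ and $\Nt\equiv\Nt'$ (modulo $\equiv$, removing $\pP\pp\inact$); one checks $\plays{\Nt'}\subseteq\lpart{\pi'}$ — here I must be slightly careful, because dropping $\pp$ from $\pi$ could in principle remove a player, but precisely because $\pP\pp\inact$ is in $\Nt$, $\pp$ is not a player of $\Nt$, so $\plays{\Nt'}=\plays\Nt\subseteq\lpart\pi\setminus$(possibly $\pp$), and since $\pp\notin\plays{\Nt'}$ this is still contained in $\lpart{\pi'}$. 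So the pair is literally the same pair, already in $\mathcal{R}$ (this "stuttering" step is harmless for coinduction, though if one wants to be pedantic one can note $\gtNet\Nt\pi$ is productive so only finitely many such steps occur before a constructor appears). If $\pi=\pp\cdot\pi'$ and $\Nt\equiv\pP\pp{\oup\q i I\la\PP}\parN\Nt'$, then $\gtNet\Nt\pi=\agtOg\pp\q{i\in I}{\la_i}{\gtNet{\pP\pp{\PP_i}\parN\Nt'}{\pi'\cdot\pp}}$, which matches the conclusion of \rn{Out} with premises $\tyn{\gtNet{\pP\pp{\PP_i}\parN\Nt'}{\pi'\cdot\pp}}{\pP\pp{\PP_i}\parN\Nt'}$; each such premise is in $\mathcal{R}$ once we verify $\plays{\pP\pp{\PP_i}\parN\Nt'}\subseteq\lpart{\pi'\cdot\pp}$, which holds because $\plays{\pP\pp{\PP_i}\parN\Nt'}\subseteq\plays\Nt\cup\set\pp\subseteq\lpart\pi\cup\set\pp=\lpart{\pi'\cdot\pp}$. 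The input case is analogous, using \rn{In} with $I=\set k$ a singleton and noting $\set k\subseteq J$ trivially.

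The remaining obligation for \rn{Out} and \rn{In} is the side condition $\plays{\G_i}\setminus\set\pp=\plays\Nt$ for all $i\in I$, and \emph{this is the step I expect to be the main obstacle}, since it is the only place the argument is not a purely mechanical unfolding: it requires knowing the players of the corecursively-defined $\gtNet{\cdot}{\cdot}$. This is exactly what \refToLemma{lem:gtNet} supplies: applied to $\Nt_i=\pP\pp{\PP_i}\parN\Nt'$ and $\pi'\cdot\pp$ (whose part-set contains $\plays{\Nt_i}$ by the inclusion just checked), it gives $\plays{\gtNet{\Nt_i}{\pi'\cdot\pp}}=\plays{\Nt_i}=\plays{\Nt'}\cup\set\pp$ in the output case (and $=\plays{\pP\pp{\PP_k}\parN\Nt'}$ in the input case). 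Since the original network is $\pP\pp{\oup\q i I\la\PP}\parN\Nt'$ with $\pp\notin\plays{\Nt'}$ (networks have distinct participants), we have $\plays\Nt=\set\pp\cup\plays{\Nt'}$, hence $\plays{\G_i}\setminus\set\pp=\plays{\Nt'}=\plays\Nt\setminus\set\pp$. One must double-check the edge case where $\pp\in\plays{\Nt_i}$ anyway (i.e. $\PP_i$ itself mentions communications keeping $\pp$ alive) versus $\PP_i=\inact$; in either case the set identity $\plays{\G_i}\setminus\set\pp=\plays\Nt\setminus\set\pp=\plays\Nt\setminus\set\pp$ holds, and since the rule's side condition is stated with the $\setminus\set\pp$ on the left it goes through uniformly. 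Once this is in place, $\mathcal{R}$ is a valid coinductive invariant and the theorem is immediate.
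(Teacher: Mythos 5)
Your proof is correct and follows essentially the same route as the paper's: the same coinductive invariant $\set{(\gtNet\Nt\pi,\Nt)\mid \plays\Nt\subseteq\lpart\pi}$, the same consistency check against the three typing rules, and the same appeal to \refToLemma{lem:gtNet} to discharge the player side conditions of \rn{Out} and \rn{In}. The only cosmetic difference is that the paper case-splits on the shape of the resulting global type and absorbs the skipping of inactive participants into the decomposition $\pi=\pi'\cdot\pp\cdot\pi''$ with $\lpart{\pi'}\cap\plays\Nt=\emptyset$, whereas you handle it as a stuttering step justified by productivity --- both are adequate.
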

\begin{proof}
The proof is by coinduction, hence  
we show that the set 
\[
\GN = \set{(\G,\Nt) \mid \G=\gtNet{\Nt}{\pi}\text{ for some $\pi$ such that }\plays{\Nt}\subseteq\lpart{\pi} }
\]
is consistent with respect to the rules defining the typing judgement $\tyn{\G}{\Nt}$, i.e.,
any $(\G,\Nt)\in\GN$ is the consequence of one of the typing rules whose premises 
are again in $\GN$.  
Let $(\G,\Nt) \in \GN$ and split cases on the shape of $\G$. 
\begin{itemize}
\item If $\G = \End$, then 
$\End=\gtNet{\Nt}{\pi}$ for some $\pi$ such that $\plays{\Nt}\subseteq\lpart{\pi}$.  By \refToLemma{lem:gtNet} $\plays\Nt=\plays\End$, which implies
$\Nt\equiv\pP{\pp}{\inact}$. 
Rule \rn{End} derives $\tyn\End{\pP\pp\inact}$ and has no premises. 
\item If $\G=\agtOg\pp\q{i\in I}{\la_i}{\G_i}$, then, by definition of $\GN$, we have 
$\agtOg\pp\q{i\in I}{\la_i}{\G_i} = \gtNet{\Nt}{\pi}$  for some $\pi$ such that $\plays{\Nt}\subseteq\lpart{\pi}$. 
By \refToDef{gtN} $\gtNet{\Nt}{\pi}=\agtOg\pp\q{i\in I}{\la_i}{\G_i}$ implies $\pi=\pi'\cdot\pp\cdot\pi''$ with $\lpart{\pi'}\cap\plays{\Nt}=\emptyset$
and 
$\Nt\equiv\pP\pp{\oup\q{i}{I}\la\PP}\parN\Nt'$ and
$\G_i=\gtNet{\pP\pp{\PP_i}\parN\Nt'}{\pi''\cdot\pp}$ for all $i\in I$. 
Therefore  $(\G_i,\pP\pp{\PP_i}\parN\Nt')\in\GN$ for all $i\in I$, since $\lpart{\pp\cdot\pi''} = \lpart{\pi''\cdot\pp}$. 
Moreover, from 
\refToLemma{lem:gtNet}, $\plays{\G_i}=\plays{\pP\pp{\PP_i}\parN\Nt'}$ for all $i\in I$. 
So $\plays{\G_i}\setminus\set{\pp} = \plays{\Nt'}$ and 
Rule \rn{Out} can be used to derive
$\tyn{\agtOg\pp\q{i\in I}{\la_i}{\G_i}}{\Nt}$ from premises in $\GN$.
\item 
If $\G$  is an input type,   then, by definition of $\GN$, we get $\G=\Seq{\CommAsI{\q}{\la}{\pp}}{\G'}$
for some $\la$ and $\G'$ and 
$\Seq{\CommAsI{\q}{\la}{\pp}}{\G'}= \gtNet{\Nt}{\pi}$  for some $\pi$ such that $\plays{\Nt}\subseteq\lpart{\pi}$. 
By \refToDef{gtN} $\gtNet{\Nt}{\pi}=\Seq{\CommAsI{\q}{\la}{\pp}}{\G'}$ implies $\pi=\pi'\cdot\pp\cdot\pi''$ with $\lpart{\pi'}\cap\plays{\Nt}=\emptyset$
and 
$\Nt\equiv\pP\pp{\inp\q{i}{I}\la\PP}\parN\Nt'$ and $\la=\la_k$ and $\G'=\G_k=\gtNet{\pP\pp{\PP_k}\parN\Nt'}{\pi''\cdot\pp}$
for some $k\in I$  such that $\la_k\preceq\la_i$ for all $i\in I$.  Therefore  $(\G',\pP\pp{\PP_k}\parN\Nt')\in\GN$, since $\lpart{\pp\cdot\pi''} = \lpart{\pi''\cdot\pp}$. 
Moreover, from 
\refToLemma{lem:gtNet}, $\plays{\G'}=\plays{\pP\pp{\PP_k}\parN\Nt'}$. 
So $\plays{\G'}\setminus\set{\pp} = \plays{\Nt'}$ and 
Rule \rn{In} can be used to derive
$\tyn{\Seq{\CommAsI{\q}{\la}{\pp}}{\G'}}{\Nt}$ from premises in $\GN$. 
\end{itemize} 
Then, the thesis follows, since 
$(\gtNet\Nt{\pp_1\cdots\pp_n},\Nt)\in\GN$ for $\Nt \equiv \pP{\pp_1}{\PP_1}\parN\cdots\parN\pP{\pp_n}{\PP_n}$. 
\end{proof}

\begin{figure}[h] 
\begin{math} 
\begin{array}{c}
\NamedRule{\rulename{Top-Out}}
 { } 
 {\agtO{\pp}{\q}i I{\la}{\G}\parG\Msg \stackred{\CommAs\pp{\la_h}\q}\G_h\parG\addMsg\Msg{\mq\pp{\la_h}\q}}
 {h\in I}
\\[4ex]
\NamedRule{\rulename{Top-In}}
 { } 
 {\agtI \pp\q \la \G\parG\addMsg{\mq\pp{\la_h}\q}\Msg \stackred{\CommAsI\pp{\la_h}\q}\G_h\parG\Msg}
 {h\in I}
\\[4ex]
\NamedRule{\rulename{Inside-Out}}
 {\G_i\parG \Msg\cdot\mq\pp{\la_i}\q \stackred\asCom\G'_i \parG \Msg'\cdot\mq\pp{\la_i}\q \quad \forall i \in I} 
 {\agtO{\pp}{\q}i I{\la}{\G}\parG\Msg \stackred \asCom\agtO{\pp}{\q}i I{\la}{\G'}\parG\Msg' }  
 {\pp\ne\play{\asCom} }
\\[4ex]
\NamedRule{\rulename{Inside-In}}
 {\G_i\parG\Msg\stackred\asCom\G_i'\parG\Msg'\quad \forall i \in I}
 {\agtI \pp\q \la \G\parG \mq\pp{\la_h}\q\cdot\Msg \stackred\asCom\agtI \pp\q \la {\G'}\parG  \mq\pp{\la_h}\q\cdot\Msg' }  
 {\q\ne\play{\asCom} ~~ h\in I}
\end{array}
\end{math} 
\caption{LTS for type configurations. }\label{fig:ltsgtAs}
\end{figure}

Global types provide an overall description of asynchronous multiparty protocols and their semantics is specified by an LTS on 
 types  in parallel with queues, dubbed \emph{type configurations}. 
Reduction rules are reported in 
\refToFig{ltsgtAs}. 
The first two rules 
 reduce outputs and inputs at top level in the standard way. 
The remaining two rules 
reduce  inside output and input choices.  These rules  are needed to enable interleaving between independent communications despite the sequential structure of global types. 
For example, we want to allow \linebreak $\Seq{\CommAs\pp{\la}\q}{\CommAs\pr{\la'}\ps}\parG\emptyset\stackred{\CommAs\pr{\la'}\ps}{\CommAs\pp{\la}\q}\parG\mq\pr{\la'}\ps$ when $\pp\neq\pr$, because, intuitively, actions performed by different players should be independent. 
This justifies the conditions $\pp\ne\play{\asCom}$ and $\q\ne\play{\asCom}$
 in Rules \rulename{Inside-Out}  and \rulename{Inside-In}, respectively. 
The requirements on the shapes of queues in these rules are more interesting. 
They enforce that inputs and outputs on the same channel, namely, the same ordered pair of participants, happen in the order prescribed by the global type, that is, an input cannot consume a message produced by a subsequent output. 
Indeed,  if we would take the following version of Rule \rulename{Inside-In} 
\[\NamedRule{}
 {\G_i\parG\Msg\stackred\asCom\G_i'\parG\Msg'\quad \forall i \in I}
 {\agtI \pp\q \la \G\parG \Msg \stackred\asCom\agtI \pp\q \la {\G'}\parG  \Msg' }  
 {\q\ne\play{\asCom} }\]
we would get  $\Seq{\CommAsI\pp{\la}\q}{\CommAs\pp{\la}\q}\parG\emptyset\stackred{\CommAs\pp{\la}\q}{\CommAsI\pp{\la}\q}\parN\mq\pp{\la}\q\stackred{\CommAsI\pp{\la}\q}\End\parG\emptyset$. 
This reduction breaks the order on the channel $\pp\q$ prescribed by  the global type 
$\Seq{\CommAsI\pp{\la}\q}{\CommAs\pp{\la}\q}$, 
which is to do first an input and then the corresponding output.
In our LTS $\Seq{\CommAsI\pp{\la}\q}{\CommAs\pp{\la}\q}\parG\emptyset$ is instead  stuck, as it should.  
In fact,  
the shape of the queue in Rule \rulename{Inside-In} ensures that $\co$ is not the matching output for any  input in the choice.  Similarly,  
the shapes of queues in Rule \rulename{Inside-Out}  ensures that $\co$ is not the matching input for any output in the choice.

A last remark is that the LTS rules are inductive.  Therefore, the premises of Rules \rulename{Inside-Out}  and \rulename{Inside-In} oblige $\play{\asCom}$ to occur in all $\G_i$ for $i\in I$.  
This condition is in line  with the boundedness of global types, see \refToDef{def:bound}.

 We are now ready to state and prove Session Fidelity, showing that a   session can correctly perform the protocol described by the global type derivable for its network.  
The  reverse property,  i.e. that the sessions can only do inputs and outputs allowed by the global types of their networks, will be proved for ``tamed'' global types in the Subject Reduction Theorem, \refToTheorem{thm:srA}.

\begin{thm}[Session Fidelity]\label{thm:sfA}
If $\tyn{\G}\Nt$ and 
$\G\parG\Msg\stackred\co\G'\parN\Msg'$, then $\Nt\parN\Msg\stackred{\co}\Nt'\parN\Msg'$ and  $\tyn{\G'}{\Nt'}$. 
\end{thm}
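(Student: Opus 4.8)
The plan is to proceed by induction on the derivation of the transition $\G\parG\Msg\stackred\co\G'\parN\Msg'$, i.e.\ on the height of the proof tree in the LTS of Figure~\ref{fig:ltsgtAs}, doing a case analysis on the last rule applied. In each case I must invert the typing judgement $\tyn\G\Nt$: since the typing rules of Figure~\ref{fig:cntr} are syntax-directed on the shape of $\G$, the shape of $\G$ dictated by the LTS rule pins down which typing rule concluded $\tyn\G\Nt$, hence the shape of $\Nt$. Then I build the matching session transition using Rules \rulename{Send}/\rulename{Rcv} of Figure~\ref{fig:asynprocLTS} and re-assemble a typing derivation for the result. Throughout I will use \refToLemma{ep} to keep the player-set side conditions $\plays{\G_i}\setminus\set\pp=\plays\Nt$ satisfied, and I will need to be slightly careful because typing is coinductive, so "re-assembling a typing derivation" really means exhibiting the relevant instance of a typing rule whose premises are again valid typing judgements — which they are, being subderivations of the original coinductive derivation.

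The four cases are as follows. \emph{Case \rulename{Top-Out}:} here $\G=\agtO{\pp}{\q}i I{\la}{\G}$ and $\co=\CommAs\pp{\la_h}\q$ with $h\in I$, and $\G'=\G_h$, $\Msg'=\addMsg\Msg{\mq\pp{\la_h}\q}$. Inverting \rn{Out} gives $\Nt\equiv\pP\pp{\oup\q i I\la\PP}\parN\Nt_0$ with $\tyn{\G_i}{\pP\pp{\PP_i}\parN\Nt_0}$ and $\plays{\G_i}\setminus\set\pp=\plays{\Nt_0}$ for all $i\in I$. Rule \rulename{Send} then fires on this network, producing $\Nt'\equiv\pP\pp{\PP_h}\parN\Nt_0$ with queue $\addMsg\Msg{\mq\pp{\la_h}\q}=\Msg'$, and the $h$-th premise of the inverted \rn{Out} is exactly $\tyn{\G'}{\Nt'}$. \emph{Case \rulename{Top-In}:} symmetric, $\G=\agtI\pp\q\la\G$, $\Msg\equiv\addMsg{\mq\pp{\la_h}\q}\Msg'$, invert \rn{In} to get $\Nt\equiv\pP\q{\inp\pp j J\la\PP}\parN\Nt_0$ with $I\subseteq J$, fire \rulename{Rcv} (legal since $h\in I\subseteq J$), and the $h$-th typing premise gives the result; the side condition $I\subseteq J$ is preserved trivially since it only concerned the top constructor.

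\emph{Case \rulename{Inside-Out}:} here $\G=\agtO{\pp}{\q}i I{\la}{\G}$, $\pp\ne\play\co$, and for every $i\in I$ we have a shorter derivation $\G_i\parG\Msg\cdot\mq\pp{\la_i}\q\stackred\co\G'_i\parG\Msg'\cdot\mq\pp{\la_i}\q$. Inverting \rn{Out} gives $\Nt\equiv\pP\pp{\oup\q i I\la\PP}\parN\Nt_0$ with $\tyn{\G_i}{\pP\pp{\PP_i}\parN\Nt_0}$. I want to apply the induction hypothesis to each branch, but the queues in the branch transitions are $\Msg\cdot\mq\pp{\la_i}\q$, not $\Msg$. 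This is the one place that needs thought: I apply the IH with the session $(\pP\pp{\PP_i}\parN\Nt_0)\parG(\Msg\cdot\mq\pp{\la_i}\q)$, obtaining a session transition $(\pP\pp{\PP_i}\parN\Nt_0)\parG(\Msg\cdot\mq\pp{\la_i}\q)\stackred\co\Nt_i'\parG(\Msg'\cdot\mq\pp{\la_i}\q)$ with $\tyn{\G_i'}{\Nt_i'}$. Because $\pp\ne\play\co$, the transition does not touch the $\pp$-component and does not fetch the trailing message $\mq\pp{\la_i}\q$ (for an input it could only consume the \emph{first} message addressed to $\play\co$, and by the queue-equivalence used in the paper the relevant first message is the same regardless of the appended $\mq\pp{\la_i}\q$), so $\Nt_i'\equiv\pP\pp{\PP_i}\parN\Nt_0'$ for a single common $\Nt_0'$ and a single common $\Msg'$ independent of $i$ — here I will argue by a small determinacy/locality observation that the part of the transition not involving $\pp$ is the same across all branches (the branches share the same $\PP_i$-independent sub-network $\Nt_0$ and the only $i$-dependent datum, the appended message, is untouched). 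Then $\Nt\parG\Msg\stackred\co(\pP\pp{\oup\q i I\la\PP}\parN\Nt_0')\parG\Msg'$ by the same \rulename{Send}/\rulename{Rcv} rule applied inside $\Nt_0$, and $\rn{Out}$ re-types it from the premises $\tyn{\G_i'}{\pP\pp{\PP_i}\parN\Nt_0'}$, whose player-set side conditions follow from \refToLemma{ep}. \emph{Case \rulename{Inside-In}:} entirely analogous, with $\G=\agtI\pp\q\la\G$, queue $\mq\pp{\la_h}\q\cdot\Msg$ on the left, $\q\ne\play\co$, and \rn{In}'s $I\subseteq J$ side condition carried along unchanged.

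The main obstacle is precisely the bookkeeping in the \rulename{Inside-Out}/\rulename{Inside-In} cases: reconciling the per-branch IH outputs into a \emph{single} residual network $\Nt_0'$ and \emph{single} residual queue $\Msg'$, i.e.\ showing that the ``context'' part of the step is independent of which branch we look through. This rests on the side condition $\play\co\ne\pp$ (resp.\ $\ne\q$) together with the fact that the branch subnetworks differ only in the $\pp$-component $\PP_i$, which $\co$ cannot act on; the queue-shape requirement in the LTS rules is exactly what guarantees $\co$ cannot consume the $i$-indexed appended message either. Everything else is routine inversion of the syntax-directed typing rules plus an application of \refToLemma{ep} to discharge the player-set premises.
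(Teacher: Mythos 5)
Your proposal is correct and follows essentially the same route as the paper's proof: induction on the LTS derivation, inversion of the syntax-directed typing rules, application of the induction hypothesis to each branch in the \rulename{Inside-Out}/\rulename{Inside-In} cases, the locality observation that $\play\co\ne\pp$ keeps the $\pp$-component and the appended message $\mq\pp{\la_i}\q$ untouched so the residual network and queue are common to all branches, and re-typing via Rule \rn{Out} with \refToLemma{ep} discharging the player-set premises. Your explicit attention to reconciling the per-branch results into a single $\Nt_0'$ and $\Msg'$ is exactly the point the paper handles with its remark that the reductions neither modify $\pp$'s process nor depend on the trailing messages.
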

\begin{proof}
The proof is by induction on the reduction rules.
\begin{description}
\item [\rn{Top-In}] Then $\G=\agtI{\pp}{\q}{\la}{\G}$ and $\Msg\equiv\addMsg{\mq\pp{\la_h}\q}{\Msg'}$ and $\co=\CommAsI\pp{\la_h}\q$ and $\G'=\G_h$ with $h\in I$. Since $\tyn{\G}\Nt$ must be derived using rule \rn{In} we get $\Nt\equiv\pP\pp{\inp{\q}{j}{J}{\la}{\PP}}\parN\Nt_0$ with $I\subseteq J$ and $ \tyn{\G_i}{\pP\pp{\PP_i}\parN\Nt_0}$ for all $i\in I$. We conclude $\Nt\parN\Msg\stackred{\co}\pP\pp{\PP_h}\parN\Nt_0\parN\Msg'$  by Rule \rn{Rcv}  and $ \tyn{\G_h}{\pP\pp{\PP_h}\parN\Nt_0}$.
\item[\rulename{Inside-Out}] Then $\G=\agtO{\pp}{\q}i I{\la}{\G}$ and $\G'=\agtO{\pp}{\q}i I{\la}{\G'}$ and  $\G_i\parG \Msg\cdot\mq\pp{\la_i}\q \stackred\asCom\G'_i \parG \Msg' \cdot\mq\pp{\la_i}\q$ for all $i \in I$ and $\pp\ne\play{\asCom}$. Since $\tyn{\G}\Nt$ must be derived using rule \rn{Out} we get $\Nt\equiv\pP\pp{\oup{\q}{i}{I}{\la}{\PP}}\parN\Nt_0$ and $ \tyn{\G_i}{\pP\pp{\PP_i}\parN\Nt_0}$ for all $i\in I$. By induction $\pP\pp{\PP_i}\parN\Nt_0\parN\Msg\cdot\mq\pp{\la_i}\q\stackred{\co}\Nt_i\parN\Msg'\cdot\mq\pp{\la_i}\q$ and $\tyn{\G'_i}{\Nt_i}$ for all $i\in I$. The condition $\pp\ne\play{\asCom}$ ensures that these reductions do not modify the processes of participant $\pp$, i.e. $\Nt_i\equiv\pP\pp{\PP_i}\parN\Nt_0'$ for all $i\in I$ and some $\Nt'_0$. Moreover these reductions do not depend on the messages $\mq\pp{\la_i}\q$, which are in the queue before and after these reductions. We   get $\Nt\parN\Msg\stackred{\co}\pP\pp{\oup{\q}{i}{I}{\la}{\PP}}\parN\Nt_0'\parN\Msg'$. 
By \refToLemma{ep} $\tyn{\G'_i}{\pP\pp{\PP_i}\parN\Nt_0'}$ implies $\plays{\G'_i}\setminus\set\pp=\plays{\Nt'_0}$ for all $i\in I$. Then we can derive  
 $ \tyn{\G'}{\pP\pp{\oup{\q}{i}{I}{\la}{\PP}}\parN\Nt_0'}$ using Rule \rulename{Out}. 
\end{description}
The proofs for the remaining rules are similar.
\end{proof}

 We extend the   properties of multiparty sessions to type configurations. 
Let us define the \emph{lockstep transition relation}   
$\G\parN\Msg \lockred{\Delta} \G' \parN\Msg'$ by:
\[\G \parallel \Msg \lockred{\Delta} \G' \parallel \Msg' \text{ if } 
\begin{array}[t]{l} 
\Delta = \set{\beta_1,\ldots, \beta_n} \text{ is a maximal coherent set for $\G\parN\Msg$ and } \\[3pt] 
\G \parallel \Msg \stackred{\beta_1}\cdots \stackred{\beta_n} \G' \parallel \Msg'
\end{array}
\]
where $\Delta$ is coherent for $\G\parN\Msg$ if 
\begin{enumerate}
\item for all $\co_1,\co_2\in\Delta$, $\play{\co_1} = \play{\co_2}$ implies $\co_1 = \co_2$, and 
\item for all $\co \in \Delta$, $\G\parN\Msg \stackred{\co}$. 
\end{enumerate}

 A type configuration  $\G\parN\Msg$  is  {\em deadlocked} if $\G\neq\End$ and there is no $\beta$ such that $\G\parN\Msg \stackred{\co}$. 

\begin{defi}[Input-enabling type configuration] 
  A type configuration $\G \parallel \Msg$ is \emph{input-enabling} if
  $\pp \in \plays\G$ implies that, 
  for all complete 
  \[\lockSC = \set {
    {\G_k \parallel \Msg_k} \lockredsub{\Delta_k}{k}\G_{k+1} \parallel
    \Msg_{k+1}}_{k <  x}\]  with  
    $\G_0 \parallel \Msg_0$  
    =
  \mbox{$\G \parallel \Msg$}, there exists $h < x$ such that  $\beta \in\Delta_h$ with $\play\beta=\pp$. 
\end{defi}
\noindent
This definition ensures that no player is stuck, and this coincides with input enabling, since outputs can always 
be done. 

\begin{defi}[Queue-consuming type configuration] 
  A type configuration $\G \parallel
  \Msg\,$ is \emph{queue-consuming} if $\Msg\equiv\addMsg{\mq{\pp}{\la}{\q}}{\Msg'}$
  implies
  that,
  for all  complete 
  \[\lockSC = \set {
    \G_k \parallel \Msg_k \lockredsub{\Delta_k}{k}\G_{k+1} \parallel
    \Msg_{k+1}}_{k < x}\]  with  
    $  \G_0 \parallel \Msg_0   
 = \G \parallel \Msg$, there exists $h < x$ such that
  $\CommAsI{\pp}{\M}{\q} \in\Delta_h$.
\end{defi}


\subsection{Type Inference}
\label{sect:inference}

In this  subsection, we will describe an algorithm to infer global types from networks, proving its soundness and completeness  with respect to the typing system. 
That is, the algorithm   applied to a network $\Nt$  enumerats  all and only those global types which can be derived for $\Nt$.  
Note that, since a network may have more than one global type, to be complete, the algorithm needs to be non-deterministic. 

The first step towards defining an algorithm is the introduction of a finite representation for global types.\footnote{In a similar way we can and should give a finite representation for processes,  but we avoid it to reduce noise.} 
Since global types are regular terms, from results in \cite{Cour83,AdamekMV06}, they can be represented as finite systems of regular syntactic equations 
formally defined below. 
First, a global type pattern is a finite term generated by the following grammar: 
\[
\Gpat ::= \End \mid \agtO\pp\q{i}{I}\la\Gpat \mid \agtII\pp\q{i}{I}\la\Gpat \mid \X 
\]
where $\X$ is a variable taken from a countably infinite set. 
  We   denote by $\vars\Gpat$ the set of variables occurring in $\Gpat$.

A \emph{substitution} $\theta$ is a finite partial map from variables to global types. 
  We   denote 
by $\theta \sbtplus \sigma$ the union of two substitutions such that $\theta(\X) = \sigma(\X)$, for all $\X \in \dom\theta\cap\dom\sigma$, 
and by $\Gpat\theta$ the application of $\theta$ to $\Gpat$.
   We define   $\theta \sbtord \sigma$ if $\dom\theta\subseteq\dom\sigma$ and 
$\theta(\X) = \sigma(\X)$, for all $\X \in \dom\theta$.  
Note that, if $\vars\Gpat\subseteq\dom\theta$, then $\Gpat\theta$ is a global type.

An \emph{equation} has shape $\agteq\X\Gpat$ and a \emph{(regular) system of equations} $\eqsys$ is a finite set of equations such that 
$\agteq\X{\Gpat_1}$ and $\agteq\X{\Gpat_2}\in\eqsys$ imply $\Gpat_1 = \Gpat_2$. 
  We   denote by $\vars\eqsys$ the set $\bigcup \{ \vars\Gpat\cup\{\X\} \mid \agteq\X\Gpat\in\eqsys\}$ and by $\dom\eqsys$ the set $\{\X\mid \agteq\X\Gpat \in \eqsys \}$. 
A \emph{solution} of a system $\eqsys$ is a substitution $\theta$ such that $\vars\eqsys \subseteq \dom\theta$ and, for all $\agteq\X\Gpat\in\eqsys$, $\theta(\X) = \Gpat\theta$ holds. 
  We   denote by $\gsol\eqsys$ the set of all solutions of $\eqsys$ and note that 
$\eqsys_1\subseteq\eqsys_2$ implies $\gsol{\eqsys_2}\subseteq\gsol{\eqsys_1}$. 

The algorithm follows essentially the structure of coSLD resolution of coinductive logic programming \cite{Simon06,SimonBMG07,SimonMBG06,AnconaD15}, namely, the extension of standard SLD resolution capable to deal with regular terms and coinductive predicates. 
A \emph{goal} is a pair   $\pair\Nt\X$   of a network $\Nt$ and a variable $\X$. 
The algorithm takes  as  input a goal $\pair\Nt\X$ and returns a set of equations $\eqsys$ such that the solution for the variable $\X$ in $\eqsys$ is a global type for the network $\Nt$. 
The key idea borrowed from coinductive logic programming is to keep track of already encountered goals to detect cycles, avoiding non-termination. 

The inference judgement has the following shape
$\tyalg{\Goals}{\pair\Nt\X}{\eqsys}$, where 
$\Goals$ is a set of goals, all with different variables  which are all  different from $\X$. 
Rules defining the inference algorithm are reported in \refToFig{tyalg}. 
\begin{figure}
\begin{math}
\begin{array}{c}
\NamedRule{\rn{\infn{End}}}{}{ \tyalg{\Goals}{\pair{\pP\pp\inact}{\X}}{\{\agteq\X\End\}}}{} 
\qquad
\NamedRule{\rn{\infn{Cycle}}}{ }{ \tyalg{\Goals, \pair\Nt\Y}{\pair\Nt\X}{\{\agteq\X\Y\}}}{} 
\\[4.5ex]
\NamedRule{{\rn{\infn{Out}}}}{
  \tyalg{ \Goals' }{\pair{\pP\pp{\PP_i}\parN\Nt}{\Y_i}}{\eqsys_i} \ \ \forall i \in I 
}{ \tyalg{\Goals}{\pair{\pP{\pp}{\PP} \parN \Nt}{\X}}{\eqsys} }
{\begin{array}{l}
 \Goals'=\Goals, \pair{\pP{\pp}{\PP} \parN \Nt}{\X} \\
\PP = \oup\q{i}{I}\la\PP \\ 
\Y_i\   fresh\ \forall i \in I \\
\eqsys = \{\agteq\X{\agtO\pp\q{i}{I}\la\Y}\}\cup\bigcup_{i \in I} \eqsys_i  \\ 
 \plays{\three{\Goals'}{\eqsys_i}{\Y_i}}\setminus\set\pp = \plays{\Nt}\ \forall i \in I    
\end{array}}
\\[6.5ex] 
\NamedRule{\rn{\infn{In}}}{
  \tyalg{\Goals'}{\pair{\pP\pp{\PP_i}\parN\Nt}{\Y_i}}{\eqsys_i} \ \ \forall i \in I 
}{ \tyalg{\Goals}{\pair{\pP{\pp}{\PP} \parN \Nt}{\X}}{\eqsys} }
{\begin{array}{l}
\Goals'=\Goals, \pair{\pP{\pp}{\PP} \parN \Nt}{\X}\\
\PP = \inp\q{j}{J}\la\PP \quad 
\emptyset \ne I\subseteq J \\ 
\Y_i \ fresh\ \forall i\in I  \\
\eqsys = \{\agteq\X{ \agtII\q\pp{i}{I}\la\Y}\} \cup\bigcup_{i\in I} \eqsys_i  \\ 
\plays{\three{\Goals'}{\eqsys_i}{\Y_i}}\setminus\set\pp = \plays{\Nt}\ \forall i \in I 
\end{array}}  
\end{array}
\end{math}
\caption{Rules of the inference algorithm.}
\label{fig:tyalg}
\end{figure}
For a terminated  network  the algorithm returns just one equation $\agteq\X\End$ (Rule \rn{\infn{End}}). 
For other networks, the algorithm selects a participant and analyses its process. 
If the process is an output choice (Rule \rn{\infn{Out}}), the algorithm continues analysing all branches of the output choice. 
If the process is an input choice (Rule \rn{\infn{In}}), the algorithm selects a subset of its branches and  continues analysing them. 
In both cases, subnetworks are analysed adding to the set $\Goals$ the goal in the conclusion of the rule, to be able to subsequently recognise when  a cycle is encountered. 
After having evaluated subnetwork, the algorithm collects all the resulting equations plus another one for the current variable. 
Note that variables for goals in the premises are fresh. 
This is important to ensure that the set of equations $\eqsys$ in the conclusion is indeed a regular system of equations (there is at most one equation for each variable). 
Finally, Rule \rn{\infn{Cycle}} detects cycles: if the network in the current goal appears also in the set $\Goals$ the algorithm can stop, 
returning just one equation unifying the two variables associated with the network. 

In Rules \rn{\infn{Out}} and \rn{\infn{In}}, 
the side condition   $\plays{\three{\Goals'}{\eqsys_i}{\Y_i}}\setminus\set\pp = \plays{\Nt}$ for all $i \in I $ is needed to ensure that the resulting global type associated with the variable $\X$ satisfies the conditions on players required by Rules \rn{Out} and \rn{In} in \refToFigure{fig:cntr}.  
The set $\plays{\three\Goals \eqsys\Gpat}$ is defined as the set of players of a global type, but with the following additional clause to handle variables: 
\[
\plays{\three\Goals\eqsys\X} = \begin{cases}
\plays{\three\Goals\eqsys\Gpat} & \text{if } \agteq\X\Gpat \in \eqsys \\ 
\plays{\Nt} & \text{if } \X\notin\dom\eqsys\text{ and } \pair\Nt\X \in \Goals \\ 
\emptyset  & \text{otherwise} 
\end{cases} 
\]

Let $\eqsys$ be a system of equations and $\Goals$ a set of goals. 
A solution $\theta \in \gsol\eqsys$   {\em agrees} with $\Goals$ if 
$\pair\Nt\X\in\Goals$ implies $\plays{\theta(\X)} = \plays\Nt$ for all $\X\in\vars\eqsys$. 
We   denote by $\gsol[\Goals]{\eqsys}$ the set of all solutions of $\eqsys$ agreeing with $\Goals$. 
We say that a system of equations $\eqsys$ is {\em guarded} if 
  $\agteq\X\Y$ and $ \agteq\Y\Gpat$ in $\eqsys$ imply   that $\Gpat$ is not a variable. 
Finally, $\eqsys$ is \emph{$\Goals$-closed}  if it is guarded and 
$\dom\eqsys\cap\vars\Goals = \emptyset$ and $\vars\eqsys\setminus\dom\eqsys \subseteq\vars\Goals$. 

  \begin{exa}\refToFigure{fig:exia} shows an application of the inference algorithm to the hospital network for getting a set of equations whose solution is the global type derived for this network in \refToFigure{fig:exvd}. We notice that $\II$ and $\II'$ in \refToFigure{fig:exia} only differ for the names of variables, so they become the same derivation $\DD$ in \refToFigure{fig:exvd}. The set of obtained equation is
\[\begin{array}{lll}
\agteq\X {\Seq{\CommAs{\pp}{\nD}{\ps}}{\X_1}&\agteq{\X_1} {\agtIS{\pp}{\ps}{\set{\Seq{\nD}{\X_2}\, ,\, \Seq{\pR}{\X'_2}}}}}\\
\agteq{\X_2} {\agtSOS{\ps}{\pp}{\set{\Seq{\ok}{\X_3}\, ,\,  \Seq{\ko}{\X_5}}}}&\agteq{\X_3}{\Seq{\CommAsI{\ps}{\ok}{\pp}}{\X_4}}&\agteq{\X_4}\X\\\agteq{\X_5}{\Seq{\CommAsI{\ps}{\ko}{\pp}}{\X_6}}&
\agteq{\X_6}{\Seq{\CommAs{\pp}{\pR}{\ps}}{\X_7}}&\agteq{\X_7}\X\\ \agteq{\X'_2} {\agtSOS{\ps}{\pp}{\set{\Seq{\ok}{\X'_3}\, ,\, \Seq{\ko}{\X'_5}}}}&\agteq{\X'_3}{\Seq{\CommAsI{\ps}{\ok}{\pp}}{\X'_4}}&\agteq{\X'_4}\X\\\agteq{\X'_5}{\Seq{\CommAsI{\ps}{\ko}{\pp}}{\X'_6}}&\agteq{\X'_6}{\Seq{\CommAs{\pp}{\pR}{\ps}}{\X'_7}}&\agteq{\X'_7}\X\end{array}\]

\end{exa}

\begin{figure} 
 
\prooftree
\hspace{-10em}
\II\qquad\qquad\qquad\II'
\justifies
\prooftree
 \tyalg{\pair{\pP{\pp}{\PP} \parN \pP{\ps}{\Srv}}{\X}}{\pair{\pP{\pp}{\PP_1} \parN \pP{\ps}{\Srv}}{\X_1}}{\eqsys_1}
\justifies
 \tyalg{\emptyset}{\pair{\pP{\pp}{\PP} \parN \pP{\ps}{\Srv}}{\X}}{\eqsys}
 \using \rn{\infn{Out}}
\endprooftree
 \using \rn{\infn{In}}
\endprooftree

\bigskip

where $\II$ is 

\bigskip

\prooftree
\prooftree
\prooftree
\justifies
 \tyalg{\Goals_1}{\pair{\pP{\pp}{\PP} \parN \pP{\ps}{\Srv}}{\X_4}}{\eqsys_4}
 \using \rn{\infn{Cycle}}
 \endprooftree
\justifies
 \tyalg{\Goals}{\pair{\pP{\pp}{\PP_1} \parN \pP{\ps}{\Srv}}{\X_3}}{\eqsys_3}
  \using \rn{\infn{In}}
\endprooftree
\prooftree
\prooftree
\justifies
\tyalg{\Goals_3}{\pair{\pP{\pp}{\PP} \parN \pP{\ps}{\Srv}}{\X_7}}{\eqsys_7}
\using \rn{\infn{Cycle}}
 \endprooftree
\justifies
\prooftree
\tyalg{\Goals_2}{\pair{\pP{\pp}{\s!\lql;\PP} \parN \pP{\ps}{\Srv}}{\X_6}}{\eqsys_6}       
 \justifies
 \tyalg{\Goals}{\pair{\pP{\pp}{\PP_1} \parN \pP{\ps}{\Srv}}{\X_5}}{\eqsys_5}
  \using \rn{\infn{In}}
\endprooftree
 \using \rn{\infn{Out}}
\endprooftree
\justifies
 \tyalg{\pair{\pP{\pp}{\PP} \parN \pP{\ps}{\Srv}}{\X},\pair{\pP{\pp}{\PP_1} \parN \pP{\ps}{\Srv}}{\X_1}}{\pair{\pP{\pp}{\PP_1} \parN \pP{\ps}{\Srv_1}}{\X_2}}{\eqsys_2}
  \using \rn{\infn{Out}}
\endprooftree 

\bigskip

and $\II'$ is

\prooftree
\prooftree
\prooftree
\justifies
 \tyalg{\Goals'_1}{\pair{\pP{\pp}{\PP} \parN \pP{\ps}{\Srv}}{\X'_4}}{\eqsys'_4}
 \using \rn{\infn{Cycle}}
 \endprooftree
\justifies
 \tyalg{\Goals'}{\pair{\pP{\pp}{\PP_1} \parN \pP{\ps}{\Srv}}{\X'_3}}{\eqsys'_3}
  \using \rn{\infn{In}}
\endprooftree
\prooftree
\prooftree
\justifies
\tyalg{\Goals'_3}{\pair{\pP{\pp}{\PP} \parN \pP{\ps}{\Srv}}{\X'_7}}{\eqsys'_7}
\using \rn{\infn{Cycle}}
 \endprooftree
\justifies
\prooftree
\tyalg{\Goals'_2}{\pair{\pP{\pp}{\s!\lql;\PP} \parN \pP{\ps}{\Srv}}{\X'_6}}{\eqsys'_6}       
 \justifies
 \tyalg{\Goals'}{\pair{\pP{\pp}{\PP_1} \parN \pP{\ps}{\Srv}}{\X'_5}}{\eqsys'_5}
  \using \rn{\infn{In}}
\endprooftree
 \using \rn{\infn{Out}}
\endprooftree
\justifies
 \tyalg{\pair{\pP{\pp}{\PP} \parN \pP{\ps}{\Srv}}{\X},\pair{\pP{\pp}{\PP_1} \parN \pP{\ps}{\Srv}}{\X_1}}{\pair{\pP{\pp}{\PP_1} \parN \pP{\ps}{\Srv_1}}{\X'_2}}{\eqsys'_2}
  \using \rn{\infn{Out}}
\endprooftree

\bigskip

and 

\bigskip

$\eqsys=\set{\agteq\X {\Seq{\CommAs{\pp}{\nD}{\ps}}{\X_1}}}\cup\eqsys_1\quad\eqsys_1=\set{\agteq{\X_1} {\agtIS{\pp}{\ps}{\set{\Seq{\nD}{\X_2}\, ,\ \Seq{\pR}{\X'_2}}}}}\cup\eqsys_2\cup\eqsys_2'$ 

$\eqsys_2=\set{\agteq{\X_2} {\agtSOS{\ps}{\pp}{\set{\Seq{\ok}{\X_3}\, ,\ \Seq{\ko}{\X_5}}}}}\cup\eqsys_3\cup\eqsys_5\quad\eqsys_3=\set{\agteq{\X_3}{\Seq{\CommAsI{\ps}{\ok}{\pp}}{\X_4}}}\cup\eqsys_4$

$\eqsys_4=\set{\agteq{\X_4}\X}\quad\eqsys_5=\set{\agteq{\X_5}{\Seq{\CommAsI{\ps}{\ko}{\pp}}{\X_6}}}\cup\eqsys_6$

$\eqsys_6=\set{\agteq{\X_6}{\Seq{\CommAs{\pp}{\pR}{\ps}}{\X_7}}}\cup\eqsys_7\quad\eqsys_7=\set{\agteq{\X_7}\X}$

$\eqsys'_2=\set{\agteq{\X'_2} {\agtSOS{\ps}{\pp}{\set{\Seq{\ok}{\X'_3}\, ,\ \Seq{\ko}{\X'_5}}}}}\cup\eqsys'_3\cup\eqsys'_5\quad\eqsys'_3=\set{\agteq{\X'_3}{\Seq{\CommAsI{\ps}{\ok}{\pp}}{\X'_4}}}\cup\eqsys'_4$

$\eqsys'_4=\set{\agteq{\X'_4}\X}\quad\eqsys'_5=\set{\agteq{\X'_5}{\Seq{\CommAsI{\ps}{\ko}{\pp}}{\X'_6}}}\cup\eqsys'_6$

$\eqsys'_6=\set{\agteq{\X'_6}{\Seq{\CommAs{\pp}{\pR}{\ps}}{\X'_7}}}\cup\eqsys'_7\quad\eqsys'_7=\set{\agteq{\X'_7}\X}$

$\Goals=\set{\pair{\pP{\pp}{\PP} \parN \pP{\ps}{\Srv}}{\X},\pair{\pP{\pp}{\PP_1} \parN \pP{\ps}{\Srv}}{\X_1},\pair{\pP{\pp}{\PP_1} \parN \pP{\ps}{\Srv_1}}{\X_2}}$

$\Goals'=\set{\pair{\pP{\pp}{\PP} \parN \pP{\ps}{\Srv}}{\X},\pair{\pP{\pp}{\PP_1} \parN \pP{\ps}{\Srv}}{\X_1},\pair{\pP{\pp}{\PP_1} \parN \pP{\ps}{\Srv_1}}{\X_2'}}$

$\Goals_1=\Goals\cup\set{\pair{\pP{\pp}{\PP_1} \parN \pP{\ps}{\Srv}}{\X_3}}\quad\Goals_1'=\Goals'\cup\set{\pair{\pP{\pp}{\PP_1} \parN \pP{\ps}{\Srv}}{\X_3'}}$

$\Goals_2=\Goals\cup\set{\pair{\pP{\pp}{\PP_1} \parN \pP{\ps}{\Srv}}{\X_5}}\quad\Goals_2'=\Goals'\cup\set{\pair{\pP{\pp}{\PP_1} \parN \pP{\ps}{\Srv}}{\X_5'}}$

$\Goals_3=\Goals_2\cup\set{\pair{\pP{\pp}{\s!\lql;\PP} \parN \pP{\ps}{\Srv}}{\X_6}}\quad\Goals_3'=\Goals_2'\cup\set{\pair{\pP{\pp}{\s!\lql;\PP} \parN \pP{\ps}{\Srv}}{\X_6'}}$

\caption{An application of the inference algorithm to the hospital network, where 
  $\PP$, $\PP_1$, $\Srv$, $\Srv_1$, $\G$, $\G_1$ and $\G_2$ are defined in the caption of \refToFigure{fig:exvd}. }\label{fig:exia}
\end{figure} 

 Toward proving properties of the inference algorithm,   we check a couple of auxiliary lemmas. 

\begin{lem}\label{lem:vars-eq}
If $\tyalg{\Goals}{\pair\Nt\X}{\eqsys}$, then 
$\eqsys$ is $\Goals$-closed. 
\end{lem}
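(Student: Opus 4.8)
\textbf{Proof plan for Lemma~\ref{lem:vars-eq}.}
The plan is to proceed by induction on the derivation of $\tyalg{\Goals}{\pair\Nt\X}{\eqsys}$, unfolding the definition of $\Goals$-closed into its three conjuncts, namely (i) $\eqsys$ is guarded, (ii) $\dom\eqsys\cap\vars\Goals = \emptyset$, and (iii) $\vars\eqsys\setminus\dom\eqsys \subseteq\vars\Goals$, and establishing each of them in each of the four inference rules. First I would record a side observation which will be used throughout: in Rules \rn{\infn{Out}} and \rn{\infn{In}} the variables $\Y_i$ are fresh (in particular distinct from $\X$ and from all variables in $\Goals$ and in the $\eqsys_j$ for $j\ne i$), and the recursive calls are made with $\Goals' = \Goals, \pair{\pP\pp\PP\parN\Nt}{\X}$; so by induction hypothesis each $\eqsys_i$ is $\Goals'$-closed, which gives $\dom{\eqsys_i}\cap\vars{\Goals'} = \emptyset$ and $\vars{\eqsys_i}\setminus\dom{\eqsys_i}\subseteq\vars{\Goals'} = \vars\Goals\cup\{\X\}$.

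For the base cases, Rule \rn{\infn{Cycle}} returns $\eqsys = \{\agteq\X\Y\}$ where $\pair\Nt\Y\in\Goals$, so $\X\notin\vars\Goals$ by the side condition on the judgement (the variables in $\Goals$ all differ from $\X$), giving (ii); $\dom\eqsys = \{\X\}$ and $\vars\eqsys\setminus\dom\eqsys = \{\Y\}\subseteq\vars\Goals$, giving (iii); and (i) holds vacuously since there is no chain $\agteq\X\Y,\agteq\Y\Gpat$ in a singleton. Rule \rn{\infn{End}} returns $\{\agteq\X\End\}$, for which all three conditions are immediate. For the inductive cases \rn{\infn{Out}} and \rn{\infn{In}}, the returned system is $\eqsys = \{\agteq\X\Gpat_0\}\cup\bigcup_{i\in I}\eqsys_i$ where $\Gpat_0$ is $\agtO\pp\q{i}{I}\la\Y$ or $\agtII\q\pp{i}{I}\la\Y$. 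Here $\dom\eqsys = \{\X\}\cup\bigcup_i\dom{\eqsys_i}$; for (ii) I use that $\X\notin\vars\Goals$ and that $\dom{\eqsys_i}\cap\vars{\Goals'} = \emptyset$, so in particular $\dom{\eqsys_i}\cap\vars\Goals = \emptyset$. For (iii), $\vars\eqsys\setminus\dom\eqsys$ is contained in $(\{\Y_i\mid i\in I\}\cup\bigcup_i\vars{\eqsys_i})\setminus\dom\eqsys$; each $\Y_i$ lies in $\dom{\eqsys_i}$ (the recursive call on the $i$-th branch produces an equation for $\Y_i$, a fact one reads off the same case analysis — in each of the four rules the returned system contains an equation whose left-hand side is the goal variable), hence the $\Y_i$ contribute nothing, and each $\vars{\eqsys_i}\setminus\dom{\eqsys_i}\subseteq\vars\Goals\cup\{\X\}$; since $\X\in\dom\eqsys$, what survives is inside $\vars\Goals$. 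For (i), a chain $\agteq\Z\W,\agteq\W\Gpat$ with both equations in $\eqsys$ cannot start at $\Z = \X$ because the equation for $\X$ has a non-variable right-hand side $\Gpat_0$; and if $\Z\in\dom{\eqsys_i}$ then, since the $\dom{\eqsys_i}$ are pairwise disjoint and disjoint from $\{\X\}$ (freshness), both equations lie in the same $\eqsys_i$, which is guarded by induction hypothesis.

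The auxiliary fact that each recursive call produces an equation for its goal variable deserves to be stated explicitly before the induction, as a one-line sub-lemma proved by inspection of the four rules (in \rn{\infn{Cycle}} it is $\agteq\X\Y$, in \rn{\infn{End}} it is $\agteq\X\End$, and in \rn{\infn{Out}}/\rn{\infn{In}} it is $\agteq\X{\Gpat_0}$); it is what lets me discharge the ``$\Y_i$ disappear'' step cleanly. The main obstacle is not any single deep argument but the bookkeeping around freshness: one must be careful that the fresh $\Y_i$ are fresh not only with respect to $\Goals$ and $\X$ but also mutually and with respect to the sibling systems $\eqsys_j$, so that the union $\bigcup_i\eqsys_i$ is itself a well-formed regular system and the domains stay disjoint — this disjointness is exactly what is needed to localise the guardedness chain into a single $\eqsys_i$. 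Everything else is a routine propagation of the three invariants through the rules.
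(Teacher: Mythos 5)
Your proof is correct and follows exactly the route the paper takes — the paper's own proof is just the one-liner ``by a straightforward induction on the derivation of $\tyalg{\Goals}{\pair\Nt\X}{\eqsys}$'', and your proposal is that induction spelled out in full, with the three conjuncts of $\Goals$-closedness checked rule by rule. The only nitpick is in the guardedness case: when $\agteq\Z\W\in\eqsys_i$, the second equation $\agteq\W\Gpat$ need not lie in $\eqsys_i$ if $\W=\X$, but that subcase is harmless since the equation for $\X$ has the non-variable right-hand side $\Gpat_0$.
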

\begin{proof}
By a straightforward induction on the derivation of $\tyalg{\Goals}{\pair\Nt\X}{\eqsys}$. 
\end{proof} 

\begin{lem}\label{lem:eq-play}
If $\eqsys$ is an $\Goals$-closed system of equations and $\vars{\Gpat}\subseteq\vars\eqsys$, then 
$\plays{\three\Goals\eqsys\Gpat} = \plays{\Gpat\theta}$ for all $\theta \in \gsol[\Goals]{\eqsys}$. 
\end{lem}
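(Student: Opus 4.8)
The plan is to rely on the fact that both $\plays{\cdot}$ on global types and the extended $\plays{\three\Goals\eqsys\cdot}$ on patterns are \emph{least} solutions of their defining (possibly cyclic) equations — the paper already reads $\plays\G$ this way — so each comes with an equivalent inductive presentation: $\pp\in\plays\G$ is derivable by the rules that add the head participant of an output or input and propagate membership from the branches, and $\pp\in\plays{\three\Goals\eqsys\Gpat}$ is derivable by those same rules plus, for variables, a rule unfolding $\X$ to $\Gpat'$ when $\agteq\X{\Gpat'}\in\eqsys$ and a rule yielding $\plays\Nt$ when $\X\notin\dom\eqsys$ and $\pair\Nt\X\in\Goals$. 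I would first record two easy facts for a fixed $\theta\in\gsol[\Goals]{\eqsys}$: (i) the hypothesis $\vars\Gpat\subseteq\vars\eqsys$ together with $\vars\eqsys\subseteq\dom\theta$ makes $\Gpat\theta$ a global type, and this side condition is preserved when passing from $\Gpat$ to one of its immediate subpatterns and when passing from a variable $\X\in\dom\eqsys$ to the right-hand side of its equation; and (ii) because $\eqsys$ is $\Goals$-closed, $\dom\eqsys\cap\vars\Goals=\emptyset$ and $\vars\eqsys\setminus\dom\eqsys\subseteq\vars\Goals$, so for every relevant variable the three clauses defining $\plays{\three\Goals\eqsys\X}$ are mutually exclusive and the \emph{otherwise} clause is never reached. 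The lemma is then proved by showing the two inclusions.

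For $\plays{\three\Goals\eqsys\Gpat}\subseteq\plays{\Gpat\theta}$ I would induct on the derivation of $\pp\in\plays{\three\Goals\eqsys\Gpat}$ and case on its last rule. If it marks $\pp$ as the head of an output or input pattern, then $\Gpat\theta$ has the same head and we are done; if it propagates from a branch, the premise derivation is strictly smaller and $\Gpat\theta$ has the corresponding $\theta$-instantiated branch, so the induction hypothesis applies. If it unfolds $\X$ with $\agteq\X{\Gpat'}\in\eqsys$, the premise gives, by a strictly smaller derivation, $\pp\in\plays{\Gpat'\theta}$, and $\Gpat'\theta=\theta(\X)=\X\theta$ since $\theta$ solves $\eqsys$. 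If it uses $\X\notin\dom\eqsys$ and $\pair\Nt\X\in\Goals$ with $\pp\in\plays\Nt$, then agreement of $\theta$ with $\Goals$ gives $\plays{\theta(\X)}=\plays\Nt\ni\pp$.

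For the converse $\plays{\Gpat\theta}\subseteq\plays{\three\Goals\eqsys\Gpat}$ I would induct on the derivation of $\pp\in\plays{\Gpat\theta}$ in the global-type system, casing on the shape of $\Gpat$. The case $\Gpat=\End$ is vacuous, and the output/input cases mirror the previous inclusion (the head participant, or a branch carrying a strictly smaller derivation). The delicate case, and the one real obstacle, is $\Gpat=\X$: when $\X\in\dom\eqsys$ with $\agteq\X{\Gpat'}\in\eqsys$ we have $\Gpat\theta=\theta(\X)=\Gpat'\theta$, so the very same derivation now witnesses $\pp\in\plays{\Gpat'\theta}$ and its size has not decreased. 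This is exactly where guardedness of $\eqsys$ is used: following the chain of variable-only equations out of $\X$ reaches, in at most two steps, either a variable outside $\dom\eqsys$ — handled via $\pair\Nt\X\in\Goals$ and agreement as above — or a non-variable pattern $\Gpat^\ast$ with $\Gpat^\ast\theta=\Gpat\theta$; this $\Gpat^\ast$ cannot be $\End$ (else $\plays{\Gpat\theta}=\emptyset$), so it is an output or input pattern, on whose $\theta$-instance the derivation either marks $\pp$ as head — whence $\pp\in\plays{\three\Goals\eqsys{\Gpat^\ast}}$ — or recurses into a branch with a strictly smaller derivation, to which the induction hypothesis applies to the same effect. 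Propagating $\pp\in\plays{\three\Goals\eqsys{\Gpat^\ast}}$ back along the (length at most two) chain of variable unfoldings yields $\pp\in\plays{\three\Goals\eqsys\X}$. Putting the two inclusions together proves the lemma; all cases except the guarded-chain bookkeeping in this last one are routine.
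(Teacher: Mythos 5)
Your proof is correct and follows essentially the same route as the paper's: one inclusion by exploiting minimality of the least solution (your rule induction on $\pp\in\plays{\three\Goals\eqsys\Gpat}$ is the paper's check that the sets $\plays{\Gpat\theta}$ respect the defining equations), and the other by induction on the least depth at which $\pp$ first occurs in $\Gpat\theta$, using guardedness to collapse the chain of variable unfoldings to a non-variable pattern or a variable outside $\dom\eqsys$ handled via agreement with $\Goals$. No gaps.
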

\begin{proof}
To prove the inclusion $\plays{\Gpat\theta}\subseteq \plays{\three\Goals\eqsys\Gpat}$, let $\pp\in\plays{\Gpat\theta}$.  We show  $\pp\in\plays{\three\Goals\eqsys\Gpat}$ by induction on the least distance $d$ of a communication with player  $\pp$ from the root of $\Gpat\theta$. 
First of all, 
it is easy to see that  there is $\Gpat'$ such that $\plays{\three\Goals\eqsys\Gpat} = \plays{\three\Goals\eqsys{\Gpat'}}$ and $\Gpat\theta = \Gpat'\theta$  
and 
either $\Gpat' = \agtb\pr\ps{i}{I}\la\Gpat$ or $\Gpat' = \X$ and $\X \notin\dom\eqsys$. 
Indeed, we have $\Gpat\ne\End$ since $\plays{\End\theta} = \plays\End = \emptyset$. 
First we show that  
$\Gpat = \X \in\dom\eqsys$ is impossible. In this case 
$\agteq\X{\Gpat_1} \in \eqsys$ and  
we have $\Gpat\theta = \Gpat_1\theta$ and $\plays{\three\Goals\eqsys\Gpat} = \plays{\three\Goals\eqsys{\Gpat_1}}$, since $\theta$ is a solution of $\eqsys$. 
Hence, again $\Gpat_1\ne\End$ and if $\Gpat_1 = \Y \in \dom\eqsys$, namely, $\agteq\Y{\Gpat_2}\in\eqsys$, we have 
$\Gpat_1\theta = \Gpat_2\theta$ and $\plays{\three\Goals\eqsys{\Gpat_1}} = \plays{\three\Goals\eqsys{\Gpat_2}}$ and, 
since $\eqsys$ is $\Goals$-closed and so guarded, we have that $\Gpat_2$ is not a variable. 
\begin{description}
\item [Case $d = 0$] 
If $\Gpat' = \X \notin\dom\eqsys$, then $\pair\Nt\X \in \Goals$ and $\plays{\three\Goals\eqsys{\Gpat'}} = \plays{\Nt}$. 
Since $\theta$ agrees with $\Goals$, we have $\plays{\Gpat'\theta} = \plays{\theta(\X)} = \plays{\Nt}$, hence $\pp \in \plays{\three\Goals\eqsys{\Gpat'}}$. 
If $\Gpat' = \agtb\pr\ps{i}{I}\la\Gpat$, then $\Gpat'\theta = \agtbg\pr\ps{i\in I}{\la_i}{\Gpat_i\theta}$ and $\play{\Gpat'} = \play{\Gpat'\theta} = \pp$. 
By definition we have $\plays{\three\Goals\eqsys{\Gpat'}} = \set{\play{\Gpat'}} \cup \bigcup_{i \in I} \plays{\Goals,\eqsys,\Gpat_i}$, hence 
$\pp \in \plays{\three\Goals\eqsys{\Gpat'}}$.  

\item [Case $d > 0$]
If $\Gpat' = \X \notin\dom\eqsys$, the proof is as above. 
If $\Gpat' = \agtb\pr\ps{i}{I}\la\Gpat$, then $\pp \ne \play{\Gpat'\theta}$, hence  $\pp\ne\pr$ if $\dagger=!$ and
$\pp\ne\ps$ if $\dagger=?$.  
We have $\Gpat'\theta = \agtbg\pr\ps{i\in I}{\la_i}{\Gpat_i\theta}$ and there is $k \in I$ such that $\pp\in\plays{\Gpat_k\theta}$ and the distance decreases. 
Then, by induction hypothesis, we get $\pp \in \plays{\three\Goals\eqsys{\Gpat_k}} \subseteq \plays{\three\Goals\eqsys{\Gpat'}}$, as needed. 
\end{description}

To prove the other inclusion, $\plays{\three\Goals\eqsys{\Gpat}}\subseteq \plays{\Gpat\theta}$, we just have to check that the sets $\plays{\Gpat\theta}$ respect the equations defining $\plays{\three\Goals\eqsys{\Gpat}}$. 
All cases are trivial except for $\Gpat = \X$. 
If $\X \in \dom \eqsys$, that is, $\agteq\X{\Gpat'} \in \eqsys$, then 
$\Gpat\theta = \theta(\X) = \Gpat'\theta$, hence $\plays{\Gpat\theta} = \plays{\Gpat'\theta}$, as needed. 
Otherwise, $\X\in\vars\Goals$, that is, $\pair\Nt\X \in \Goals$, hence 
$\plays{\three\Goals\eqsys{\Gpat}} = \plays\Nt$. 
Since $\theta$ agrees with $\Goals$, we have $\plays{\Gpat\theta} = \plays{\theta(\X)} = \plays\Nt$, as needed. 
\end{proof}

 \begin{figure}
 \begin{math}
 \begin{array}{c}
 \NamedRule{\rn{\itrp{End}}}{}{ \tynIP\Nset{\pP\pp\inact:\End} }{} \qquad 
 \NamedRule{\rn{\itrp{Cycle}}}{}{ \tynIP{\Nset,\ipair\G\Nt}{\Nt:\G} }{} \\[3ex] 
 \NamedRule{\rn{\itrp{Out}}}{\mbox{$\begin{array}{c}
   \tynIP{\Nset,\ipair\G\Nt}{\pP\pp{\PP_i}\parN\Nt':\G_i} \\ \plays{\G_i}\setminus\set\pp=\plays{\Nt'}
   \ \ \forall i \in I\end{array}$}
 }{ \tynIP{\Nset}{\Nt:\G} }{\begin{array}{c}\G=\agtO{\q}{\pp}i I{\la}{\G}\\ \Nt=\pP\pp{\oup{\q}{i}{I}{\la}{\PP}}\parN\Nt' \end{array}} \\[3ex] 
 \NamedRule{\rn{\itrp{In}}}{\mbox{$\begin{array}{c}
   \tynIP{\Nset,\ipair\G\Nt}{\pP\pp{\PP_i}\parN\Nt':\G_i} \\ \plays{\G_i}\setminus\set\pp=\plays{\Nt'}
 \ \ \forall i \in I\end{array}$}
 }{ \tynIP{\Nset}{\Nt:\G} }{\begin{array}{c}\G=\agtI{\q}{\pp}{\la}{\G}\\ \Nt=\pP\pp{\inp{\q}{j}{J}{\la}{\PP}}\parN\Nt'\\ I\subseteq J\end{array}} 
 \end{array} 
 \end{math}
 \caption{Inductive typing rules for networks.}\label{fig:itr}
 \end{figure}

To show soundness   and completeness   of our inference algorithm, it is handy to formulate an inductive version of our typing rules, see \refToFig{itr}, 
where $\Nset$ ranges over sets of pairs $\ipair\G\Nt$. 
We can give an inductive formulation since all infinite derivations using the typing rules of  \refToFig{cntr}  are regular, i.e. the number of different subtrees of a derivation for a judgement $\tyn\G\Nt$ is finite.  In fact,   it is bounded by the product of the number of different subterms of $\G$ and the number of different subnetworks of $\Nt$, which are both finite as $\G$ and (processes in) $\Nt$ are regular. 
 Applying  the standard transformation according to \cite[Theorem 5.2]{Dagnino21} (see Definition 5.1 for the notational convention) from a coinductive to an inductive formulation we get the typing rules shown in \refToFig{itr}. 

  In the following two lemmas we relate inference and inductive  derivability.  

\begin{lem}\label{lem:inf-sound}
If $\tyalg{\Goals}{\pair\Nt\X}{\eqsys}$,  then    $\tynI{\Goals\theta}{\Nt : \theta(\X)}$  
for all $\theta \in \gsol[\Goals]{\eqsys}$ such that $\vars\Goals\subseteq\dom\theta$.  \end{lem}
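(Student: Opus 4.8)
The plan is to proceed by induction on the derivation of $\tyalg{\Goals}{\pair\Nt\X}{\eqsys}$, fixing throughout a $\theta\in\gsol[\Goals]{\eqsys}$ with $\vars\Goals\subseteq\dom\theta$, and to show, for each inference rule of \refToFig{tyalg}, that $\tynI{\Goals\theta}{\Nt:\theta(\X)}$ follows from the corresponding inductive rule of \refToFig{itr}. In each case $\theta(\X)$ is read off the equation for $\X$ returned by the rule: if the last rule is $\rn{\infn{End}}$ then $\Nt=\pP\pp\inact$ and $\agteq\X\End\in\eqsys$, so $\theta(\X)=\End$ and $\rn{\itrp{End}}$ applies with empty premises; if it is $\rn{\infn{Cycle}}$ then $\Goals=\Goals_0,\pair\Nt\Y$ and $\eqsys=\{\agteq\X\Y\}$, whence $\theta(\X)=\theta(\Y)$, and since $\ipair{\theta(\Y)}\Nt\in\Goals\theta$ rule $\rn{\itrp{Cycle}}$ derives $\tynI{\Goals\theta}{\Nt:\theta(\X)}$.

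For $\rn{\infn{Out}}$ — the case $\rn{\infn{In}}$ being analogous, with the inclusion $I\subseteq J$ carried over into the side condition of $\rn{\itrp{In}}$ — we have $\Nt=\pP\pp{\oup\q{i}{I}\la\PP}\parN\Nt_1$, premises $\tyalg{\Goals'}{\pair{\pP\pp{\PP_i}\parN\Nt_1}{\Y_i}}{\eqsys_i}$ with $\Goals'=\Goals,\pair\Nt\X$, $\eqsys=\{\agteq\X{\agtO\pp\q{i}{I}\la\Y}\}\cup\bigcup_{i\in I}\eqsys_i$, and the player side conditions of \refToFig{tyalg}. From the equation for $\X$ we get $\theta(\X)=\agtOg\pp\q{i\in I}{\la_i}{\theta(\Y_i)}$; from $\eqsys_i\subseteq\eqsys$ that $\theta$ is a solution of $\eqsys_i$ with $\vars{\eqsys_i}\subseteq\dom\theta$; and by \refToLem{vars-eq} that $\eqsys$ is $\Goals$-closed and each $\eqsys_i$ is $\Goals'$-closed. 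The plan is then: (i) use \refToLem{eq-play} together with the algorithm's player side conditions to obtain $\plays{\theta(\Y_i)}\setminus\set\pp=\plays{\Nt_1}$ — precisely the player conditions of $\rn{\itrp{Out}}$ — and, from $\theta(\X)=\agtOg\pp\q{i\in I}{\la_i}{\theta(\Y_i)}$ and $\PP\ne\inact$, the agreement $\plays{\theta(\X)}=\plays\Nt$, so that $\theta\in\gsol[\Goals']{\eqsys_i}$ with $\vars{\Goals'}\subseteq\dom\theta$; (ii) apply the induction hypothesis to each premise, getting $\tynI{\Goals'\theta}{\pP\pp{\PP_i}\parN\Nt_1:\theta(\Y_i)}$; (iii) observe $\Goals'\theta=\Goals\theta,\ipair{\theta(\X)}\Nt$ and conclude by rule $\rn{\itrp{Out}}$.

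The heart of the argument, and the step I expect to be the main obstacle, is (i): translating the player conditions through the nested systems $\eqsys_i\subseteq\eqsys$ and goal sets $\Goals\subseteq\Goals'$, and checking that $\theta$ agrees with the enlarged goal set $\Goals'$. Applying \refToLem{eq-play} to the $\Goals$-closed system $\eqsys$ gives $\plays{\theta(\Y_i)}=\plays{\three\Goals\eqsys{\Y_i}}$, after which the side condition delivers the required equalities and hence $\plays{\theta(\X)}=\set\pp\cup\bigcup_{i\in I}\plays{\theta(\Y_i)}=\plays\Nt$; the subtlety is that matching this against the side condition (and invoking \refToLem{eq-play} at the level of the $\eqsys_i$) involves comparing $\plays{\three\Goals\eqsys{\,\cdot\,}}$ with $\plays{\three{\Goals'}{\eqsys_i}{\,\cdot\,}}$, which is immediate except when a branch cycles back to $\X$ and the naive comparison becomes circular. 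I expect to unwind this by strengthening the statement proved by induction so that it also asserts $\plays{\theta(\X)}=\plays\Nt$ (equivalently $\theta\in\gsol[\Goals,\pair\Nt\X]{\eqsys}$), arranging the reasoning so that the agreement at a node is justified by the agreements already established at its premises, and exploiting minimality of the least-fixpoint player-set computation for the remaining inclusion. Once this bookkeeping is in place, everything else — freshness of the $\Y_i$, the domain inclusions, and the rewriting of the side conditions — is routine.
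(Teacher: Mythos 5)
Your proof follows the paper's argument essentially verbatim: induction on the derivation of $\tyalg{\Goals}{\pair\Nt\X}{\eqsys}$, identical treatment of the \rn{\infn{End}} and \rn{\infn{Cycle}} cases, and for \rn{\infn{Out}}/\rn{\infn{In}} the same three steps of establishing that $\theta$ agrees with the extended goal set $\Goals,\pair\Nt\X$, invoking the induction hypothesis on the premises, and closing with rule \rn{\itrp{Out}}/\rn{\itrp{In}}. The bookkeeping you single out as the main obstacle --- transferring the player side conditions between $\plays{\three\Goals\eqsys{\,\cdot\,}}$ and $\plays{\three{\Goals'}{\eqsys_i}{\,\cdot\,}}$ --- is exactly the step the paper discharges by two appeals to \refToLem{eq-play} (once over $\eqsys$ to obtain $\plays{\theta(\X)}=\plays\Nt$ and hence agreement with $\Goals,\pair\Nt\X$, and once over each $\Goals'$-closed $\eqsys_i$ to convert the algorithm's side condition into $\plays{\theta(\Y_i)}\setminus\set\pp=\plays{\Nt'}$), so your proposed strengthening is a more explicit rendering of the same resolution rather than a different route.
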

\begin{proof}
By induction on the derivation of $\tyalg{\Goals}{\pair\Nt\X}{\eqsys}$. 
\begin{description}
\item [\rn{\infn{End}}]
We have $\eqsys = \set{\agteq\X\End}$, hence $\theta(\X) = \End$ and the thesis follows by rule $\rn{\itr{End}}$. 
\item [\rn{\infn{Cycle}}] 
We have $\eqsys = \set{\agteq\X\Y}$ and $\Goals = \Goals',\pair\Nt\Y$. 
Then, $\theta(\X) = \theta(\Y)$ and   the thesis follows by Rule \rn{\itr{Cycle}}.  
\item [\rn{\infn{Out}}] 
We have $\Nt \equiv \pP\pp{\oup\q{i}{I}\la\PP} \parN \Nt'$   and   $\tyalg{\Goals,\pair\Nt\X}{\pair{\Nt_i}{\Y_i}}{\eqsys_i}$ with $\Y_i$ fresh and $\Nt_i \equiv \pP\pp{\PP_i} \parN\Nt'$   and $\plays{\three{\Goals,\pair\Nt\X}{\eqsys_i}{\Y_i}}\setminus\set\pp = \plays{\Nt'}$    for all $i \in I$ and $\eqsys = \set{\agteq\X{\agtO\pp\q{i}{I}\la\Y}} \cup\bigcup_{i \in I} \eqsys_i$.
Since $\eqsys_i\subseteq\eqsys$, we have $\theta \in \gsol{\eqsys_i}$.   Being $\theta \in \gsol[\Goals]{\eqsys}$, \refToLem{eq-play} implies $\plays{\three\Goals\eqsys\X} = \plays{\Nt}$. So   we get that $\theta$ agrees with $\Goals,\pair\Nt\X$. 
Then, by induction hypothesis, we   have 
$\tynI{\Goals\theta, \ipair {\theta(\X)}\Nt}{\Nt_i : \theta(\Y_i)}$ for all $i \in I$. The thesis follows by Rule \rn{\itr{Out}}, since 
$\theta(\X) = \agtOg\pp\q{i\in I}{\la_i}{\theta(\Y_i)}$ and $\plays{\three{\Goals,\pair\Nt\X}{\eqsys_i}{\Y_i}}\setminus\set\pp = \plays{\Nt'}$ implies 
$\plays{\theta(\Y_i)}\setminus\set\pp= \plays{\Nt'}$ for all $i \in I$ by \refToLem{eq-play}.  
\item [\rn{\infn{In}}]
 We have $\Nt \equiv \pP\pp{\inp\q{j}{J}\la\PP} \parN \Nt'$ and $\tyalg{\Goals,\pair\Nt\X}{\pair{\Nt_i}{\Y_i}}{\eqsys_i}$ with $\Y_i$ fresh and $\Nt_i \equiv \pP\pp{\PP_i} \parN\Nt'$   and $\plays{\three{\Goals,\pair\Nt\X}{\eqsys_i}{\Y_i}}\setminus\set\pp = \plays{\Nt'}$    for all $i \in I\subseteq J$, and $\eqsys = \set{\agteq\X{\agtII\q\pp{i}{I}\la\Y}} \cup\bigcup_{i \in I} \eqsys_i$. 
Since $\eqsys_i\subseteq\eqsys$, we have $\theta \in \gsol{\eqsys_i}$.    Being $\theta \in \gsol[\Goals]{\eqsys}$, \refToLem{eq-play} implies $\plays{\three\Goals\eqsys\X} = \plays{\Nt}$. So   we get that $\theta$ agrees with $\Goals,\pair\Nt\X$. 
Then, by induction hypothesis, we   have 
$\tynI{\Goals\theta,  \ipair{\theta(\X)}\Nt}{\Nt_i : \theta(\Y_i)}$, for all $i \in I$. The thesis follows by Rule \rn{\itr{In}}, since 
$\theta(\X) = \agtIg\q\pp{i\in I}{\la_i}{\theta(\Y_i)}$ and $\plays{\three{\Goals,\pair\Nt\X}{\eqsys_i}{\Y_i}}\setminus\set\pp = \plays{\Nt'}$ implies 
$\plays{\theta(\Y_i)}\setminus\set\pp= \plays{\Nt'}$ for all $i \in I$ by \refToLem{eq-play}.    \qedhere
\end{description}
\end{proof}

\begin{lem}\label{lem:inf-complete}
If $\tynI{\Nset}{\Nt:\G}$ and   $\plays{\G'}=\plays{\Nt'}$ for all $\pair{\Nt'}{\G'}\in\Nset$,   then, for all $\Goals$, $\X$ and $\sigma$  such that $\X\notin\vars\Goals$, $\dom\sigma = \vars\Goals$ and $\Goals\sigma = \Nset$,  there are $\eqsys$ and $\theta$ such that
$\tyalg{\Goals}{\pair\Nt\X}{\eqsys}$ and $\theta\in\gsol[\Goals]{\eqsys}$ and
$\dom\theta = \vars\eqsys\cup\vars\Goals$ and  
$\sigma\sbtord\theta$ and $\theta(\X) = \G$. 
\end{lem}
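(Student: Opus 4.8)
The plan is to prove the statement by induction on the derivation of $\tynI{\Nset}{\Nt:\G}$ in the inductive system of \refToFig{itr}, generalised over $\Goals$, $\X$ and $\sigma$, proceeding by cases on the last rule used and mirroring it with the homonymous rule of the inference algorithm in \refToFig{tyalg}; in spirit this is the argument of \refToLem{inf-sound} run backwards, reading off the algorithmic derivation from the (finite) inductive typing derivation while simultaneously building a solution $\theta$ extending $\sigma$. For \rn{\itrp{End}} I would take $\eqsys=\set{\agteq\X\End}$ from Rule \rn{\infn{End}} and $\theta=\sigma\sbtplus\set{\X\mapsto\End}$ (well defined since $\X\notin\vars\Goals=\dom\sigma$); then $\theta(\X)=\End=\G$, $\dom\theta=\vars\Goals\cup\set\X=\vars\eqsys\cup\vars\Goals$, $\sigma\sbtord\theta$ and $\theta\in\gsol[\Goals]{\eqsys}$ hold immediately. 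For \rn{\itrp{Cycle}}, where $\Nset=\Nset_0,\ipair\G\Nt$, the hypothesis $\Goals\sigma=\Nset$ supplies a goal $\pair\Nt\Y\in\Goals$ with $\sigma(\Y)=\G$, so Rule \rn{\infn{Cycle}} fires with $\eqsys=\set{\agteq\X\Y}$ and $\theta=\sigma\sbtplus\set{\X\mapsto\G}$ works; the only non-routine check is that $\theta$ agrees with $\Goals$ on $\Y$, i.e.\ $\plays\G=\plays\Nt$, which is precisely the second hypothesis of the lemma applied to $\ipair\G\Nt\in\Nset$.

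The weight of the argument is in the inductive cases \rn{\itrp{Out}} and \rn{\itrp{In}}; I describe \rn{\itrp{Out}}, the case \rn{\itrp{In}} being identical up to the extra side condition $I\subseteq J$ on the selected branches, which Rule \rn{\infn{In}} also allows. Here $\Nt\equiv\pP\pp{\oup\q iI\la\PP}\parN\Nt'$, $\G=\agtOg\pp\q{i\in I}{\la_i}{\G_i}$, and for each $i\in I$ there is a subderivation of $\tynIP{\Nset,\ipair\G\Nt}{\pP\pp{\PP_i}\parN\Nt':\G_i}$ with $\plays{\G_i}\setminus\set\pp=\plays{\Nt'}$. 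I would set $\Goals'=\Goals,\pair\Nt\X$ and $\sigma'=\sigma\sbtplus\set{\X\mapsto\G}$, so $\dom{\sigma'}=\vars{\Goals'}$ and $\Goals'\sigma'=\Nset,\ipair\G\Nt$; the second hypothesis is re-established for this enlarged set because $\plays\G=\set\pp\cup\bigcup_{i\in I}\plays{\G_i}=\set\pp\cup\plays{\Nt'}=\plays\Nt$ (the inductive counterpart of \refToLemma{ep}, read off directly from the branch constraints). Choosing a fresh variable $\Y_i$ for each $i\in I$ — the $\Y_i$ pairwise distinct and distinct from $\vars{\Goals'}$, and the recursive calls threading the set of already used variables so that the returned systems have pairwise disjoint sets of private (domain) variables — the induction hypothesis applied to each subderivation yields $\eqsys_i$ and $\theta_i$ with $\tyalg{\Goals'}{\pair{\pP\pp{\PP_i}\parN\Nt'}{\Y_i}}{\eqsys_i}$, $\theta_i\in\gsol[{\Goals'}]{\eqsys_i}$, $\dom{\theta_i}=\vars{\eqsys_i}\cup\vars{\Goals'}$, $\sigma'\sbtord\theta_i$ and $\theta_i(\Y_i)=\G_i$.

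It then remains to glue these data. By \refToLem{vars-eq} each $\eqsys_i$ is $\Goals'$-closed, hence $\dom{\eqsys_i}\cap\vars{\Goals'}=\emptyset$ and $\vars{\eqsys_i}\setminus\dom{\eqsys_i}\subseteq\vars{\Goals'}$; with the disjointness of the private variables this makes $\eqsys=\set{\agteq\X{\agtO\pp\q iI\la\Y}}\cup\bigcup_{i\in I}\eqsys_i$ a genuine regular system and forces the $\theta_i$ to agree on all shared variables, all of which lie in $\vars{\Goals'}$ where every $\theta_i$ equals $\sigma'$. So $\theta=\bigcup_{i\in I}\theta_i$ is well defined, with $\dom\theta=\vars\eqsys\cup\vars\Goals$ and $\theta(\X)=\sigma'(\X)=\G$; it extends $\sigma$, giving $\sigma\sbtord\theta$; and it lies in $\gsol[\Goals]{\eqsys}$, since it solves every $\eqsys_i$ (being an extension of $\theta_i$), it solves the new equation because $(\agtO\pp\q iI\la\Y)\theta=\agtOg\pp\q{i\in I}{\la_i}{\theta(\Y_i)}=\agtOg\pp\q{i\in I}{\la_i}{\G_i}=\theta(\X)$, and it agrees with $\Goals$ because its restriction to $\vars\Goals$ is $\sigma$ and $\Goals\sigma=\Nset$ satisfies the second hypothesis. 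Finally Rule \rn{\infn{Out}} does apply: its players side condition $\plays{\three{\Goals'}{\eqsys_i}{\Y_i}}\setminus\set\pp=\plays{\Nt'}$ follows from \refToLem{eq-play}, which gives $\plays{\three{\Goals'}{\eqsys_i}{\Y_i}}=\plays{\theta_i(\Y_i)}=\plays{\G_i}$. I expect the main obstacle to be exactly this variable bookkeeping: pinning down the freshness discipline of the algorithm precisely enough that the systems returned for different branches use disjoint private variables, so their union is again a regular system and the partial solutions $\theta_i$ really do glue into a single $\theta$; modulo that, everything else is a mechanical unfolding of the definitions, with \refToLem{eq-play} as the only substantial external input.
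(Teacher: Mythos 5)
Your proposal is correct and follows essentially the same route as the paper's proof: induction on the inductive typing derivation, the same choices of $\eqsys$ and $\theta$ in each case, re-establishing the players hypothesis for the extended $\Nset$ via $\plays\G=\plays\Nt$, gluing the branch solutions using freshness/disjointness of the private variables, and discharging the players side condition with \refToLem{eq-play}. The only difference is presentational: you make the $\Goals'$-closedness (via \refToLem{vars-eq}) and the agreement of the $\theta_i$ on $\vars{\Goals'}$ explicit where the paper leaves them implicit in the definition of $\sbtplus$.
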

\begin{proof}
By induction on the derivation of $\tynI{\Nset}{\Nt:\G}$. 
\begin{description}
\item [\rn{\itr{End}}]
The thesis is immediate by Rule \rn{\infn{End}} taking $\theta = \sigma \sbtplus \set{\X\mapsto\End}$. 
\item [\rn{\itr{Cycle}}]
We have   $\Nset = \Nset',\ipair  \G\Nt $,   then $\Goals = \Goals', \pair\Nt\Y$ and $\sigma(\Y) = \G$. 
By Rule \rn{\infn{Cycle}}, we get $\tyalg{\Goals}{\pair\Nt\X}{\set{\agteq\X\Y}}$, hence 
$\theta = \sigma \sbtplus \set{\X\mapsto \G}$ is a solution of $\set{\agteq\X\Y}$, which agrees with $\Goals$   being $\plays\G=\plays\Nt$,   as needed. 
\item [\rn{\itr{Out}}]
  In this case   we have $\Nt \equiv \pP\pp{\oup\q{i}{I}\la\PP} \parN \Nt'$ and $\G = \agtO\pp\q{i}{I}\la\G$ and
$\tynI{\Nset,\ipair  \G\Nt }{\N_i:\G_i}$ with $\Nt_i \equiv \pP\pp{\PP_i}\parN\Nt'$   and $\plays{\G_i}\setminus\set\pp= \plays{\Nt'}$,   for all $i \in I$.  This last condition implies $\plays\G=\plays\Nt$. 
Set $\sigma' = \sigma \sbtplus \set{\X\mapsto \G}$  and $\Goals'=\Goals, \pair\Nt\X$,  then, by induction hypothesis, we get 
that  there are $\eqsys_i$ and  $\theta_i$ such that  
$\tyalg{\Goals'}{\pair{\Nt_i}{\Y_i}}{\eqsys_i}$ and $\theta_i \in \gsol[\Goals']{\eqsys_i}$ and 
 $\dom{\theta_i}= \vars{\eqsys_i}\cup \vars{\Goals'} $ and   
$\sigma'\sbtord \theta_i$ and $\theta_i(\Y_i) = \G_i$, for all $i \in I$. 
We can assume that   $i\ne j$ implies $Y_i\ne\Y_j$ and $\dom{\eqsys_i}\cap\dom{\eqsys_j} = \emptyset$ for all $i,j\in I$,   because the algorithm always introduces fresh variables. 
  This   implies $\dom{\theta_i}\cap\dom{\theta_j} = \vars{\Goals'}$ for all $i \ne j$, and so $\theta = \sum_{i \in I} \theta_i$ is well defined. 
Moreover, we have $\theta \in \gsol[\Goals']{\eqsys_i}$ and
$\sigma \sbtord \theta$ and $\theta(\X) = \G$, as $\sigma\sbtord \sigma'$ and $\sigma'\sbtord \theta_i \sbtord \theta$ for all $i \in I$. 
  From $\plays{\G_i}\setminus\set\pp= \plays{\Nt'}$ we get $\plays{\three{\Goals'}{\eqsys_i}{\Y_i}}\setminus\set\pp = \plays{\Nt'}$  
 for all $i \in I$ by \refToLem{eq-play}.  
By Rule \rn{\infn{Out}} we get $\tyalg{\Goals}{\pair\Nt\X}{\eqsys}$ with 
$\eqsys = \set{\agteq\X{\agtO\pp\q{i}{I}\la\Y}}\cup\bigcup_{i \in I} \eqsys_i$ and $\theta \in \gsol[\Goals]{\eqsys}$, since 
$\theta(\X) = \agtO\pp\q{i}{I}\la\G = \agtOg\pp\q{i \in I}{\la_i}{\theta_i(\Y_i)} = (\agtO\pp\q{i}{I}\la\Y)\theta$ and 
$\sigma\sbtord\theta$. 
\item [\rn{\itr{In}}]   In this case   we have $\Nt \equiv \pP\pp{\inp\q{j}{J}\la\PP} \parN \Nt'$ and $\G = \agtII\q\pp{i}{I}\la\G$ with $I\subseteq J$ and
$\tynI{\Nset,\ipair\G\Nt}{\N_i : \G_i}$ with $\Nt_i \equiv \pP\pp{\PP_i}\parN\Nt'$   and $\plays{\G_i}\setminus\set\pp= \plays{\Nt'}$,   for all $i \in I$.   This last condition implies $\plays\G=\plays\Nt$. 
Set $\sigma' = \sigma \sbtplus \set{\X\mapsto \G}$  and $\Goals'=\Goals, \pair\Nt\X$,  then, by induction hypothesis,  we get 
 that there are $\eqsys_i$ and $\theta_i$ such that  
$\tyalg{\Goals'}{\pair{\Nt_i}{\Y_i}}{\eqsys_i}$ and $\theta_i \in \gsol[\Goals']{\eqsys_i}$ and 
 $\dom{\theta_i}= \vars{\eqsys_i}\cup \vars{\Goals'} $ and   
$\sigma'\sbtord \theta_i$ and $\theta_i(\Y_i) = \G_i$, for all $i\in I$.  
We can assume that   $i\ne j$ implies $Y_i\ne\Y_j$ and $\dom{\eqsys_i}\cap\dom{\eqsys_j} = \emptyset$ for all $i,j\in I$,   because the algorithm always introduces fresh variables. 
  This    implies $\dom{\theta_i}\cap\dom{\theta_j} = \vars{\Goals'}$ for all $i \ne j$, and so $\theta = \sum_{i \in I} \theta_i$ is well defined. 
Moreover, we have $\theta \in \gsol[\Goals']{\eqsys_i}$, 
$\sigma \sbtord \theta$ and $\theta(\X) = \G$, as $\sigma\sbtord \sigma'$ and $\sigma'\sbtord \theta_i\sbtord \theta$ for all $i \in I$. 
  From $\plays{\G_i}\setminus\set\pp= \plays{\Nt'}$ we get $\plays{\three{\Goals'}{\eqsys_i}{\Y_i}}\setminus\set\pp = \plays{\Nt'}$  
 for all $i \in I$ by \refToLem{eq-play}.  
By Rule \rn{\infn{In}}, we get $\tyalg{\Goals}{\pair\Nt\X}{\eqsys}$ with 
$\eqsys = \set{\agteq\X{\agtII\q\pp{i}{I}\la\Y}}\cup\bigcup_{i \in I} \eqsys_i$ and $\theta \in \gsol[\Goals]{\eqsys}$, since 
$\theta(\X) = \agtII\q\pp{i}{I}\la\G = \agtIg\q\pp{i \in I}{\la_i}{\theta_i(\Y_i)} = (\agtII\q\pp{i}{I}\la\Y)\theta$ and 
$\sigma\sbtord\theta$.  \qedhere
\end{description}
\end{proof}

\begin{thm}[Soundness and completeness of inference]\label{thm:sac}~\begin{enumerate}
\item\label{thm:sac1} If $\tyalg{}{\pair\Nt\X}{\eqsys}$, then $\tyn{\theta(\X)}\Nt$ for all $\theta \in\gsol{\eqsys}$.
 \item\label{thm:sac2} If $\tyn\G\Nt$, then there are $\eqsys$ and $\theta$ such that $\tyalg{}{\pair\Nt\X}{\eqsys}$ and $\theta \in\gsol{\eqsys}$ and $\theta(\X)=\G$. 
\end{enumerate}
\end{thm}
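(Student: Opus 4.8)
The plan is to obtain both items as specialisations of the two lemmas relating the inference algorithm to the \emph{inductive} typing judgement, \refToLem{inf-sound} and \refToLem{inf-complete}, combined with the equivalence between the inductive system of \refToFig{itr} and the coinductive one of \refToFig{cntr}. For the latter I would invoke the fact, recalled just before \refToLem{inf-sound}, that every derivation in the coinductive system is regular (its distinct subtrees being bounded in number by the product of the distinct subterms of $\G$ and the distinct subnetworks of $\Nt$, both finite by regularity), so that the standard coinductive-to-inductive transformation of \cite[Theorem 5.2]{Dagnino21} yields
\[\tyn\G\Nt\quad\text{if and only if}\quad\tynI{\emptyset}{\Nt:\G}.\]
I would also record the trivial observation that, since ``agreeing with $\Goals$'' is vacuous for $\Goals=\emptyset$, one has $\gsol[\emptyset]{\eqsys}=\gsol{\eqsys}$.

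For \refItem{thm:sac1} I would start from $\tyalg{\emptyset}{\pair\Nt\X}{\eqsys}$ and take an arbitrary $\theta\in\gsol{\eqsys}=\gsol[\emptyset]{\eqsys}$. Since $\vars\emptyset\subseteq\dom\theta$ holds trivially, \refToLem{inf-sound} instantiated at $\Goals=\emptyset$ gives $\tynI{\emptyset}{\Nt:\theta(\X)}$ (using $\emptyset\theta=\emptyset$), and then the equivalence above delivers $\tyn{\theta(\X)}\Nt$.

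For \refItem{thm:sac2} I would go the other way: from $\tyn\G\Nt$ the equivalence gives $\tynI{\emptyset}{\Nt:\G}$. The side condition of \refToLem{inf-complete} on the pairs of $\Nset$ is vacuous when $\Nset=\emptyset$, so, picking any variable $\X$ and the empty substitution $\sigma$ (for which $\X\notin\vars\emptyset$, $\dom\sigma=\emptyset=\vars\emptyset$ and $\emptyset\sigma=\emptyset$ all hold), the lemma produces $\eqsys$ and $\theta$ with $\tyalg{\emptyset}{\pair\Nt\X}{\eqsys}$, $\theta\in\gsol[\emptyset]{\eqsys}=\gsol{\eqsys}$ and $\theta(\X)=\G$, which is precisely the claim.

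I do not expect any real obstacle here: all the genuine work has already been done in \refToLem{inf-sound} and \refToLem{inf-complete} (and, below them, in \refToLem{eq-play}, which keeps the player side conditions of the algorithm in step with those of the typing rules in \refToFig{cntr}), so the theorem is simply their reading for the top-level call with an empty set of pending goals. The one point that needs a word of justification is the use of the coinductive-to-inductive correspondence, which depends on the regularity of typing derivations for regular $\G$ and $\Nt$.
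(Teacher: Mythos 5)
Your proposal is correct and matches the paper's own proof: both items are obtained by instantiating \refToLem{inf-sound} and \refToLem{inf-complete} at the empty goal set and empty substitution, and then passing between the inductive judgement of \refToFig{itr} and the coinductive one of \refToFig{cntr} via the regularity of typing derivations and \cite[Theorem 5.2]{Dagnino21}. Your extra remarks (vacuity of the agreement and side conditions for $\Goals=\emptyset$, and $\gsol[\emptyset]{\eqsys}=\gsol{\eqsys}$) are exactly the details the paper leaves implicit.
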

\begin{proof}
(\ref{thm:sac1}). By \refToLemma{lem:inf-sound} $\tyalg{}{\pair\Nt\X}{\eqsys}$ implies $\tynI{}{\Nt:\theta(\X)}$ for all $\theta \in\gsol{\eqsys}$.This is enough, since $\tynI{}{\Nt:\theta(\X)}$ gives $\tyn{\theta(\X)}\Nt$.

(\ref{thm:sac2}). From $\tyn\G\Nt$ we get $\tynI{}{\Nt:\G}$. 
By \refToLemma{lem:inf-complete} this implies that there are $\eqsys$ and $\theta$ such that 
$\tyalg{}{\pair\Nt\X}{\eqsys}$ and $\theta \in\gsol{\eqsys}$ and $\theta(\X)=\G$. 
\end{proof}

\begin{rem}[Termination]
As happens for (co)SLD-resolution in logic programming, the termination of our inference algorithm depends on the choice of a resolution strategy. 
Indeed, we have many sources of non-determinism: 
we have to select a goal to be resolved and, for each of such goals, 
we can either pick a participant of the network and expand it using rules \rn{\infn{In}} or \rn{\infn{Out}}, 
or try to close a cycle using the rule \rn{\infn{Cycle}}. 
A standard way to obtain a sound and complete resolution strategy is to build a tree where all such choices are performed in  parallel and then visit the tree using a breadth-first strategy. 
The tree is potentially infinite in depth, but it is finitely branching, since at each point we have only finitely many different choices, hence this strategy necessarily enumerates all solutions. 
Moreover, by \refToTheorem{th:gtNet}, we know that every network has a type, therefore this strategy necessarily finds a solution, that is, it terminates.  Notice that the same network can be typed by a  weakly  balanced global type and a non  weakly  balanced global type (weak balancing is defined in~\refToFigure{fig:wwfConfig}). For instance the hospital network can be typed by the  weakly  balanced global type of~\refToFigure{fig:exvd} and the non  weakly  balanced global types of~\refToExample{ex:gyN}.  \end{rem}



\section{Taming Types}\label{sec:tt}

In this section we tame global types so that the resulting typing enforces the required properties, i.e. Subject Reduction and Progress. We start by 
 characterising  a 
 class of global types for which,  when looking at global types as trees, the first occurrences of players  are at a bounded depth in all paths starting from arbitrary nodes. 
This property, that we call 
{\em boundedness},  is required by the transition Rules \rn{Inside-Out} and \rn{Inside-In}. It is  also needed for Subject Reduction and Progress. We then pass to formalise a 
condition on 
type configurations
 inspired  again  by the ``inside'' rules of the LTS in \refToFig{ltsgtAs}. 
 We require 
 that for each choice of inputs 
 there should be either an output
or a  message  in the queue matching 
one of the choice labels.  We call this property 
{\em weak balancing}, since it only requires matching for inputs. 
Instead, we call  {\em balanced}  a type configuration in which, in addition to the 
weak balancing,  
each output emitted by
the global type and each message in the queue must have an input  reading it in the type. 
For a session $\Nt\parG\Msg$ whose network is typed by a global type $\G$, boundedness of $\G$ 
and weak balancing of $\G\parG\Msg$ suffice to prove Subject Reduction and the properties 
of deadlock freedom and input lock-freedom. 
To prove orphan-message freedom, as expected, we also need balancing of $\G\parG\Msg$.

\subsection{Boundedness}
To formalise boundedness we use $\ipth$ to denote 
a \emph{path} in global type trees, i.e.,   
a possibly infinite sequence of communications  $\commO{\pp}{\la}{\q}$ or $\commI{\pp}{\la}{\q}$. 
With $\ipth_n$ we represent  
the $n$-th communication in the path $\ipth$, where $0\le n < x$ and $x \in \mathbf{N} \cup \set{\omega}$ is the length of $\ipth$.   By 
$\epsilon$ we denote  the empty sequence and by $\cdot$  the concatenation of a finite sequence with a possibly infinite sequence. 
The function ${\sf Paths}$ gives the set of \emph{paths} of global types, which are the greatest sets such that:

\[\begin{split} 
\IPaths{\End} &= \set{\epsilon}  \\ 
\IPaths{\agtb{\pp}{\q}i I{\la}{\G}} &= \bigcup_{i\in I} \set{ \concat{\commIO\pp{\la_i}\q}{\ipth} \mid \ipth\in\IPaths{\G_i} } 
\end{split}\]  
It is handy to define the \emph{depth} of a player $\pp$ in a global type 
$\G$, $\weight(\G,\pp)$.

\begin{defi}[Depth of a  player]\label{def:depth}
Let $\G$ be a global type. 
For ${\ipth\in\IPaths{\G}}$ set $\weight(\ipth,\pp) = \inf \{ n \mid \play{\ipth_n} = \pp \}$, 
and define $\weight(\G,\pp)$, the \emph{depth} of $\pp$ in $\G$, as follows: 
\[\weight(\G,\pp) = \begin{cases} 
1 + \sup \{ \weight(\ipth,\pp) \mid \ipth \in \IPaths{\G} \}&\text{if } \pp \in \plays{\G} \\ 
0 & \text{otherwise} 
\end{cases}
\]
\end{defi}
Note that $\weight(\G,\pp)=0$ iff  $\pp \not\in \plays{\G}$. 
 Moreover, if  $\pp\ne\play{\ipth_n}$ for some path $\ipth$ and all $n\in\mathbf{N}$,  then $\weight(\ipth,\pp) = \inf\, \emptyset = \infty$. 
Hence, if $\pp$ is a player of a global type $\G$, but it does not occur as a player in some path of $\G$, then  $\weight(\G,\pp) = \infty$. 

We can now define the condition we were looking for.
\begin{defi}[Boundedness]\label{def:bound}
A global type $\G$ is \emph{bounded} if  $\weight(\G',\pp)$ is finite
for all participants $\pp\in\plays{\G'}$ and  all types  
$\G'$ which occur in   $\G$. 
\end{defi}
\begin{exa}\label{b} The following example shows the  necessity  of considering all types  occurring in a global type for defining boundedness. 
Consider 
 $\G= \agtoneO \pr\q {\la}{\agtIS\pr\q {\la};\G'} $,  where
\[\G'=\agtSOS \pp\q {\{\Seq{\la_1}{\agtIS \pp\q \la_1;\agtSOS \q\pr {\la_3};\agtIS\q\pr {\la_3}}\, ,\Seq{\la_2}{\Seq{\agtIS \pp\q \la_2}{\G'}}\}}
\] 
 Then we have: 
\[
 \weight(\G,\pr)=1\quad\quad \weight(\G,\pp)=3\quad\quad\weight(\G,\q)=2
\]
whereas 
\[\weight(\G',\pr)=\infty\quad\quad \weight(\G',\pp)=1\quad\quad\weight(\G',\q)=2
\]
\end{exa}
Since global types are regular the boundedness condition is decidable. 

It is easy to check that the LTS of type configurations preserves the boundedness of global types.
\begin{prop}\label{prop:b}
If $ \G$ is bounded and $\G\parG\Msg\stackred{\beta}\G'\parG\Msg'$, then $\G'$ is bounded.
\end{prop}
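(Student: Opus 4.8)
The plan is to argue by structural induction on the derivation of the transition $\G\parG\Msg\stackred{\beta}\G'\parG\Msg'$, using the four rules of \refToFig{ltsgtAs}. Since boundedness is a condition on \emph{all} types occurring in a global type (\refToDef{def:bound}), in each case I must control both the types occurring strictly below the root of $\G'$ and $\G'$ itself.

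For the two ``top'' rules this is immediate and needs no induction hypothesis: in \rulename{Top-Out} and \rulename{Top-In} the type $\G$ is a choice and $\G'$ is one of its branches $\G_h$, so every type occurring in $\G'$ already occurs in $\G$, and boundedness of $\G$ transfers directly to $\G'$.

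The substance is in the two ``inside'' rules. Consider \rulename{Inside-Out}: then $\G=\agtO\pp\q i I\la\G$, $\play\beta\ne\pp$, $\G'=\agtO\pp\q i I\la{\G'}$, and $\G_i\parG\Msg\cdot\mq\pp{\la_i}\q\stackred{\beta}\G'_i\parG\Msg'\cdot\mq\pp{\la_i}\q$ for all $i\in I$. Each $\G_i$ occurs in $\G$, hence is bounded, so by the induction hypothesis applied to these sub-derivations every $\G'_i$ is bounded; this handles all types occurring strictly inside $\G'$. It then remains to prove $\weight(\G',\pr)<\infty$ for every $\pr\in\plays{\G'}$. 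For $\pr=\pp$ this is trivial, since every path of $\G'$ starts with a communication with player $\pp$, so $\weight(\G',\pp)=1$. For $\pr\ne\pp$ I would proceed in three steps: (i) a routine sub-induction on LTS derivations showing that one step never enlarges the player set, so $\pr\in\plays{\G'}$ yields $\pr\in\plays\G$; (ii) reading off from $\IPaths\G=\bigcup_{i\in I}\{\,\commO\pp{\la_i}\q\cdot\ipth\mid\ipth\in\IPaths{\G_i}\,\}$ together with $\weight(\G,\pr)<\infty$ that $\pr$ must be a player of \emph{every} branch $\G_i$ with a uniform bound on its depth there; (iii) showing that performing $\beta$ inside a branch keeps $\weight(\G'_i,\pr)$ finite — here one uses $\play\beta\ne\pp$, that $\beta$ consumes a single communication, that (by the inductive shape of the LTS, cf.\ the remark after \refToFig{ltsgtAs}) $\play\beta$ occurs in every choice traversed while deriving $\G_i\stackred{\beta}\G'_i$, and that $\G$ being bounded forces $\pr$ to reappear at bounded depth on every continuation. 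Combining (i)--(iii) gives a finite bound on $\sup_{i\in I}\weight(\G'_i,\pr)$ and hence on $\weight(\G',\pr)$. The case \rulename{Inside-In} is symmetric, the only difference being that the top prefix of $\G$ is an input whose player is the receiver.

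I expect step (iii) of the inside cases to be the main obstacle: deleting a single communication from a path can in principle move the next occurrence of $\pr$ arbitrarily far down that path, so finiteness of the residual depths is not automatic and genuinely relies on the structural constraints on $\beta$ and on the boundedness hypothesis for $\G$. By comparison, the top cases and the bookkeeping about player sets (step (i)) are straightforward.
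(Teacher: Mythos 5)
Your induction on the transition rules is exactly the paper's (one-line) strategy, and most of your decomposition is sound: the top cases are immediate, step (i) is a routine sub-induction, step (ii) follows from $\weight(\G,\pr)<\infty$, and for $\pr\ne\play\beta$ your step (iii) is essentially the argument the paper later uses for \refToLemItem{w-play-dec}{2} via \refToLem{step-path}. The genuine gap is step (iii) in the remaining case $\pr=\play{\beta}$. Your appeal to ``$\G$ being bounded forces $\pr$ to reappear at bounded depth on every continuation'' is unjustified: boundedness constrains $\weight(\G'',\pr)$ only for subterms $\G''$ with $\pr\in\plays{\G''}$, and after the consumed occurrence of $\beta$ the player $\pr$ may never reappear on that path. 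If this happens in one branch of the outer choice but not in another, then $\pr$ remains in $\plays{\G'}$ while some path of $\G'$ contains no communication played by $\pr$, so $\weight(\G',\pr)=\infty$ and $\G'$ is not bounded.

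Concretely, take $\G=\pp\,\q!\{\Seq{\la_1}{\Seq{\commO\pr{\la'}\ps}{\commO\pr{\la''}\ps}}\,,\ \Seq{\la_2}{\commO\pr{\la'}\ps}\}$ (trailing $\End$ omitted). Every subterm of $\G$ has finite depth for each of its players, so $\G$ is bounded; yet \rulename{Top-Out} in each branch followed by \rulename{Inside-Out} gives $\G\parG\emptyset\stackred{\commO\pr{\la'}\ps}\G'\parG\mq\pr{\la'}\ps$ with $\G'=\pp\,\q!\{\Seq{\la_1}{\commO\pr{\la''}\ps}\,,\ \Seq{\la_2}{\End}\}$, where $\pr\in\plays{\G'}$ but the $\la_2$-path never mentions $\pr$, so $\weight(\G',\pr)=\infty$. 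Hence no argument can close your step (iii) for arbitrary bounded $\G$: one needs an extra invariant excluding this ``mixed'' case, such as the one enforced by the typing rules \rn{Out} and \rn{In}, which make all branches of a choice have the same players up to the player of the choice itself (weak balancing alone does not suffice either, since the example can be padded with the matching inputs). You located the only non-routine point of the proof correctly, but the way you propose to discharge it does not go through.
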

\begin{proof} 
By induction on the transition rules of Figure~\ref{fig:ltsgtAs}. 
\end{proof}



\subsection{Balancing}
The {\em weak balancing predicate} of~\refToFig{wwfConfig}  ensures that at least one of the labels of every input choice 
of the global  type is matched either by a message in the queue or by a preceding
output in the global  type.  
Rule  \rulename{WB-Out} removes an output choice from the type 
and puts the corresponding   message  in the queue. This allows a subsequent input choice to be matched
by one of its output messages.   
Rule  \rulename{WB-In} says that every initial input choice should find one of 
its corresponding messages in the queue.  Notice  the similarity between these two rule and Rules \rn{Inside-Out}, \rn{Inside-In} of \refToFig{ltsgtAs}. 

\begin{exa}\label{ex:wb} 
 \refToFig{wbh} displays the initial part of the weakly balancing derivation for the type configuration $\G\parG\emptyset$, where  $\G$, $\G_1$ and $\G_2$ are defined in the caption of \refToFigure{fig:exvd}. 
 
 \begin{figure}[h]
 \begin{math}
 \begin{array}{c}
\NamedCoRule{\rulename{WB-End}}{}{\tupleOKW{\End}{\emptyset}} {}
\qquad
\NamedCoRule{\rulename{WB-Out}}{\tupleOKW{\G_i}{\addMsg\Msg{\mq\pp{\la_i}\q}}\quad \forall i\in I} 
{\tupleOKW{\agtO{\pp}{\q}i I{\la}{\G}}{\Msg}}
{}
\\[3ex]
\NamedCoRule{\rulename{WB-In}}{\tupleOKW{\G_h}{\Msg}} 
{\tupleOKW{\agtI \pp\q \la \G}{\mq\pp{\la_h}\q\cdot\Msg}}  
{h\in I}
 \end{array}
\end{math}
\caption{Weak balancing  of type configurations.} \label{fig:wwfConfig}
\end{figure}

\begin{figure}[b]
\begin{center}
\prooftree
\prooftree
\prooftree
\prooftree
\begin{array}{c}\vdots\\
\tupleOKW{\G}\emptyset
\end{array}
\Justifies
\tupleOKW{\s\cl?\ok;\G}{\mq\s\ok\cl}
\endprooftree
\quad
\prooftree
\prooftree
\prooftree
\prooftree
\DD
\Justifies
\tupleOKW{\G_1}{\mq\cl\lql\s\cdot\mq\cl\hq\s}
\endprooftree
\Justifies
\tupleOKW{\G}{\mq\cl\lql\s}
\endprooftree
\Justifies
\tupleOKW{\cl\s!\lql;\G}{\emptyset}
\endprooftree
\Justifies
\tupleOKW{\s\cl?\ko;\cl\s!\lql;\G}{\mq\s\ko\cl}
\endprooftree
\Justifies
\tupleOKW{\G_2}{\emptyset}
\endprooftree
\Justifies
\tupleOKW{\G_1}{\mq\cl\hq\s}
\endprooftree
\Justifies
\tupleOKW{\G}{\emptyset}
\endprooftree
\end{center}

\bigskip

\raggedright
\qquad 
where 

\begin{center}
$\DD=$\prooftree
\prooftree
\begin{array}{c}\vdots\\
\tupleOKW{\G}{\mq\cl\hq\s}
\end{array}
\Justifies
\tupleOKW{\s\cl?\ok;\G}{\mq\cl\hq\s\cdot\mq\s\ok\cl}
\endprooftree
\quad
\prooftree
\prooftree
\begin{array}{c}\vdots\\
\tupleOKW{\G}{\mq\cl\hq\s\cdot\mq\cl\lql\s}
\end{array}
\Justifies
\tupleOKW{\cl\s!\lql;\G}{\mq\cl\hq\s}
\endprooftree
\Justifies
\tupleOKW{\s\cl?\ko;\cl\s!\lql;\G}{\mq\cl\hq\s\cdot\mq\s\ko\cl}
\endprooftree
\Justifies
\tupleOKW{\G_2}{\mq\cl\hq\s}
\endprooftree
\end{center}
\caption{Weak balancing of the hospital global type with the empty queue.}\label{fig:wbh}
\end{figure}
\end{exa}

The following proposition shows that reduction of global types preserves weak balancing.

\begin{prop}\label{prop:IO}
If $\tupleOKW{ \G}{\Msg}$ and $\G\parG\Msg\stackred{\beta}\G'\parG\Msg'$, then $\tupleOKW{ \G'}{\Msg'}$.
\end{prop}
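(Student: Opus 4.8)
The plan is to argue by induction on the derivation of the transition $\G\parG\Msg\stackred{\beta}\G'\parG\Msg'$ using the rules of \refToFig{ltsgtAs}, and in each case to invert the (coinductive) weak balancing judgement $\tupleOKW{\G}{\Msg}$ once, according to the rules of \refToFig{wwfConfig}, so as to extract exactly the data needed to build a derivation of $\tupleOKW{\G'}{\Msg'}$.

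In the base cases the needed witness is already among the premises obtained by inversion. If the last rule is \rulename{Top-Out}, then $\G=\agtO{\pp}{\q}i I{\la}{\G}$, $\beta=\CommAs\pp{\la_h}\q$, $\G'=\G_h$ and $\Msg'=\addMsg\Msg{\mq\pp{\la_h}\q}$ for some $h\in I$; inverting \rulename{WB-Out} on $\tupleOKW{\G}{\Msg}$ gives $\tupleOKW{\G_i}{\addMsg\Msg{\mq\pp{\la_i}\q}}$ for all $i\in I$, and instantiating at $i=h$ is precisely $\tupleOKW{\G'}{\Msg'}$. If the last rule is \rulename{Top-In}, then $\G=\agtI \pp\q \la \G$, $\Msg\equiv\addMsg{\mq\pp{\la_h}\q}{\Msg'}$, $\beta=\CommAsI\pp{\la_h}\q$ and $\G'=\G_h$ for some $h\in I$; inverting \rulename{WB-In} yields $\Msg\equiv\addMsg{\mq\pp{\la_{h'}}\q}{\Msg''}$ with $h'\in I$ and $\tupleOKW{\G_{h'}}{\Msg''}$. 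Since the first pending message on a channel $\pp\to\q$ is invariant under the structural equivalence on queues, we get $\la_{h'}=\la_h$, hence $h'=h$ (labels in a choice are pairwise distinct) and $\Msg''\equiv\Msg'$, so $\tupleOKW{\G'}{\Msg'}$.

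In the inductive cases one inverts the matching weak balancing rule, applies the induction hypothesis branchwise, and re-applies the same rule. If the last rule is \rulename{Inside-Out}, then $\G=\agtO{\pp}{\q}i I{\la}{\G}$, $\G'=\agtO{\pp}{\q}i I{\la}{\G'}$, $\pp\ne\play\beta$, and for each $i\in I$ a strict subderivation gives $\G_i\parG\addMsg\Msg{\mq\pp{\la_i}\q}\stackred{\beta}\G'_i\parG\addMsg{\Msg'}{\mq\pp{\la_i}\q}$; inverting \rulename{WB-Out} gives $\tupleOKW{\G_i}{\addMsg\Msg{\mq\pp{\la_i}\q}}$, hence by the induction hypothesis $\tupleOKW{\G'_i}{\addMsg{\Msg'}{\mq\pp{\la_i}\q}}$ for all $i\in I$, and \rulename{WB-Out} then derives $\tupleOKW{\G'}{\Msg'}$. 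If the last rule is \rulename{Inside-In}, then $\G=\agtI \pp\q \la \G$, $\Msg\equiv\addMsg{\mq\pp{\la_h}\q}{\Msg_0}$ with $h\in I$, $\G'=\agtI \pp\q \la {\G'}$, $\q\ne\play\beta$, $\Msg'\equiv\addMsg{\mq\pp{\la_h}\q}{\Msg'_0}$, and strict subderivations give $\G_i\parG\Msg_0\stackred{\beta}\G'_i\parG\Msg'_0$ for all $i\in I$; inverting \rulename{WB-In}, and using again that $\la_h$ is the first pending message on $\pp\to\q$ in $\Msg$, gives $\tupleOKW{\G_h}{\Msg_0}$, hence by the induction hypothesis $\tupleOKW{\G'_h}{\Msg'_0}$, and since $\la_h$ is still the first such message in $\Msg'$, \rulename{WB-In} re-derives $\tupleOKW{\G'}{\Msg'}$.

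The only delicate point is this handling of queues modulo structural equivalence in the two input cases: one has to note that ``the first pending message on channel $\pp\to\q$'' is a well defined notion on equivalence classes of queues, so that the branch selected by \rulename{WB-In} is forced to coincide with the branch fired by \rulename{Top-In} (respectively, with the branch carrying the head message retained by \rulename{Inside-In}). Apart from this, no genuine coinductive reasoning is required: a single unfolding of the weak balancing judgement per transition step suffices, because in the inductive cases the induction hypothesis already supplies the weak balancing witnesses for the reducts of the branches, on top of which exactly one weak balancing rule is applied.
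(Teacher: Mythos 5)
Your proof is correct and follows exactly the route the paper takes: its proof of this proposition is simply ``by induction on the transition rules of Figure~\ref{fig:ltsgtAs}'', and your case analysis (inversion of the weak balancing rules, the observation that the first pending message on a fixed channel is invariant under $\equiv$ so the branch chosen by \rulename{WB-In} matches the one fired by the LTS, and the remark that one unfolding of the coinductive judgement per transition step suffices) is a faithful and accurate elaboration of that induction.
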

\begin{proof} 
By induction on the transition rules of Figure~\ref{fig:ltsgtAs}.
\end{proof}

With boundedness of global types and weak balancing of type configurations we can prove 
the expected Subject Reduction Theorem.

\begin{thm}[Subject Reduction]\label{thm:srA}
If $\tyn{\G}\Nt$ and $\Nt\parN\Msg\stackred\beta\Nt'\parN\Msg'$ and $\G$ is bounded and $\tupleOKW\G\Msg$, then $\G\parG\Msg\stackred\beta\G'\parG\Msg'$ and $ \tyn{\G'}{\Nt'}$.
\end{thm}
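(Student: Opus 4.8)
The plan is to set $\pp=\play\beta$ and to argue by induction on the depth $\weight(\G,\pp)$, which is a well-defined natural number: since the network step $\Nt\parN\Msg\stackred\beta\Nt'\parN\Msg'$ instantiates \rulename{Send} or \rulename{Rcv}, $\pp$ is the participant whose process is consumed, so $\pp\in\plays\Nt$, whence $\pp\in\plays\G$ by \refToLemma{ep}, and $\weight(\G,\pp)$ is finite because $\G$ is bounded. I would then invert $\tyn\G\Nt$ --- which must end with \rn{Out} or \rn{In} according to the top constructor of $\G$ --- and split on whether $\play\G=\pp$ (base case, $\weight(\G,\pp)=1$) or $\play\G=\pr\ne\pp$ (inductive case).

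In the base case $\play\G=\pp$, inversion forces $\pp$'s process to have the same polarity as the top of $\G$, so the kind of network step agrees with the top constructor. If $\beta$ is a \rulename{Send}, then $\G=\agtO\pp\q{i}{I}\la\G$ with exactly the index set of $\pp$'s internal choice (this is why \rn{Out} demands equal outputs), so \rulename{Top-Out} fires for the same $h$ and the corresponding premise of \rn{Out} is already $\tyn{\G_h}{\Nt'}$. If $\beta$ is a \rulename{Rcv} consuming $\mq\q{\la_h}\pp$ from the head of channel $\q\pp$, then $\G=\agtI\q\pp\la\G$ with index set $I$ contained in that of $\pp$'s external choice; here I would invert $\tupleOKW\G\Msg$, which can only come from \rulename{WB-In}, forcing the head of channel $\q\pp$ to carry some $\la_{h'}$ with $h'\in I$; since that head label is unique, $h'=h$, so $h\in I$, \rulename{Top-In} fires, and the premise of \rn{In} gives $\tyn{\G_h}{\Nt'}$.

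In the inductive case $\play\G=\pr\ne\pp$ we have $\Nt\equiv\pP\pr\Q\parN\Nt_0$ with $\Q$ a choice of $\pr$, and the step, performed by a participant other than $\pr$, neither rewrites $\pr$'s process nor depends on it, hence is enabled in every branch network. If $\G=\agtO\pr\ps{i}{I}\la\G$, I would mirror the step with \rulename{Inside-Out}: for each $i$ the step is still available from $\pP\pr{\PP_i}\parN\Nt_0$ with queue $\Msg\cdot\mq\pr{\la_i}\ps$ (a message appended at the tail cannot disable \rulename{Send} or \rulename{Rcv}), $\G_i$ occurs in $\G$ hence is bounded, $\tupleOKW{\G_i}{\Msg\cdot\mq\pr{\la_i}\ps}$ holds by inverting \rulename{WB-Out} (which yields it for all $i$), and $\weight(\G_i,\pp)<\weight(\G,\pp)$ since $\weight(\G,\pp)=1+\sup_{i\in I}\weight(\G_i,\pp)$; the induction hypothesis then gives $\G_i\parG(\Msg\cdot\mq\pr{\la_i}\ps)\stackred\beta\G_i'\parG(\Msg'\cdot\mq\pr{\la_i}\ps)$ and $\tyn{\G_i'}{\pP\pr{\PP_i}\parN\Nt_0'}$, and \rulename{Inside-Out} (using $\pr=\play\G\ne\pp=\play\beta$) followed by a re-application of \rn{Out}, whose side condition on players comes from \refToLemma{ep}, closes the case. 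The case $\G=\agtI\ps\pr\la\G$ is handled analogously with \rulename{Inside-In}, after inverting $\tupleOKW\G\Msg$ via \rulename{WB-In} to expose the head message $\mq\ps{\la_h}\pr$ and the weak balancing of $\G_h$ with the shortened queue.

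The genuine difficulty I expect is this last subcase. \rulename{Inside-In} requires \emph{every} branch $\G_i$ to simulate $\beta$ from the shortened queue and to end in one common residual queue, whereas \rulename{WB-In} only delivers weak balancing --- hence the induction hypothesis --- for the single branch $\G_h$ picked out by the head of channel $\ps\pr$. The heart of the argument is therefore to show that the other branches mirror $\beta$ too: each $\G_i$ types a network whose $\pp$-component is the one performing $\beta$, so boundedness lets us descend to $\pp$'s action inside $\G_i$ and the typing rules force it to have the expected polarity, while the queue-shape side conditions of the ``inside'' rules keep that descent synchronised past the intermediate input choices; arranging this descent to be well-defined for all branches simultaneously, with a single residual queue, is the delicate bookkeeping. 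The rest is routine: boundedness propagates along reductions by \refToProp{prop:b} and weak balancing by \refToProp{prop:IO}.
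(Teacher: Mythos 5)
Your overall strategy --- induction on $\weight(\G,\pp)$ for $\pp=\play\beta$, with finiteness of the depth coming from boundedness and \refToLemma{ep}, the base case closed by \rn{Top-Out}/\rn{Top-In} with \rn{WB-In} supplying $h\in I$, and the inductive case closed by the ``inside'' rules after applying the induction hypothesis to every branch --- is exactly the paper's, and your treatment of the base case and of the \rn{Inside-Out} subcase coincides with the paper's in all details (the network step survives appending a message at the tail, \rn{WB-Out} re-establishes weak balancing for every branch, the depth strictly decreases, and \refToLemma{ep} restores the player side condition needed to re-apply \rn{Out}).

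The difficulty you flag in the \rn{Inside-In} subcase is real, but your proposal does not close it, and the resolution you sketch would fail. To invoke the induction hypothesis on a branch $\G_i$ you need $\tupleOKW{\G_i}{\Msg''}$, where $\Msg''$ is the queue with the head message of the top channel removed, and \rn{WB-In} supplies this only for the single branch $\G_h$ selected by that head message. For $i\ne h$ your fallback is that the typing rules force $\pp$'s action inside $\G_i$ to have the expected polarity; polarity is indeed forced, but the \emph{label set} is not: Rule \rn{In} allows the input choice in the global type to be a strict subset of the one in the process, so when $\beta$ is an input of $\pp$, a branch $\G_i$ with $i\neq h$ may reach an input choice of $\pp$ whose labels do not include the label $\pp$ actually reads, and then neither \rn{Top-In} nor \rn{Inside-In} fires for that branch --- exactly the situation that weak balancing excludes for $\G_h$ but not for the others. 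Hence the ``synchronised descent through all branches with a single residual queue'' cannot be derived from the hypotheses you have available, and the case remains open. (The paper itself dispatches this subcase with ``the proof is similar'' and does not supply the missing premise either; still, as written your argument is incomplete at precisely this point, and it is the only point where it is.)
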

\begin{proof} The proof is by induction on $d=\weight(\G,\pp)$ where $\pp=\play\beta$. Notice that $\Nt\parN\Msg\stackred\beta\Nt'\parN\Msg'$ implies $\pp\in\plays\Nt$, which with  $\tyn{\G}\Nt$  gives  $\pp\in\plays\G$ by \refToLemma{ep}. 
Then $d> 0$. Moreover $d$ is finite since  $\G$ is bounded.\\
{\em Case $d=1$.} If $\beta=\q\pp?\la$, then $\Nt\equiv\pP\pp{\inp{\q}{i}{I}{\la}{\PP}}\parN\Nt_0$ and $\Nt'=\pP\pp{\PP_h}\parN\Nt_0$ and $\Msg\equiv\mq\q\la\pp\cdot\Msg'$ and $\la=\la_h$ for some  $h\in I$. From $d=1$ we get $\G=\agtJ\q\pp\la\G$ and  $\tyn{\G_j}{\pP\pp{\PP_j}\parN\Nt_0}$ for all $j \in J$ with $J\subseteq I$. From $\Msg\equiv\mq\q\la\pp\cdot\Msg'$ and $\tupleOKW\G\Msg$ we get $h\in J$. We conclude $\G\parG\Msg\stackred\beta\G_h\parG\Msg'$. If $\beta=\pp\q!\la$ the proof is similar and simpler.\\
{\em Case $d>1$.}Then either  $\G=\agtO{\pr}{\q}i I{\la}{\G}$ or $\G=\agtI{\q}{\pr}{\la}{\G}$ with $\pr\ne\pp=\play\beta$. Let $\G=\agtO{\pr}{\q}i I{\la}{\G}$, the proof for $\G=\agtI{\q}{\pr}{\la}{\G}$ is similar. Since $\tyn{\G}\Nt$ must be derived using Rule \rn{Out} we get $\Nt\equiv\pP\pr{\oup{\q}{i}{I}{\la}{\R}}\parN\Nt_0$ and $ \tyn{\G_i}{\pP\pr{\R_i}\parN\Nt_0}$ for all $i\in I$. The condition $\pr\ne\play{\asCom}$ ensures that the reduction $\Nt\parN\Msg\stackred\beta\Nt'\parN\Msg'$ does not modify the process of participant $\pr$, then we get $\Nt'\equiv\pP\pr{\oup{\q}{i}{I}{\la}{\R}}\parN\Nt'_0$.
  Moreover,  the reduction can be done also if messages are added 
at the end of the queue. 
Therefore 
\[
\pP\pr{\R_i}\parN\Nt_0\parG\Msg\cdot\mq\pr{\la_i}\q\stackred\beta  \pP\pr{\R_i}\parN\Nt_0'\parG\Msg'\cdot\mq\pr{\la_i}\q
\]
for all $i \in I$. 
From $\G$ bounded we have 
$\G_i$ bounded and from $\tupleOKW\G\Msg$ we have 
$\tupleOKW{\G_i}{\Msg\cdot\mq\pr{\la_i}\q}$ by Rule \rulename{WB-Out}   for all $i \in I$. Since $\weight(\G_i,\pp)< d$  for all $i \in I$, by induction we get $\G_i\parG\Msg\cdot\mq\pr{\la_i}\q\stackred\beta\G_i'\parG\Msg'\cdot\mq\pr{\la_i}\q$ and $ \tyn{\G'_i}{\pP\pr{\R_i}\parN\Nt'_0}$  for some $\G'_i$ and  for all $i\in I$. We  get $\G\parG\Msg\stackred\beta\agtO{\pr}{\q}i I{\la}{\G'}\parG\Msg'$ using Rule \rulename{Inside-Out}. 
By \refToLemma{ep} $\tyn{\G'_i}{\pP\pr{\R_i}\parN\Nt_0'}$ implies $\plays{\G'_i}\setminus\set\pr=\plays{\Nt'_0}$ for all $i\in I$. Then we can derive $\tyn{\agtO{\pr}{\q}i I{\la}{\G'}}{\Nt'}$ using Rule \rulename{Out}. 
\end{proof}

Deadlock-freedom easily follows from Subject Reduction and Session Fidelity.

\begin{thm}[Deadlock-freedom]\label{thm:df}
If $ \tyn{\G}\Nt$  and $\G$ is bounded and $\tupleOKW\G\Msg$, then $\Nt\parN\Msg$ is deadlock free.  
\end{thm}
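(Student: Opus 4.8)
The plan is to show that a session typed by a bounded, weakly balanced global type can never be stuck, and that this is stable under reduction. The key auxiliary observation is a progress property at the level of type configurations: \emph{if $\tupleOKW\G\Msg$ and $\G\ne\End$, then $\G\parG\Msg\stackred\beta$ for some $\beta$.} This is immediate by cases on the top constructor of $\G$. If $\G=\agtO\pp\q i I\la\G$, Rule \rulename{Top-Out} fires for any $h\in I$ (and $I\ne\emptyset$). If $\G=\agtI\pp\q\la\G$, then $\tupleOKW\G\Msg$ can only be obtained by Rule \rulename{WB-In}, whence $\Msg\equiv\mq\pp{\la_h}\q\cdot\Msg''$ for some $h\in I$, so Rule \rulename{Top-In} fires. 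Note that boundedness plays no role in this step.

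Next I would lift this to sessions. Assume $\tyn\G\Nt$ with $\G$ bounded and $\tupleOKW\G\Msg$. If $\G=\End$, then $\tupleOKW\End\Msg$ forces $\Msg=\emptyset$ (only Rule \rulename{WB-End} applies) and \refToLemma{ep} gives $\plays\Nt=\plays\End=\emptyset$, so $\Nt\equiv\pP\pp\inact$; hence $\Nt\parN\Msg$ is terminated. Otherwise $\G\ne\End$, so by the observation above $\G\parG\Msg\stackred\beta$ for some $\beta$, and Session Fidelity (\refToTheorem{thm:sfA}) yields $\Nt\parN\Msg\stackred\beta$, i.e.\ $\Nt\parN\Msg$ is live. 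Thus any session satisfying the three hypotheses is terminated or live, in particular not deadlocked.

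It remains to propagate the three hypotheses along computations. If $\Nt\parN\Msg\stackred\beta\Nt'\parN\Msg'$, then Subject Reduction (\refToTheorem{thm:srA}) provides $\G'$ with $\G\parG\Msg\stackred\beta\G'\parG\Msg'$ and $\tyn{\G'}{\Nt'}$, \refToProp{prop:b} gives that $\G'$ is bounded, and \refToProp{prop:IO} gives $\tupleOKW{\G'}{\Msg'}$. Iterating along any finite run $\Nt\parN\Msg\stackred{\beta_1}\cdots\stackred{\beta_n}\Nt_n\parN\Msg_n$, every derivative $\Nt_n\parN\Msg_n$ is again typed by a bounded, weakly balanced global type with the matching queue condition, hence by the previous paragraph is terminated or live. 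Since every lockstep derivative is in particular a derivative, every lockstep derivative of $\Nt\parN\Msg$ is terminated or live, hence not deadlocked, which is exactly deadlock-freedom. The proof is essentially an assembly of results already established; the only thing to be slightly careful about is that the hypotheses must be preserved by each single communication step making up a lockstep step, rather than merely by whole lockstep steps, but this is immediate from the three results just cited (and it is also why boundedness of $\G$ has to be carried in the invariant even though it is never used directly to unstick a configuration). Accordingly I do not expect any genuine obstacle.
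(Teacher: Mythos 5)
Your proposal is correct and follows essentially the same route as the paper: reduce to deadlock-freedom of the type configuration via Subject Reduction and Session Fidelity, observe that weak balancing guarantees that Rule \rulename{Top-Out} or \rulename{Top-In} applies whenever $\G\ne\End$, and propagate the hypotheses along reductions using \refToProp{prop:b} and \refToProp{prop:IO}. You merely spell out details (the $\End$/terminated case, the lockstep-derivative remark) that the paper's two-sentence proof leaves implicit.
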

\begin{proof} By Subject Reduction (\refToTheorem{thm:srA}) and Session Fidelity (\refToTheorem{thm:sfA}) it is enough to show that $\G\parN\Msg$ is deadlock-free. One of the two Rules \rn{Top-Out} and \rn{Top-In} is applicable whenever $\tupleOKW\G\Msg$ holds and $\G$ is not $\End$. 
\end{proof} 

Our next goal is to show that no player offering a choice of inputs can wait forever. 
To this aim we prove Subject Reduction for the lockstep transition relation. We dub  this result  Strong Subject Reduction.

\begin{thm}[Strong Subject Reduction] \label{thm:ssr}
 If $ \tyn{\G}\Nt$  and $\G$ is bounded and $\tupleOKW\G\Msg$ and $\Nt\parG\Msg \lockred{\Delta} \Nt'\parG\Msg'$, then $\G\parN\Msg \lockred{\Delta} \G'\parN\Msg'$
  and $\tyn{\G'}{\Nt'}$.  
\end{thm}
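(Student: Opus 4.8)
The plan is to bootstrap the lockstep statement from the single‑step Subject Reduction Theorem (\refToTheorem{thm:srA}), Session Fidelity (\refToTheorem{thm:sfA}), and the two preservation results \refToProp{prop:b} and \refToProp{prop:IO}. By definition of $\Nt\parN\Msg\lockred{\Delta}\Nt'\parN\Msg'$ there is an enumeration $\Delta=\set{\beta_1,\ldots,\beta_n}$ which is a maximal coherent set for $\Nt\parN\Msg$, together with a run $\Nt\parN\Msg=\Nt_0\parN\Msg_0\stackred{\beta_1}\Nt_1\parN\Msg_1\stackred{\beta_2}\cdots\stackred{\beta_n}\Nt_n\parN\Msg_n=\Nt'\parN\Msg'$. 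First I would establish an \emph{iterated Subject Reduction} claim: if $\tyn\G\Nt$, $\G$ is bounded, $\tupleOKW\G\Msg$, and $\Nt\parN\Msg\stackred{\beta_1}\cdots\stackred{\beta_n}\Nt'\parN\Msg'$, then there is $\G'$ with $\G\parN\Msg\stackred{\beta_1}\cdots\stackred{\beta_n}\G'\parN\Msg'$, $\tyn{\G'}{\Nt'}$, $\G'$ bounded and $\tupleOKW{\G'}{\Msg'}$. This is proved by induction on $n$: the case $n=0$ is trivial; for the inductive step apply \refToTheorem{thm:srA} to the first step $\Nt\parN\Msg\stackred{\beta_1}\Nt_1\parN\Msg_1$ to get $\G\parN\Msg\stackred{\beta_1}\G_1\parN\Msg_1$ with $\tyn{\G_1}{\Nt_1}$, then use \refToProp{prop:b} to obtain that $\G_1$ is bounded and \refToProp{prop:IO} to obtain $\tupleOKW{\G_1}{\Msg_1}$; the induction hypothesis applied to the remaining $n-1$ steps, and composition of runs, finishes the case. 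Note that \refToTheorem{thm:srA} transfers each step with the \emph{same} label and the \emph{same} resulting queue, so the final queue $\Msg'$ coincides on both sides.

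Applying this claim to the run underlying the given lockstep transition yields $\G'$ with $\G\parN\Msg\stackred{\beta_1}\cdots\stackred{\beta_n}\G'\parN\Msg'$ and $\tyn{\G'}{\Nt'}$, which is already the second conjunct of the thesis. It remains to show that $\Delta$ is a \emph{maximal} coherent set for $\G\parN\Msg$, so that the run above is a lockstep transition $\G\parN\Msg\lockred{\Delta}\G'\parN\Msg'$. For coherence, condition (1) (pairwise distinct players) is a property of the set $\Delta$ alone and is thus inherited from its coherence for $\Nt\parN\Msg$; condition (2) asks $\G\parN\Msg\stackred{\beta_i}$ for every $i$, which follows from $\Nt\parN\Msg\stackred{\beta_i}$ (a consequence of coherence of $\Delta$ for $\Nt\parN\Msg$) by one application of \refToTheorem{thm:srA}. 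For maximality, suppose towards a contradiction that $\Delta\cup\set\beta$ is coherent for $\G\parN\Msg$ for some $\beta\notin\Delta$. Then $\play\beta\ne\play{\beta_i}$ for all $i$ and $\G\parN\Msg\stackred{\beta}$, hence $\Nt\parN\Msg\stackred{\beta}$ by Session Fidelity (\refToTheorem{thm:sfA}); therefore $\Delta\cup\set\beta$ is coherent for $\Nt\parN\Msg$ as well, contradicting the maximality of $\Delta$ for $\Nt\parN\Msg$. Thus $\Delta$ is maximal coherent for $\G\parN\Msg$, and we conclude $\G\parN\Msg\lockred{\Delta}\G'\parN\Msg'$ with $\tyn{\G'}{\Nt'}$.

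The substantive work is already done by \refToTheorem{thm:srA} and \refToTheorem{thm:sfA}; what remains is essentially bookkeeping, and the two points that require care are: (a) threading the three invariants through the successive single steps, namely typing via \refToTheorem{thm:srA}, boundedness via \refToProp{prop:b}, and weak balancing via \refToProp{prop:IO}; and (b) the directions of the two transfer theorems in the maximality argument, where moving a firing from the network to the global type uses Subject Reduction (so it relies on boundedness and weak balancing), while moving a firing from the global type back to the network uses the unconditional Session Fidelity, and it is exactly the conjunction of these two inclusions that forces the maximal coherent sets of $\Nt\parN\Msg$ and of $\G\parN\Msg$ to coincide. I do not foresee any genuine obstacle beyond this: no new inductive measure is needed, since the induction is just on the length $n$ of the lockstep run, which is finite.
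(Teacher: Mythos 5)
Your proposal is correct and follows essentially the same route as the paper's proof: single-step Subject Reduction iterated along the run (with boundedness and weak balancing threaded through via Propositions~\ref{prop:b} and~\ref{prop:IO}) to obtain the transfer of the run and the typing of $\G'$, Subject Reduction again for coherence of $\Delta$ on the type side, and Session Fidelity for maximality. The only cosmetic difference is that the paper phrases maximality by showing any coherent superset $\Delta'\supseteq\Delta$ of $\G\parG\Msg$ collapses to $\Delta$, whereas you argue by contradiction with a single extra communication $\beta$; these are equivalent.
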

\begin{proof} 
Let $\Delta = \{\co_1,\ldots,\co_n\}$. 
By definition of $\lockred\Delta$,  the set  $\Delta$ is coherent for $\Nt\parN\Msg$, that is, we know that, for all $i\in 1..n$, 
$\Nt\parN\Msg \stackred{\co_i}$, hence, by \refToThm{srA}, we get, for all $i\in 1..n$, 
$\G\parG\Msg \stackred{\co_i}$, namely $\Delta$ is coherent for $\G\parG\Msg$. 
Consider a coherent set $\Delta'$ for $\G\parG\Msg$ such that $\Delta\subseteq \Delta'$, then for all  $\co\in\Delta'$, 
by Session Fidelity (\refToThm{sfA}), we get $\Nt\parN\Msg \stackred{\co}$. 
Hence $\Delta'$ is coherent for $\Nt\parN\Msg$ and, 
since $\Delta$ is a maximal coherent set for $\Nt\parN\Msg$ again by definition of $\lockred\Delta$, we get $\Delta = \Delta'$, that is, 
$\Delta$ is a maximal coherent set for $\G\parG\Msg$.

Finally, since $\Nt\parN\Msg\lockred\Delta \Nt'\parN\Msg'$, we know that, for all $i\in 1..n$, 
$\Nt_{i-1}\parN\Msg_{i-1} \stackred{\co_i} \Nt_i\parN\Msg_i$, with $\Nt_0 = \Nt$, $\Msg_0 = \Msg$, $\Nt_n = \Nt'$ and $\Msg_n = \Msg'$.
By Subject Reduction (\refToThm{srA}) and Propositions~\ref{prop:b}, \ref{prop:IO}, we get, for all $i\in 1..n$, 
$\G_{i-1}\parG\Msg_{i-1}\stackred{\co_i} \G_i\parG\Msg_i$ with $\G_0 = \G$, $\tupleOKW{\G_i}{\Msg_i}$  and $\tyn{\G_i}{\Nt_i}$. 
 Hence,  setting $\G' = \G_n$, we get 
$\G\parG\Msg \lockred{\Delta} \G'\parG\Msg'$ and $\tyn{\G'}{\Nt'}$.  
\end{proof}

We need a lemma showing that a communication labelling a transition of a type configuration with global type $\G$ must occur in all  the paths of  the tree representing $\G$. 

\begin{lem}\label{lem:step-path}
If $\G\parG\Msg \stackred{\co}\G'\parG\Msg'$, then for all paths 
$\ipth \in \IPaths{\G'}$, $\ipth = \ipth_1\cdot\ipth_2$ and $\ipth_1\cdot\co\cdot\ipth_2\in\IPaths{\G}$, 
for some $\ipth_1,\ipth_2$. 
\end{lem}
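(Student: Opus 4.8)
The plan is to argue by induction on the derivation of the transition $\G\parG\Msg \stackred{\co}\G'\parG\Msg'$ using the four rules of \refToFig{ltsgtAs}, the two top-level rules giving the base cases and the two ``inside'' rules the inductive step, where the induction hypothesis applied to the premises does the real work.

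First I would treat \rn{Top-Out} (the case \rn{Top-In} being symmetric). Here $\G = \agtO{\pp}{\q}i I{\la}{\G}$, $\G' = \G_h$ for some $h\in I$, and $\co = \commO{\pp}{\la_h}{\q}$. Given $\ipth\in\IPaths{\G'}=\IPaths{\G_h}$, I take $\ipth_1 = \epsilon$ and $\ipth_2 = \ipth$, so that $\ipth_1\cdot\co\cdot\ipth_2 = \commO{\pp}{\la_h}{\q}\cdot\ipth$; this lies in $\IPaths{\G}$ by the defining equation of $\mathsf{Paths}$ for output choices, since $h\in I$ and $\ipth\in\IPaths{\G_h}$. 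The case \rn{Top-In} is identical with an input communication.

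Next I would treat \rn{Inside-Out} (the case \rn{Inside-In} being symmetric). Here $\G = \agtO{\pp}{\q}i I{\la}{\G}$ and $\G' = \agtO{\pp}{\q}i I{\la}{\G'}$, derived from premises $\G_i\parG \Msg\cdot\mq\pp{\la_i}\q \stackred{\co}\G'_i\parG \Msg'\cdot\mq\pp{\la_i}\q$ for all $i\in I$. A path $\ipth\in\IPaths{\G'}$ has, by the defining equation of $\mathsf{Paths}$, the form $\ipth = \commO{\pp}{\la_k}{\q}\cdot\ipthz$ for some $k\in I$ and $\ipthz\in\IPaths{\G'_k}$. Applying the induction hypothesis to the $k$-th premise and the path $\ipthz$ yields a splitting $\ipthz = \ipthz_1\cdot\ipthz_2$ with $\ipthz_1\cdot\co\cdot\ipthz_2\in\IPaths{\G_k}$. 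I then set $\ipth_1 = \commO{\pp}{\la_k}{\q}\cdot\ipthz_1$ and $\ipth_2 = \ipthz_2$: clearly $\ipth_1\cdot\ipth_2 = \ipth$, and $\ipth_1\cdot\co\cdot\ipth_2 = \commO{\pp}{\la_k}{\q}\cdot(\ipthz_1\cdot\co\cdot\ipthz_2)$, which lies in $\IPaths{\G}$ again by the defining equation for output choices. The case \rn{Inside-In} is the same, with $\commI{\pp}{\la_k}{\q}$ as the common prefix; the queue bookkeeping appearing in the premises is irrelevant to the statement.

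The argument is essentially mechanical, so there is no substantial obstacle; the only point requiring care is to keep the concatenations correctly parenthesised in the inside cases, ensuring that the prefix communication $\commO{\pp}{\la_k}{\q}$ (resp.\ $\commI{\pp}{\la_k}{\q}$) — which is shared by the paths of $\G$ and of $\G'$ — ends up inside $\ipth_1$, while the split supplied by the induction hypothesis is performed strictly on the tail $\ipthz$. The side conditions $\pp\ne\play\co$ and $\q\ne\play\co$ of the inside rules are not needed for this lemma and can be ignored.
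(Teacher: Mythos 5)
Your proof is correct and follows exactly the route the paper takes: the paper's proof is a one-line ``by induction on the rules defining the transition relation'', and your case analysis (top-level rules as base cases, inside rules handled by splitting off the common prefix communication and applying the induction hypothesis to the tail) is precisely the intended argument, just written out in full.
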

\begin{proof}
By induction on  the  rules defining the transition relation,  see \refToFigure{fig:ltsgtAs}. 
\end{proof}

A last lemma before establishing the property of  ``input-enabling'' relates the depth of players in global types with the reductions of the corresponding type configurations.

\begin{lem}\label{lem:w-play-dec}
Let $\G$ be bounded and  $\tupleOKW\G\Msg$  and $\pp \in \plays\G$, then the following hold:
\begin{enumerate}
\item\label{lem:w-play-dec:1} If $\weight(\G,\pp) = 1$, then $\G\parG\Msg\stackred{\co}\G'\parG\Msg'$, with $\play{\co} = \pp$. 
\item\label{lem:w-play-dec:2} If $\G\parG\Msg\stackred{\co}\G'\parG\Msg'$, with  $\pp\ne \play{\co}$, then  $\weight(\G',\pp)\le\weight(\G,\pp)$. 
\item\label{lem:w-play-dec:3} If $\G\parG\Msg\lockred{\Delta}\G'\parG\Msg'$, with   $\pp\ne \play{\co}$ for all $\co\in\Delta$,  then $\weight(\G',\pp) < \weight(\G,\pp)$. 
\end{enumerate}
\end{lem}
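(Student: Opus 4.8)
The plan is to treat the three items in the order stated, since the third will rely on the first two, and to work almost entirely with the \emph{paths} of global types.

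For the first item, the key remark is that $\weight(\G,\pp)=1$ forces $\play\G=\pp$: every path of $\G$ begins with the head communication of $\G$, so if its player were not $\pp$ every path would have depth at least $1$ at $\pp$. Hence $\G$ is either $\agtO\pp\q i I\la\G$ or $\agtI\q\pp\la\G$. In the output case Rule \rulename{Top-Out} fires for any $h\in I$, and $\play{\CommAs\pp{\la_h}\q}=\pp$. In the input case the derivation of $\tupleOKW\G\Msg$ can only begin with Rule \rulename{WB-In}, which forces $\Msg\equiv\addMsg{\mq\q{\la_h}\pp}{\Msg'}$ for some $h\in I$; then Rule \rulename{Top-In} fires, again with player $\pp$.

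For the second item I would argue purely on paths. If $\pp\notin\plays{\G'}$ there is nothing to prove, so assume $\pp\in\plays{\G'}$ and fix $\ipth'\in\IPaths{\G'}$. By \refToLemma{lem:step-path} there are $\ipth_1,\ipth_2$ with $\ipth'=\ipth_1\cdot\ipth_2$ and $\ipth_1\cdot\co\cdot\ipth_2\in\IPaths\G$. Since $\play\co\ne\pp$, inserting $\co$ never moves the first occurrence of $\pp$ to an earlier index, so $\weight(\ipth',\pp)\le\weight(\ipth_1\cdot\co\cdot\ipth_2,\pp)\le\weight(\G,\pp)-1$; taking the supremum over $\ipth'$ gives $\weight(\G',\pp)=1+\sup_{\ipth'}\weight(\ipth',\pp)\le\weight(\G,\pp)$.

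For the third item, write the lockstep as $\G\parG\Msg=\G_0\parG\Msg_0\stackred{\co_1}\cdots\stackred{\co_n}\G_n\parG\Msg_n=\G'\parG\Msg'$ with $\Delta=\set{\co_1,\dots,\co_n}$ maximal coherent and $\play{\co_i}\ne\pp$ for all $i$. If $\pp\notin\plays{\G_j}$ for some $j\le n$, then by \refToLemma{lem:step-path} $\pp$ is absent from all later $\G_k$ as well, so $\weight(\G',\pp)=0<\weight(\G,\pp)$ and we are done; hence assume $\pp\in\plays{\G_j}$ for every $j$, so that Propositions~\ref{prop:b} and~\ref{prop:IO} let us apply the second item along every sub-chain. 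First, $\weight(\G,\pp)\ge2$: if it were $1$, the first item would produce a transition of $\G_0\parG\Msg_0$ with player $\pp$, which by maximality of $\Delta$ and the hypothesis that no $\co_i$ has player $\pp$ could be added to $\Delta$ — a contradiction. Next let $\ub=\play\G$ (so $\ub\ne\pp$); arguing as in the first item, $\ub$ can move from $\G_0$ (via \rulename{Top-Out}, or via \rulename{WB-In} and \rulename{Top-In}), so maximality and coherence of $\Delta$ give a unique index $m$ with $\play{\co_m}=\ub$. Since the only transitions of a type whose root player is $\ub$ and whose label has a player different from $\ub$ are the ``inside'' ones, which keep the head constructor, each $\G_j$ with $j<m$ still has root $\ub$; therefore $\co_m$ must consume that head, so $\G_m$ is the $h$-th branch of $\G_{m-1}$ for the chosen $h$. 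As $\ub\ne\pp$, a direct path computation then gives $\weight(\G_m,\pp)\le\weight(\G_{m-1},\pp)-1$. Finally, applying the second item to $\G_0\stackred{}\cdots\stackred{}\G_{m-1}$ and to $\G_m\stackred{}\cdots\stackred{}\G_n$ yields $\weight(\G_{m-1},\pp)\le\weight(\G,\pp)$ and $\weight(\G',\pp)\le\weight(\G_m,\pp)$, whence $\weight(\G',\pp)\le\weight(\G,\pp)-1<\weight(\G,\pp)$.

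The hard part will be the combinatorial core of the third item: isolating the unique ``top'' step $\co_m$ inside the chain and showing it is precisely the step that deletes the head constructor of $\G$. This hinges on the interplay of three ingredients — maximality and coherence of $\Delta$, the weak-balancing predicate (which guarantees that the root input of $\G$, if any, is fireable and stays fireable as the inside rules only rearrange the tail of the queue), and the shape of the LTS rules (a transition whose player differs from the root must be an inside rule, hence preserves the root). Once this skeleton is in place, the first item, the second item, and the decrement established at step $m$ are routine manipulations of paths.
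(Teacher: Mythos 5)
Your proof is correct, and Items (1) and (2) coincide with the paper's argument: the same case analysis on the root of $\G$ using \rulename{WB-In} to enable \rulename{Top-In}, and the same path-surgery via \refToLemma{lem:step-path} (your version is in fact slightly more precise about the ``$1+\sup$'' bookkeeping in \refToDef{def:depth}). The only genuine divergence is in Item (3). The paper also isolates the top-level communication $\co$ of the root player and shows by the path argument that it strictly decreases $\weight(\cdot,\pp)$, but it then \emph{schedules $\co$ first} in the lockstep sequence and applies Item (2) to the remaining steps --- implicitly relying on the fact that the given lockstep reduction can be realised in that order. You instead keep the order given by $\lockred{\Delta}$, observe that all steps before the unique $\ub$-step must be \rulename{Inside-Out}/\rulename{Inside-In} (by their side conditions) and hence preserve the root constructor, locate the head-consuming step $\co_m$ in situ, and apply Item (2) to the prefix and suffix. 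This buys you independence from any commutation/reordering property of the LTS, at the cost of the extra (easy) observation that the inside rules preserve the root; both routes rest on the same two pillars, namely maximality of $\Delta$ plus weak balancing to guarantee the root player can move, and the path computation for the strict decrease.
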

\begin{proof}
(\ref{lem:w-play-dec:1}). If $\weight(\G,\pp) = 1$ there are two possibilities: 
either $\G = \agtO{\pp}{\q}{i}{I}{\la}{\G}$ or $\G = \agtI{\q}{\pp}{\la}{\G'}$. In the first case,  $\G\parG\Msg \stackred{\commO{\pp}{\la_h}{\q}} \G_h\parG\Msg'$, for some $h\in I$, by Rule \rn{Top-Out}. 
In the second case,    from $\G\parG\Msg$  weakly balanced  we get 
$\Msg \equiv \addMsg{\mq{\q}{\la_h}{\pp}}{\Msg'}$ and    $\G\parG\Msg \stackred{\commI{\q}{\la_h}{\pp}} \G_h\parG\Msg'$, for some $h\in I$, by Rule \rn{Top-In}.  

(\ref{lem:w-play-dec:2}).  
Let $\ipth\in\IPaths{\G'}$, then by \refToLem{step-path}, we have $\ipth = \ipth_1\cdot\ipth_2$ and $\ipth_1\cdot\co\cdot\ipth_2\in\IPaths{\G}$, for some $\ipth_1,\ipth_2$. 
Since $\G$ is bounded, $\weight(\G,\pp)$ is finite, hence there is a first occurrence of $\co'$ with $\play{\co'} = \pp$ either in $\ipth_1$ or in $\ipth_2$, because $\play\co \ne \pp$. 
In the former case we have  $\weight(\ipth,\pp) =\weight(\ipth_1\cdot\co\cdot\ipth_2,\pp) \le \weight(\G,\pp)$, and, in the latter case, 
we have $\weight(\ipth,\pp) < \weight(\ipth_1\cdot\co\cdot\ipth_2,\pp) \le \weight(\G,\pp)$. 
This implies
\[\weight(\G',\pp) = \sup\{\weight(\ipth',\pp) \mid \ipth'\in\IPaths{\G'}\} \le \weight(\G,\pp)\] as needed. 

(\ref{lem:w-play-dec:3}).  We have two cases on $\G$. 
\begin{itemize}
\item If $\G = \agtO{\q}{\pr}{i}{I}{\la}{\G}$, then,     by Rule \rn{Top-Out},
$\G\parG\Msg \stackred{\co} \widehat\G\parG\widehat\Msg$ where   $\co = \commO{\q}{\la_h}{\pr} \in \Delta$  and $\widehat\G = \G_h$ for some $h \in I$.  From $\Delta$ maximal we get $\co\in \Delta$.   
Then, for all paths $\ipth\in\IPaths{\widehat\G}$, we have    $\co\cdot\ipth\in\IPaths{\G}$. Since     $\weight(\G,\pp)$ is finite, because $\G$ is bounded, and $\pp\ne\play{\co}$,  being  
 $\pp\ne\play{\co'}$ for all $\co'\in\Delta$,  there is in $\ipth$ a first occurrence   of  $\co_0$ with $\play{\co_0} = \pp$.
Thus we get 
$\weight(\ipth,\pp) <   \weight(\co\cdot\ipth,\pp)$. 
This implies $\weight(\widehat\G,\pp) < \weight(\G,\pp)$. 
\item If $\G = \agtI{\q}{\pr}{\la}{\G}$, then  $\tupleOKW\G\Msg$ implies
$\Msg \equiv \addMsg{\mq\q{\la_h}\pr}{\widehat\Msg}$ for some $h\in I$.
By  Rule \rn{Top-In},
$\G\parG\Msg \stackred{\co} \widehat\G\parG\widehat\Msg$ where  $\co = \commI{\q}{\la_h}{\pr}$. 
From $\Delta$ maximal we get $\co\in \Delta$. 
Then, for all paths $\ipth\in\IPaths{\widehat\G}$, we have $\co\cdot\ipth\in\IPaths{\G}$.  Since $\weight(\G,\pp)$ is finite, because $\G$ is bounded, and $\pp\ne\play{\co}$,  being   
 $\pp\ne\play{\co'}$ for all $\co'\in\Delta$,    
there is in $\ipth$ a first occurrence   of  $\co_0$ with $\play{\co_0} = \pp$,  thus we get 
$\weight(\ipth,\pp) <   \weight(\co\cdot\ipth,\pp)$. 
This implies $\weight(\widehat\G,\pp) < \weight(\G,\pp)$. 
\end{itemize}
In both cases we have $\G\parG\Msg\stackred{\co}\widehat\G\parG\widehat\Msg$ with $\pp\ne\play\co$ and $\weight(\widehat\G,\pp) < \weight(\G,\pp)$. 
If   $\Delta=\set\co$ we are done. Otherwise  we set $\Delta\setminus\{\co\} = \{\co_1,\ldots,\co_n\}$, where $n\ge 1$, and we get 
\[\widehat\G\parG\widehat\Msg\stackred{\co_1}\G_1'\parG\Msg_1\stackred{\co_2}\ldots\stackred{\co_n}\G_n'\parG\Msg_n\] with $\G_n' = \G'$ and $\Msg_n = \Msg'$. 
By Item~(\ref{lem:w-play-dec:2}), we get $\weight(\G'_k,\pp) \le \weight(\G'_{k-1},\pp)$ for all $k \in 2..n$ and $\weight(\G'_1,\pp) \le \weight(\widehat\G,\pp)$ and this implies $\weight(\G',\pp) \le \weight(\widehat\G,\pp) < \weight(\G,\pp)$ as needed. 
\end{proof}

\begin{thm}[No locked inputs]\label{thm:nli}
If $ \tyn{\G}\Nt$  and $\G$ is bounded and $\tupleOKW\G\Msg$, then $\Nt\parN\Msg$ is input-enabling.  
\end{thm}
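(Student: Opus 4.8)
The plan is to reason by contradiction, lifting an arbitrary complete lockstep computation of $\Nt\parN\Msg$ to a lockstep computation of the corresponding type configuration via Strong Subject Reduction, and then exploiting the depth function together with \refToLemItem{w-play-dec}{3}: every lockstep step in which the input-offering participant does not move strictly lowers its depth, so boundedness forces it to move after finitely many steps.

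First I would fix $\pP\pp{\inp\q{i}{I}\la\PP}\in\Nt$ and an arbitrary complete lockstep computation $\lockSC = \set{\Nt_k\parN\Msg_k\lockredsub{\Delta_k}{k}\Nt_{k+1}\parN\Msg_{k+1}}_{k<x}$ with $\Nt_0\parN\Msg_0 = \Nt\parN\Msg$, and assume, towards a contradiction, that $\pp$ never plays, i.e. $\play\co\ne\pp$ for all $\co\in\Delta_k$ and all $k<x$. Since a single LTS step changes only the process of its player, the process of $\pp$ stays equal to $\inp\q{i}{I}\la\PP$ (in particular non-$\inact$) throughout $\lockSC$, hence $\pp\in\plays{\Nt_k}$ for every $k$.

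Next I would iterate Strong Subject Reduction (\refToTheorem{thm:ssr}), carrying boundedness and weak balancing forward via \refToProp{prop:b} and \refToProp{prop:IO} as done inside its proof, to obtain global types $\G_k$ with $\G_0 = \G$, $\tyn{\G_k}{\Nt_k}$, $\G_k$ bounded, $\tupleOKW{\G_k}{\Msg_k}$, and $\G_k\parN\Msg_k\lockredsub{\Delta_k}{k}\G_{k+1}\parN\Msg_{k+1}$ for all $k<x$. By \refToLemma{ep}, $\pp\in\plays{\G_k}$, so $1\le\weight(\G_k,\pp)<\infty$, the finiteness coming from boundedness of $\G_k$. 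Since no communication of $\Delta_k$ has $\pp$ as player, \refToLemItem{w-play-dec}{3} gives $\weight(\G_{k+1},\pp)<\weight(\G_k,\pp)$ for all $k<x$.

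Finally I would rule out both possible values of $x$. If $x=\omega$ we get an infinite strictly decreasing sequence of positive integers, which is impossible. If $x$ is finite, then $\Nt_x\parN\Msg_x$ is not live (since $\lockSC$ is complete), so by Deadlock-freedom (\refToTheorem{thm:df}) it is terminated, whence $\plays{\Nt_x}=\emptyset$, contradicting $\pp\in\plays{\Nt_x}$. In both cases we have a contradiction, so some $\co\in\Delta_h$ with $h<x$ satisfies $\play\co=\pp$; since the process of $\pp$ in $\Nt_h$ is still $\inp\q{i}{I}\la\PP$, the single step producing $\co$ must be an instance of Rule \rulename{Rcv}, so $\co$ is the reading by $\pp$ of some label $\la_j$ with $j\in I$ from $\q$, which is exactly the communication demanded by the definition of input-enabling. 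I expect the only mildly delicate step to be the finite-computation case, where Deadlock-freedom is needed to turn a non-live terminal session into one with no players; the rest is routine once \refToLemItem{w-play-dec}{3} is in hand.
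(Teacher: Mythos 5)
Your proof is correct and takes essentially the same route as the paper's: lift the lockstep computation to the type configuration via Strong Subject Reduction (\refToTheorem{thm:ssr}), carry boundedness and weak balancing along with Propositions~\ref{prop:b} and~\ref{prop:IO}, and use \refToLemItem{w-play-dec}{3} to make $\weight(\G_k,\pp)$ strictly decrease as long as $\pp$ does not play. The only (interchangeable) difference is presentational: the paper argues by induction on $\weight(\G,\pp)$ with base case supplied by \refToLemItem{w-play-dec}{1}, while you argue by infinite descent and dispatch the finite-computation case through Deadlock-freedom (\refToTheorem{thm:df}).
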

\begin{proof} 
If $\Nt\equiv\pP\pp{\inp\q{i}{I}{\la}{\PP}}\parN\Nt_0$, then $\tyn{\G}\Nt$ implies $\pp\in\plays\G$.
By Strong Subject Reduction  (\refToTheorem{thm:ssr}) all complete lockstep computations of $\Nt\parN\Msg$ are complete lockstep computations of $\G\parN\Msg$. 
Then it is enough to show that $\G\parN\Msg$  is input-enabling. 

We prove the  statement  by induction on $\weight(\G,\pp)$. 
Consider a complete lockstep computation $\set{\G_k\parG\Msg_k \lockredsub{\Delta_k}{k}\G_{k+1}\parG\Msg_{k+1}}_{k<x}$ with $\G_0\parG\Msg_0 = \G\parG\Msg$. 
If $\weight(\G,\pp)=1$ by \refToLemma{lem:w-play-dec}(\ref{lem:w-play-dec:1}) we get $\G\parG\Msg\stackred{\co}\G'\parG\Msg'$, with $\play{\co} = \pp$, hence, 
since $\Delta_0$ is a maximal coherent set, there is $\co' \in \Delta_0$ such that $\play{\co'} = \play{\co} = \pp$, as needed. 
If $\weight(\G,\pp)>1$, let us assume  $\pp \ne \play{\co}$ for all  $\co \in \Delta_0$  
(otherwise the   statement  is trivial), 
then, since  $\G\parG\Msg = \G_0\parG\Msg_0 \lockred{\Delta_0}\G_1\parG\Msg_1$,  by \refToLemma{lem:w-play-dec}(\ref{lem:w-play-dec:3}), 
we get $\weight(\G_1,\pp) < \weight(\G,\pp)$. 
Hence, the   statement  follows by induction hypothesis. 
\end{proof}

To avoid messages waiting forever in the queue we need to  strengthen  weak balancing with the requirement that 
all messages in the queue will
be eventually read.  
This last restriction is enforced by the auxiliary judgment 
$\algread{\G}{\Msg}$, which means that (each path of) {\em $\G$ reads all the messages in
$\Msg$}.
The inductive definition of this judgment is given at the bottom of \refToFigure{fig:wfConfig}. 
If the queue is empty, Rule \rn{Empty-R}, then the judgement holds.  If $\G$ is a choice of outputs, 
Rule \rn{Out-R}, then in each branch of the choice all the messages  in the queue  
must be read. For an
input type $\agtI{\pp}{\q}{\la}{\G}$, if the message at the top of the queue (considered modulo $\equiv$)  is    
$\mq\pp{\la_h}\q$ for some $h\in I$ we read it, Rule \rn{In-R1}, otherwise we  do 
not to read 
messages,  Rule \rn{In-R2}.  In Rule \rn{In-R1} we could check only $\algread{\G_h}{\Msg}$ since the other branches are discharged.  Instead we required $\algread{\G_i}{\Msg}$ for all $i\in I$. This is needed  to ensure  that a message on top of a queue in a balanced type configuration has a finite weight for the global type (\refToDef{def:w}), see \refToExample{aw}.

\begin{figure}
 \begin{math}
 \begin{array}{c}
\NamedCoRule{\rulename{B-End}}{}{\tupleOK{\End}{\emptyset}} {}
\\[3ex]
\NamedCoRule{\rulename{B-Out}}{\tupleOK{\G_i}{\addMsg\Msg{\mq\pp{\la_i}\q}}\quad \forall i\in I} 
{\tupleOK{\agtO{\pp}{\q}i I{\la}{\G}}{\Msg}}
{ \algread{\G_i}{\Msg}\  \forall i\in I}
\\[3ex]
\NamedCoRule{\rulename{B-In}}{\tupleOK{\G_h}{\Msg}} 
{\tupleOK{\agtI \pp\q \la \G}{\addMsg{\mq\pp{\la_h}\q}\Msg}}  
{ \algread{\G_i}{\Msg}\  \forall i\in I}
\\[5ex]
\NamedRule{\rulename{Empty-R}}{}{\algread{\G}{\emptyset}}{}
\qquad 
\NamedRule{\rulename{Out-R}}
{ \algread{\G_i}{\Msg}\ \forall i \in I }
{ \algread{\agtO{\pp}{\q}{i}{I}{\la}{\G}}{\Msg} }{}
\\[3ex]
\NamedRule{\rulename{In-R1}}
{ \algread{\G_i}{\Msg}\ \forall i \in I} 
{ \algread{\agtI{\pp}{\q}{\la}{\G}}{\addMsg{\mq\pp{\la_h}\q}{\Msg} } }
{ h\in I }
\\[3ex] 
\NamedRule{\rulename{In-R2}}
{ \algread{\G_i}{\Msg}\ \forall i\in I }
{ \algread{\agtI{\pp}{\q}{\la}{\G}}{\Msg}}
{} 
{\Msg\not\equiv\addMsg{\mq\pp{\la_i}\q}{\Msg'}\ \forall i \in I } 
 \end{array}
\end{math}\caption{Balancing  of type configurations.} \label{fig:wfConfig}
\end{figure}

\begin{exa}\label{ex:b} 
 \refToFig{wbh} displays the initial part of the weak balancing derivation for the type configuration $\G\parG\emptyset$, where   $\G$, $\G_1$ and $\G_2$ are defined in the caption of \refToFigure{fig:exvd}.  We can easily get a balancing derivation by checking the  readability  side-conditions. \refToFig{rh} only shows the more interesting case.
\begin{figure}
\begin{center}
\prooftree
\prooftree
\algread{\G_2}\emptyset
\justifies
\prooftree
\algread{\G_1}{\mq\cl\lql\s}
\justifies
\prooftree
\algread{\G}{\mq\cl\lql\s}
\justifies
\algread{\s\cl?\ok;\G}{\mq\cl\lql\s}
\endprooftree
\endprooftree
\endprooftree
\qquad 
\prooftree
 
\prooftree
{
\prooftree
\algread{\G_2}{\emptyset}
\justifies
\algread{\G_1}{\mq\cl\lql\s}
\endprooftree
}
\justifies
\algread{\G}{\mq\cl\lql\s}
\endprooftree

\justifies
\prooftree
\algread{\cl\s!\lql;\G}{\mq\cl\lql\s}
\justifies
\algread{\s\cl?\ko;\cl\s!\lql;\G}{\mq\cl\lql\s}
\endprooftree
\endprooftree
\justifies
\prooftree
\algread{\G_2}{\mq\cl\lql\s}
\justifies
\prooftree
\algread{\G_1}{\mq\cl\hq\s\cdot\mq\cl\lql\s}
\justifies
\algread{\G}{\mq\cl\hq\s\cdot\mq\cl\lql\s}
\endprooftree
\endprooftree
\endprooftree
\end{center}
\caption{Read of the hospital global type with the queue $\mq\cl\hq\s\cdot\mq\cl\lql\s$.}\label{fig:rh}
\end{figure}

\end{exa}

Comparing Figures~\ref{fig:wwfConfig} and~\ref{fig:wfConfig}  we can see  that the balancing is obtained from the weak balancing just by adding the 
readability  judgment as side condition of rules \rn{B-Out} and \rn{B-In}, checking that \emph{all} messages in the queue can be read, even in case of infinite derivations, see \refToExample{ex:r}. The necessity of the readability  judgment also in Rule \rn{B-In} is shown in \refToExample{aw}. Notice that  this rule by itself only ensures that the first message of the queue is read.
\begin{exa}\label{ex:r}
Let $\G=\Seq{\commO\pp\la\q}{\Seq{\commI\pp\la\q}{\G}}$ and $\Msg={\mq\pp{\la'}\pr}$. Then $\G\parN\Msg$  is weakly  balanced but not balanced, since $\algread{\G}\Msg$ does not hold. 
In fact, $\mq\pp{\la'}\pr$ would never be read.
\end{exa}

The regularity of global types  and the fact that the initial queue may only decrease  ensure the decidability of the  readability  judgment, 
while we will show in the next section that both weak balancing and balancing are undecidable.

We use $\ms$ as short for a message of the form $\mq\pp\la\q$ and 
let $\inc{\mq\pp{\la}\q}=\commI\pp\la\q$,  i.e. $\inc{\mq\pp{\la}\q}$ is the input communication which reads the message $\mq\pp{\la}\q$.  
We define  the weight of a message in a global type, notation $\wgs{\ms}{\G}$, by looking at the corresponding input. 
\begin{defi}[Weight of messages]\label{def:w}
For a global type $\G$ we define $\wgs{\ms}{\G}$ corecursively as follows: 
$\wgs\ms\G = $ 
\[
\begin{cases}
\infty & \text{if $\G = \End$} \\ 
1 + \sup_{i\in I} \wgs\ms{\G_i} & \text{if either $\G = \agtO{\pp}{\q}{i}{I}{\la}{\G}$}\\
&\quad \text{or $\G = \agtII{\pp}{\q}{i}{I}{\la}{\G}$ and $\inc\ms \ne \commI\pp{\la_i}\q$ for all $i \in I$}\\ 
0      & \text{if $\G = \agtII{\pp}{\q}{i}{I}{\la}{\G}$ and $\inc\ms = \commI\pp{\la_h}\q$ for some $h \in I$} \\ 
\end{cases}
\]
\end{defi}
Note that $\wgs\ms\G$ is well defined as every recursive call is guarded.   We get $\wgs\ms\G=\infty$ when the global type $\G$ never reads the message $\ms$. For example, if $\G$ only contains outputs, then  $\wgs\ms\G=\infty$ for all messages $\ms$. 

The key point is that in a balanced type configuration the message on top of the queue has a finite weight, as proved by the following lemma. 
\begin{lem}\label{lem:aux-wg}
If $\tupleOK{\G}{\addMsg\ms\Msg}$, then $\wgs\ms\G$ is finite. 
\end{lem}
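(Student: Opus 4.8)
The plan is to route everything through the \emph{readability} judgment, which --- unlike balancing --- is inductive, hence amenable to proof by structural induction on its derivation. The first step is the elementary observation that $\tupleOK{\G}{\Msg}$ always implies $\algread{\G}{\Msg}$: if $\Msg$ is empty this is an instance of Rule \rn{Empty-R}, and if $\Msg$ is non-empty the derivation of $\tupleOK{\G}{\Msg}$ must end with Rule \rn{B-Out} or Rule \rn{B-In}, and in both cases $\algread{\G}{\Msg}$ is one of the side conditions. Applying this to the hypothesis $\tupleOK{\G}{\addMsg\ms\Msg}$ gives $\algread{\G}{\addMsg\ms\Msg}$.

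The core of the argument is then the following generalisation, proved by induction on the derivation of $\algread{\G}{\Msg}$: whenever $\Msg\equiv\addMsg\ms{\Msg_0}$ for some $\Msg_0$ --- equivalently, $\ms$ is the first message on its own channel in $\Msg$ --- the weight $\wgs\ms\G$ is finite. The generalisation over the tail $\Msg_0$ is needed because the premise of Rule \rn{In-R1} shrinks the queue. The case analysis on the last rule is: \rn{Empty-R} cannot occur, since $\Msg$ is non-empty; in case \rn{Out-R} with $\G=\agtO\pr\ps i I\la{\G}$, the induction hypothesis gives $\wgs\ms{\G_i}$ finite for every $i\in I$, and as $I$ is finite and non-empty, $\wgs\ms\G=1+\sup_{i\in I}\wgs\ms{\G_i}$ is finite by \refToDef{def:w}; in case \rn{In-R1} with $\G=\agtI\pr\ps\la{\G}$ and head message $\mq\pr{\la_h}\ps$, either $\ms=\mq\pr{\la_h}\ps$, so that $\inc\ms=\commI\pr{\la_h}\ps$ and $\wgs\ms\G=0$ directly by the third clause of \refToDef{def:w}, or $\ms\ne\mq\pr{\la_h}\ps$, in which case $\ms$ necessarily lies on a channel other than $\pr\ps$ (two messages on the same channel keep their relative order under $\equiv$, so $\ms$ could not be moved in front of $\mq\pr{\la_h}\ps$ otherwise), the residual queue still satisfies $\Msg''\equiv\addMsg\ms{\Msg_0'}$, the induction hypothesis yields $\wgs\ms{\G_i}$ finite for all $i\in I$, and since $\inc\ms\ne\commI\pr{\la_i}\ps$ for all $i$ we get $\wgs\ms\G=1+\sup_{i\in I}\wgs\ms{\G_i}<\infty$; finally, in case \rn{In-R2} with $\G=\agtI\pr\ps\la{\G}$, the side condition $\Msg\not\equiv\addMsg{\mq\pr{\la_i}\ps}{\Msg'}$ for all $i\in I$ forces $\ms\ne\mq\pr{\la_i}\ps$ for all $i$, hence $\inc\ms\ne\commI\pr{\la_i}\ps$ for all $i$, and applying the induction hypothesis on the unchanged queue again gives $\wgs\ms\G=1+\sup_{i\in I}\wgs\ms{\G_i}<\infty$. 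The lemma follows by instantiating the claim with $\Msg_0:=\Msg$ and the readability derivation produced in the first step.

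The main obstacle is not conceptual but lies in the bookkeeping around structural equivalence of queues in the \rn{In-R1} case: one must argue precisely that deleting a head message distinct from $\ms$ leaves $\ms$ still reachable at the head of the residual queue, and this rests on the fact that messages on distinct channels commute whereas messages on the same channel do not. It is exactly this fact that simultaneously guarantees that $\inc\ms$ differs from every input offered by the choice --- which is also the reason Rule \rn{In-R1} must require $\algread{\G_i}{\Msg}$ for \emph{all} branches and not just the selected one (cf.\ \refToExample{aw}).
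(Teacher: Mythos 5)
Your proposal is correct and follows essentially the same route as the paper's proof: extract $\algread{\G}{\addMsg\ms\Msg}$ from the balancing hypothesis, then induct on the readability derivation, with the cases \rn{Out-R}, \rn{In-R1} (splitting on whether $\ms$ is the consumed head message) and \rn{In-R2} handled exactly as in the paper. Your extra care in the \rn{In-R1}/\rn{In-R2} cases --- arguing via the structural equivalence that a head message $\ms$ distinct from the consumed one must live on a different channel, so that $\inc\ms\ne\commI\pp{\la_i}\q$ for all $i$ and the second clause of \refToDef{def:w} applies --- makes explicit a step the paper leaves implicit, and is a welcome refinement rather than a divergence.
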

\begin{proof}
 By  definition 
$\tupleOK{\G}{\addMsg\ms\Msg}$ implies $\algread{\G}{\addMsg\ms\Msg}$ (see rules in \refToFig{wfConfig}). 
The proof is by induction on the derivation of $\algread{\G}{\addMsg\ms\Msg}$, splitting cases on the last applied rule. 
\begin{description}
\item [\rn{Empty-R}] This is an empty case, because the queue is not empty by hypothesis. 
\item [\rn{Out-R}] We have $\G = \agtO{\pp}{\q}{i}{I}{\la}{\G}$ and $\algread{\G_i}{\addMsg\ms\Msg}$ for all $i \in I$. 
By induction hypothesis we get that $\wgs\ms{\G_i}$ is finite for all $i\in I$, hence 
$\wgs\ms\G = 1 + \sup_{i \in I} \wgs\ms{\G_i}$ is finite as well. 
\item [\rn{In-R1}] We have $\G = \agtII{\pp}{\q}{i}{I}{\la}{\G}$ and $\addMsg\ms\Msg \equiv \addMsg{\mq\pp{\la_h}\q}{\Msg'}$ with $h \in I$ and \mbox{$\algread{\G_i}{\Msg'}$} for all $i \in I$. 
If $\ms = \mq\pp{\la_h}\q$, then $\wgs\ms\G = 0$ by definition. Otherwise, 
$\Msg' \equiv \addMsg\ms{\Msg''}$, hence, by induction hypothesis,  $\wgs\ms{\G_i}$ is finite for all $i \in I$ and so 
$\wgs\ms\G = 1 + \sup_{i \in I} \wgs\ms{\G_i}$  is finite as well. 
\item [\rn{In-R2}] We have $\G = \agtII{\pp}{\q}{i}{I}{\la}{\G}$ and $\addMsg\ms\Msg \not\equiv \addMsg{\mq\pp{\la_i}\q}{\Msg'}$ and $\algread{\G_i}{\addMsg\ms\Msg}$ for all $i \in I$. 
Therefore, $\ms \ne \mq\pp{\la_i}\q$ for all $i \in I$ and, by induction hypothesis, $\wgs\ms{\G_i}$ is finite for all $i \in I$, hence, 
$\wgs\ms\G = 1 + \sup_{i \in I} \wgs\ms{\G_i}$ is finite as well. 
\qedhere 
\end{description}
\end{proof}

\begin{exa}\label{aw} 
\refToLemma{lem:aux-wg} fails if either Rule \rn{B-In} does not require 
the readability judgment as side condition or in the premise of Rule \rn{In-R1} we only consider the global type $\G_h$.
A simple example  is  
$\G=\agtIS{\pp}{\q}{\set{\la_1;\commI{\pp}{\la_2}{\pr}\, ,\, \la_3}}$  
and $\Msg=\mq\pp{\la_1}\q \cdot\mq\pp{\la_2}\pr$. 
In fact, in both cases we would get $\tupleOK{\G}{\Msg}$ (derived from $\algread{\G}\Msg$ in the second case), but    $\wgs{\mq\pp{\la_2}\pr}\G=\infty$.
\end{exa}

A last lemma before establishing the property of  ``No orphan message'' relates the weight  of messages in global types with the reductions of the corresponding type configurations.

\begin{lem} \label{lem:w-msg-dec}
Let  $\tupleOK{\G}{\Msg}$  and $\Msg \equiv \addMsg\ms{\Msg'}$, then the following hold: 
\begin{enumerate}
\item\label{lem:w-msg-dec:1} If $\wgs\ms \G = 0$, then $\G\parG\Msg\stackred{\inc\ms}\G'\parG\Msg'$. 
\item\label{lem:w-msg-dec:2} If $\G\parG\Msg\stackred{\co}\widehat{\G}\parG\widehat{\Msg}$ with  $\inc\ms \ne \co $, then  $\wgs\ms {\widehat{\G}} \le \wgs{\ms}{\G}$ and $\widehat{\Msg} \equiv \addMsg\ms{\widehat{\Msg}'}$.
\item\label{lem:w-msg-dec:3}  If $\G\parG\Msg\lockred{\Delta}\widehat{\G}\parG\widehat{\Msg}$, with  $\inc\ms \notin \Delta$, then  $\wgs\ms {\widehat{\G}} < \wgs{\ms}{\G}$ and $\widehat{\Msg} \equiv \addMsg\ms{\widehat{\Msg}'}$.
\end{enumerate}
\end{lem}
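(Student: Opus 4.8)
The plan is to follow closely the proof of \refToLemma{lem:w-play-dec}, with the weight of the message $\ms$ in the role of the depth of the player $\pp$, and $\inc\ms$ (the communication that reads $\ms$) in the role of a communication with player $\pp$. The one genuinely new ingredient is the queue bookkeeping behind the clauses ``$\widehat\Msg\equiv\addMsg\ms{\widehat\Msg'}$'', so I would first record the following observation, proved by a routine induction on the rules of \refToFigure{fig:ltsgtAs}: a single transition $\G\parG\Msg\stackred\co\widehat\G\parG\widehat\Msg$ changes the queue either by appending one message at the back (when $\co$ is an output) or by deleting the oldest message $\ms'$ of one channel, with $\co=\inc{\ms'}$ (when $\co$ is an input). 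Hence, if $\ms$ is on top of $\Msg$ — equivalently, $\ms$ is the oldest message on the channel $\ch\ms$ — and $\co\ne\inc\ms$, then $\ms$ is still on top of $\widehat\Msg$: appends go to the back, and a deletion on channel $\ch\ms$ would remove exactly this copy of $\ms$, forcing $\co=\inc\ms$. I would also recall that, under the hypothesis $\tupleOK\G{\addMsg\ms{\Msg'}}$, the weight $\wgs\ms\G$ is finite by \refToLemma{lem:aux-wg}, so all inequalities below are between natural numbers.

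For item~(\ref{lem:w-msg-dec:1}), unfolding \refToDef{def:w} shows that $\wgs\ms\G=0$ forces $\G=\agtII\pp\q i I\la\G$ with $\inc\ms=\commI\pp{\la_h}\q$, i.e. $\ms=\mq\pp{\la_h}\q$, for some $h\in I$; since $\ms$ is on top of $\Msg$ we get $\Msg\equiv\addMsg{\mq\pp{\la_h}\q}{\Msg'}$, and Rule \rn{Top-In} gives $\G\parG\Msg\stackred{\inc\ms}\G_h\parG\Msg'$. For item~(\ref{lem:w-msg-dec:2}) I would argue by induction on the derivation of the transition, case-splitting on the last rule. For \rn{Top-Out} and \rn{Top-In} the global type loses one top choice and, by \refToDef{def:w}, the weight strictly drops; in the \rn{Top-In} case one first notes that $\co\ne\inc\ms$ forces $\ch\ms$ to differ from the channel read by $\co$, so $\inc\ms\ne\commI\pp{\la_i}\q$ for every branch $i$. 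For \rn{Inside-Out} and \rn{Inside-In} the outer choice is preserved, so $\wgs\ms{\widehat\G}=1+\sup_i\wgs\ms{\G'_i}$; the sub-configurations in the premises are balanced thanks to the side conditions of Rules \rn{B-Out}/\rn{B-In}, hence the induction hypothesis applies branchwise and gives $\wgs\ms{\G'_i}\le\wgs\ms{\G_i}$, whence $\wgs\ms{\widehat\G}\le\wgs\ms\G$; the queue clause comes from the preliminary observation. For the use in item~(\ref{lem:w-msg-dec:3}) I would actually prove item~(\ref{lem:w-msg-dec:2}) in the strengthened form that additionally $\tupleOK{\widehat\G}{\widehat\Msg}$ — that is, that balancing, not just weak balancing, is preserved by transitions — by the same induction; this combines \refToProp{prop:IO} with the remark that balancing is weak balancing plus the readability side conditions, and so reduces to checking that the judgment $\algread\G\Msg$ is preserved by transitions and is monotone in the queue.

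Item~(\ref{lem:w-msg-dec:3}) then mirrors \refToLemma{lem:w-play-dec}(\ref{lem:w-play-dec:3}). From $\tupleOK\G\Msg$ and $\Msg\ne\emptyset$ we get $\G\ne\End$, so $\G$ is an output or an input choice. If $\G$ were an input choice one of whose branches reads $\ms$ (the case $\wgs\ms\G=0$), then $\inc\ms$ would be the unique communication of its player available to $\G\parG\Msg$, hence would belong to the maximal set $\Delta$, contradicting $\inc\ms\notin\Delta$; so $\wgs\ms\G\ge1$. Then Rule \rn{Top-Out} (resp. \rn{Top-In}) lets $\G\parG\Msg$ perform a communication $\co$ with player the top player of $\G$; by maximality of $\Delta$ we may take $\co\in\Delta$, and since $\inc\ms\notin\Delta$ we have $\co\ne\inc\ms$, so (arguing as in the \rn{Top-In} case of item~(\ref{lem:w-msg-dec:2})) $\ch\ms$ differs from the channel underlying $\G$ whenever $\G$ is an input choice. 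Thus, writing $\G\parG\Msg\stackred\co\G_1\parG\Msg_1$, \refToDef{def:w} gives $\wgs\ms{\G_1}<\wgs\ms\G$, the preliminary observation keeps $\ms$ on top of $\Msg_1$, and Rules \rn{B-Out}/\rn{B-In} give $\tupleOK{\G_1}{\Msg_1}$. Finally, reordering the sequence realizing $\lockred\Delta$ to start with $\co$ — legitimate by the usual permutability of communications with distinct players — I apply the strengthened item~(\ref{lem:w-msg-dec:2}) along the remaining steps, each of which has a communication in $\Delta\setminus\set\co$ and so different from $\inc\ms$, obtaining $\wgs\ms{\widehat\G}\le\wgs\ms{\G_1}<\wgs\ms\G$ and $\widehat\Msg\equiv\addMsg\ms{\widehat\Msg'}$.

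The main obstacle I anticipate is precisely this queue/balancing bookkeeping: keeping $\ms$ on top along transitions and, above all, propagating full balancing (not merely weak balancing) through a single transition so that item~(\ref{lem:w-msg-dec:2}) can be iterated inside item~(\ref{lem:w-msg-dec:3}); this hinges on preservation and queue-monotonicity of the readability judgment of \refToFigure{fig:wfConfig}, which are routine but need to be spelled out with care. Everything else is a faithful transcription of the depth-of-player argument.
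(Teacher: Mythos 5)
Your overall strategy is the right one and matches the paper's: weak decrease of the weight at every step plus a strict decrease at the step that resolves the top-level choice of $\G$, whose membership in $\Delta$ is forced by maximality (and, for an input, by balancing of the \emph{initial} configuration). But the ingredient you yourself single out as the delicate one --- propagating full balancing through a single transition so that item~(2) can be iterated --- is not ``routine but needs care'': it is false. Take $\G=\agtIS{\pp}{\q}{\set{\Seq{\la_1}{\G_1},\Seq{\la_2}{\G_2}}}$ with $\G_1=\Seq{\commO{\pr}{\la}{\ps}}{\commI{\pr}{\la}{\ps}}$, $\G_2=\commO{\pr}{\la}{\ps}$ and $\Msg=\mq\pp{\la_1}\q$. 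Then $\tupleOK\G\Msg$ holds, because Rule \rn{B-In} requires only the branch selected by the queue ($\G_1$) to be balanced, and its side condition only asks $\algread{\G_i}{\emptyset}$ for both branches. Rule \rn{Inside-In} gives $\G\parG\Msg\stackred{\commO\pr\la\ps}\widehat\G\parG\addMsg\Msg{\mq\pr\la\ps}$ with $\widehat\G=\agtIS{\pp}{\q}{\set{\Seq{\la_1}{\commI{\pr}{\la}{\ps}},\Seq{\la_2}{\End}}}$; but $\algread{\End}{\mq\pr\la\ps}$ is not derivable, hence $\algread{\widehat\G}{\addMsg\Msg{\mq\pr\la\ps}}$ fails and the resulting configuration is not balanced. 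The same observation shows that your justification of the branchwise induction hypothesis in the \rn{Inside-In} case of item~(2) (``the sub-configurations in the premises are balanced thanks to the side conditions of \rn{B-Out}/\rn{B-In}'') is incorrect: \rn{B-In} balances only one branch, while \rn{Inside-In} has premises for all of them.

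The way out --- and what the paper actually does --- is to notice that the proof of item~(2) never needs the hypothesis $\tupleOK\G\Msg$ at all: the Inside cases only use the shape of the queue and $\inc\ms\ne\co$, so item~(2) can be applied at every step of the lockstep sequence without re-establishing balancing. This also removes the need for your reordering step, which is a second gap: permuting the sequence realising $\lockred\Delta$ so that $\co$ comes first requires a commutation/confluence property of the LTS of \refToFig{ltsgtAs} that is nowhere proved and is not entirely obvious given the queue-shape side conditions of the Inside rules. The paper instead keeps the given order, observes that every step before the one performed by the top player of $\G$ is derived by an Inside rule and so preserves the top-level choice, and obtains the strict decrease at that step directly from the $1+\sup$ clause of \refToDef{def:w}, combined with finiteness of $\wgs\ms\G$ from \refToLemma{lem:aux-wg}. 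Item~(1), the preliminary queue observation, and the Top cases of item~(2) in your proposal are fine.
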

\begin{proof}
(\ref{lem:w-msg-dec:1}). Assume $\ms = \mq\pp\la\q$, hence $\inc\ms = \commI{\pp}{\la}{\q}$. 
If $\wgs\ms\G = 0$, then $\G = \agtI{\pp}{\q}{\la}{\G}$  and  $\la=\la_h$ and by Rule \rn{Top-In} we have  $\G\parG\Msg \stackred{\commI{\pp}{\la}{\q}}\G_h\parG\Msg'$  for some $h\in I$.  

(\ref{lem:w-msg-dec:2}). The fact that $\widehat{\Msg} \equiv \addMsg\ms{\widehat{\Msg}'}$ is immediate, since $\inc\ms \ne \co$. 
 We prove $\wgs\ms {\widehat{\G}} \le \wgs{\ms}{\G}$ by induction on reduction rules for type configurations.
\begin{description}
\item[\rulename{Top-Out}] Then $\G=\agtO{\pp}{\q}i I{\la}{\G}$  and $\widehat\Msg=\addMsg{\Msg}{\mq\pp{\la_h}\q}$ and $\widehat\G=\G_h$ for some $h\in I$.  By \refToDef{def:w} $\wgs{\ms}{\G}=1 + \sup_{i\in I} \wgs\ms{\G_i} > \wgs{\ms}{\G_h}$. 
\item[\rulename{Top-In}] Then $\G=\agtI \pp\q \la \G$  and $\Msg=\addMsg{\mq\pp{\la_h}\q}{\widehat\Msg}$ and $\widehat\G=\G_h$ for some $h\in I$.  By \refToDef{def:w} $\wgs{\ms}{\G}=1 + \sup_{i\in I} \wgs\ms{\G_i} > \wgs{\ms}{\G_h}$. 
\item[\rulename{Inside-Out}] Then $\G=\agtO{\pp}{\q}i I{\la}{\G}$ and $\widehat\G=\agtO{\pp}{\q}i I{\la}{\G'}$ and 
$\G_i\parG \Msg\cdot\mq\pp{\la_i}\q \stackred\asCom\G'_i \parG \widehat\Msg\cdot\mq\pp{\la_i}\q $
 for all $i \in I$.  By \refToDef{def:w} $\wgs{\ms}{\G}=1 + \sup_{i\in I} \wgs\ms{\G_i}$ and  $\wgs{\ms}{\widehat\G}=1 + \sup_{i\in I} \wgs\ms{\G'_i}$. By induction 
$\wgs\ms {\G'_i} \le \wgs{\ms}{\G_i}$  for all $i \in I$ and this concludes the proof. 
\item[\rulename{Inside-In}] Then $\G=\agtI \pp\q \la \G$  and  $\widehat\G=\agtII \pp\q i I\la {\G'}$  and 
 $\Msg\equiv\mq\pp{\la_k}\q \cdot\Msg''$ and $\widehat\Msg\equiv\mq\pp{\la_k}\q \cdot \widehat\Msg'$ for some $k\in I$ and
$\G_i\parG \Msg''\stackred\asCom\G'_i \parG \widehat\Msg' $     
for all $i \in I$. If $\ms=\mq\pp{\la_k}\q$, then $\wgs\ms \G = \wgs\ms{\widehat \G} =0$. Otherwise $\wgs{\ms}{\G}=1 + \sup_{i\in I} \wgs\ms{\G_i}$ and  $\wgs{\ms}{\widehat\G}=1 + \sup_{i\in I} \wgs\ms{\G'_i}$. By induction 
$\wgs\ms {\G'_i} \le \wgs{\ms}{\G_i}$  for all $i \in I$ and this concludes the proof. \end{description}

(\ref{lem:w-msg-dec:3}).  
Let $\Delta = \set{\co_1,\ldots,\co_n}$, hence by definition of lockstep transition we have 
\[\G_{k-1}\parG\Msg_{k-1} \stackred{\co_k} \G_k\parG\Msg_k\] for all $k \in 1..n$, with 
$\G\parG\Msg = \G_0\parG\Msg_0$ and $\widehat\G\parG\widehat\Msg = \G_n\parG\Msg_n$. 
By Item ~(\ref{lem:w-msg-dec:2}), for all $k \in 1..n$, we have 
$\wgs\ms{\G_k} \le \wgs\ms{\G_{k-1}}$ and $\Msg_k\equiv \addMsg\ms{\Msg'_k}$, hence
$\widehat\Msg \equiv \addMsg\ms{\widehat\Msg'}$. 
We have two cases on $\G$.
 \begin{itemize}
\item If $\G = \agtO{\pp}{\q}{i}{I}{\la}{\G'}$, as $\Delta$ is maximal, we have $\co_j = \commO{\pp}{\la_h}{\q}$ for some $j \in 1..n$ and $h \in I$,   and so 
$\G_{j-1} = \agtO{\pp}{\q}{i}{I}{\la}{\G''}$ and 
the step $\G_{j-1}\parG\Msg_{j-1}\stackred{\co_j}\G_j\parG\Msg_j$ is derived by Rule \rn{Top-Out}, hence 
$\G_j = \G''_h$. 
Then, since $\wgs\ms{\G_{j-1}} = 1 + \sup_{i\in I} \wgs\ms{\G''_i}$ and it is finite, we get 
$\wgs\ms{\G_j} < \wgs\ms{\G_{j-1}}$ and this proves the  statement. 
\item If $\G = \agtI{\pp}{\q}{\la}{\G'}$, 
as $\tupleOK{\G}{\addMsg\ms\Msg}$ holds, we have 
$\addMsg\ms\Msg \equiv \addMsg{\mq\pp{\la_h}\q}{\Msg'}$ for some $h \in I$. 
Since $\Delta$ is maximal, we have   $\co_j = \commI{\pp}{\la_h}{\q}$ for some $j \in 1..n$ and so 
$\G_{j-1} = \agtII{\pp}{\q}{i}{I}{\la}{\G''}$ and 
the step $\G_{j-1}\parG\Msg_{j-1}\stackred{\co_j}\G_j\parG\Msg_j$ is derived by Rule \rn{Top-In}, hence 
$\G_j = \G''_h$. 
Then, since $\wgs\ms{\G_{j-1}} = 1 + \sup_{i\in I} \wgs\ms{\G''_i}$ and it is finite, we get 
$\wgs\ms{\G_j} < \wgs\ms{\G_{j-1}}$ and this proves the  statement.  \qedhere
\end{itemize}
\end{proof}

\begin{thm}[No orphan messages]\label{thm:nom}
If $ \tyn{\G}\Nt$  and $\G$ is bounded and $\tupleOK\G\Msg$, then $\Nt\parN\Msg$ is queue-consuming.  
\end{thm}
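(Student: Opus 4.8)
The plan is to follow the pattern of the proof of \refToTheorem{thm:nli}, replacing the depth-of-player argument there by a weight-of-message argument. First I would note that $\tupleOK\G\Msg$ implies $\tupleOKW\G\Msg$, since balancing is weak balancing augmented with the readability side conditions of Rules \rn{B-Out} and \rn{B-In}. Hence \refToTheorem{thm:ssr} applies and, iterating it (using \refToProp{prop:b}, \refToProp{prop:IO} and \refToTheorem{thm:sfA} for completeness of the matched run), every complete lockstep computation of $\Nt\parN\Msg$ is matched step by step --- with the same sets $\Delta_k$ and the same queues --- by a complete lockstep computation of $\G\parG\Msg$, so it suffices to prove that $\G\parG\Msg$ is queue-consuming. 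For the induction below I also need that \emph{balancing} --- not just weak balancing --- is preserved by the LTS of type configurations: the one-step statement ``$\tupleOK\G\Msg$ and $\G\parG\Msg\stackred\beta\G'\parG\Msg'$ imply $\tupleOK{\G'}{\Msg'}$'' is proved exactly as \refToProp{prop:IO}, by induction on the transition rules of \refToFig{ltsgtAs}, the extra work being to propagate the readability judgements; together with \refToProp{prop:b} this guarantees that every type configuration reached from $\G\parG\Msg$ along a lockstep computation is again bounded and balanced.

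Now let $\Msg\equiv\addMsg{\mq\pp\la\q}{\Msg'}$ and put $\ms=\mq\pp\la\q$. By \refToLemma{lem:aux-wg} the weight $\wgs\ms\G$ is finite, and I would prove by induction on $\wgs\ms\G$ that every complete lockstep computation $\set{\G_k\parG\Msg_k\lockredsub{\Delta_k}{k}\G_{k+1}\parG\Msg_{k+1}}_{k<x}$ with $\G_0\parG\Msg_0=\G\parG\Msg$ has some index $h<x$ with $\inc\ms\in\Delta_h$. Since $\Msg\ne\emptyset$ we have $\G\ne\End$ (Rule \rn{B-End}), hence $\G\parG\Msg$ is live --- being weakly balanced and different from $\End$, as in the proof of \refToTheorem{thm:df} --- so $x\ge 1$.

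If $\wgs\ms\G=0$, then by \refToDef{def:w} we have $\G=\agtII{\pp}{\q}{i}{I}{\la}{\G}$ with $\inc\ms=\CommAsI{\pp}{\la_h}{\q}$ for some $h\in I$, and by \refToLemma{lem:w-msg-dec}(\ref{lem:w-msg-dec:1}) the communication $\inc\ms$ is enabled in $\G\parG\Msg$. Moreover $\inc\ms$ is the only communication enabled in $\G\parG\Msg$ whose player is $\q$: since $\G$ is an input choice, every player-$\q$ transition is an instance of Rule \rn{Top-In}, which reads the first $\pp$-to-$\q$ message of $\Msg$, and that message is $\ms$ (as $\Msg\equiv\addMsg\ms{\Msg'}$ and, by \refToDef{def:w}, $\wgs\ms\G=0$ forces $\ms$ to be a $\pp$-to-$\q$ message whose label lies in $\set{\la_i\mid i\in I}$). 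As $\Delta_0$ is a maximal coherent set it must contain a communication with player $\q$, hence $\inc\ms\in\Delta_0$ and we take $h=0$. If $\wgs\ms\G>0$, we may assume $\inc\ms\notin\Delta_0$ (otherwise $h=0$ works); then $\G\parG\Msg=\G_0\parG\Msg_0\lockred{\Delta_0}\G_1\parG\Msg_1$, and \refToLemma{lem:w-msg-dec}(\ref{lem:w-msg-dec:3}) gives $\wgs\ms{\G_1}<\wgs\ms\G$ and $\Msg_1\equiv\addMsg\ms{\Msg_1'}$, while $\G_1$ is bounded and $\tupleOK{\G_1}{\Msg_1}$ by the preservation facts above. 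The tail $\set{\G_k\parG\Msg_k\lockredsub{\Delta_k}{k}\G_{k+1}\parG\Msg_{k+1}}_{1\le k<x}$ is a complete lockstep computation of $\G_1\parG\Msg_1$, so the induction hypothesis yields $h$ with $1\le h<x$ and $\inc\ms\in\Delta_h$, concluding the argument.

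The step I expect to cost the most is establishing that full balancing is preserved by reductions of type configurations, so that the induction hypothesis can legitimately be reinvoked on $\G_1\parG\Msg_1$; although the proof mirrors \refToProp{prop:IO}, the readability judgements attached to Rules \rn{B-Out} and \rn{B-In} must be threaded through carefully, in particular after an ``inside'' reduction one must still check that the messages promised to be read are read. The second delicate point is the base case, where maximality of $\Delta_0$ must be shown to force $\inc\ms$ itself --- not merely some communication of $\q$ --- into $\Delta_0$; this rests on $\ms$ being the first $\pp$-to-$\q$ message of $\Msg$, which follows from $\Msg\equiv\addMsg\ms{\Msg'}$ together with $\tupleOK\G\Msg$ and $\wgs\ms\G=0$.
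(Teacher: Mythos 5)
Your proof is correct and follows essentially the same route as the paper's: reduce to the type configuration via Strong Subject Reduction, then induct on $\wgs\ms\G$ using \refToLemma{lem:w-msg-dec}. The two points you flag as delicate are genuine — the paper's proof silently reuses $\tupleOK{\G_1}{\Msg_1}$ when reinvoking the induction hypothesis without stating a balancing-preservation analogue of \refToProp{prop:IO}, and it concludes $\inc\ms\in\Delta_0$ from maximality without the argument that $\inc\ms$ is the \emph{only} enabled communication with player $\q$ when $\wgs\ms\G=0$ — so your explicit treatment of both is a tightening of the same argument rather than a different one.
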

\begin{proof} 
By Strong Subject Reduction  (\refToTheorem{thm:ssr}) all complete lockstep computations of $\Nt\parN\Msg$ are complete lockstep computations of $\G\parN\Msg$. Then it is enough to show that $\G\parN\Msg$ is queue-consuming.

We prove the  statement  by induction on $\wgs\ms\G$. 
Consider a complete lockstep computation $\set{\G_k\parG\Msg_k \lockredsub{\Delta_k}{k}\G_{k+1}\parG\Msg_{k+1}}_{k<x}$ where $\G_0\parG\Msg_0 = \G\parG\Msg$ with $\Msg \equiv \addMsg\ms{\widehat\Msg}$. 
If $\wgs\ms \G=0$, by \refToLemma{lem:w-msg-dec}(\ref{lem:w-msg-dec:1}) we get $\G\parG\Msg\stackred{\inc\ms}\G'\parG\Msg'$ with $\Msg' \equiv \widehat\Msg$, 
then, since  $\Delta_0$ 
is a maximal coherent set, we have $\inc\ms\in\Delta_0$ as needed. 
If $\wgs\ms \G>0$, let us assume $\inc\ms\notin\Delta_0$ (otherwise the  statement  is trivial), 
then  by \refToLemma{lem:w-msg-dec}(\ref{lem:w-msg-dec:3}) $\G\parG\Msg\lockred{\Delta_0}\G_1\parG\Msg_1$  implies 
$\Msg_1 \equiv \addMsg\ms{\Msg'_1}$ and $\wgs\ms {\G_1} < \wgs\ms \G$. 
Hence the   statement  follows by induction hypothesis. 
\end{proof}

Now progress is an immediate consequence of Theorems~\ref{thm:df},~\ref{thm:nli} and ~\ref{thm:nom}. 

\begin{thm}[Progress]\label{thm:pr}
If $ \tyn{\G}\Nt$ and $\G$ is bounded and $\tupleOK\G\Msg$, then $\Nt\parN\Msg$ has the  progress   property.  
\end{thm}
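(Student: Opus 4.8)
The plan is to read off the three clauses of \refToDef{def:lock-freedom} from Theorems~\ref{thm:df}, \ref{thm:nli} and~\ref{thm:nom}, the only extra ingredient being that the hypotheses $\tyn\G\Nt$, ``$\G$ bounded'' and $\tupleOK\G\Msg$ are preserved along lockstep reductions. A preliminary remark I would record is that $\tupleOK\G\Msg$ entails $\tupleOKW\G\Msg$: a balancing derivation becomes a weak-balancing one just by dropping the readability side conditions of Rules \rulename{B-Out} and \rulename{B-In}. Hence all of $\tyn\G\Nt$, boundedness, weak balancing and balancing are at our disposal.

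Clause~(1), deadlock-freedom, is then exactly what \refToTheorem{thm:df} establishes (applied with $\tupleOKW\G\Msg$). For clauses~(2) and~(3) I would reason per derivative. Let $\Nt'\parN\Msg'$ be any lockstep derivative of $\Nt\parN\Msg$, obtained via a finite chain $\Nt\parN\Msg\lockred{\Delta_1}\cdots\lockred{\Delta_m}\Nt'\parN\Msg'$. Applying Strong Subject Reduction (\refToTheorem{thm:ssr}) $m$ times yields a parallel chain $\G\parN\Msg\lockred{\Delta_1}\cdots\lockred{\Delta_m}\G'\parN\Msg'$ with $\tyn{\G'}{\Nt'}$, while \refToProp{prop:b} and \refToProp{prop:IO} applied along the constituent one-step transitions keep the type bounded and the configuration weakly balanced, so $\G'$ is bounded and $\tupleOKW{\G'}{\Msg'}$. \refToTheorem{thm:nli} applied to $\tyn{\G'}{\Nt'}$, ``$\G'$ bounded'' and $\tupleOKW{\G'}{\Msg'}$ then gives that $\Nt'\parN\Msg'$ is input-enabling, which is clause~(2); and, once we also know $\tupleOK{\G'}{\Msg'}$, \refToTheorem{thm:nom} gives that $\Nt'\parN\Msg'$ is queue-consuming, which is clause~(3).

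The one step that needs more than a routine invocation — and the mild obstacle I anticipate — is that clause~(3) requires the \emph{full} balancing predicate to be preserved by reduction, whereas only its weak version is supplied by \refToProp{prop:IO}. I would close this gap with a lemma ``if $\tupleOK\G\Msg$ and $\G\parG\Msg\stackred\beta\G'\parG\Msg'$ then $\tupleOK{\G'}{\Msg'}$'', proved exactly as \refToProp{prop:IO}, by induction on the transition rules of \refToFigure{fig:ltsgtAs}, this time additionally verifying that the readability side conditions of Rules \rulename{B-Out} and \rulename{B-In} survive each step: for Rules \rulename{Top-Out}/\rulename{Top-In} the new balancing derivation is a subderivation of the old one, and for Rules \rulename{Inside-Out}/\rulename{Inside-In} one tracks the change to the queue exactly as in the weak case. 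Iterating this lemma along the chain above gives $\tupleOK{\G'}{\Msg'}$ for every lockstep derivative, and Progress then follows by assembling clauses~(1)--(3).
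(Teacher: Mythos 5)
Your proof is correct and follows exactly the route the paper takes: the paper presents Progress as an immediate consequence of Theorems~\ref{thm:df}, \ref{thm:nli} and~\ref{thm:nom}, and your assembly of these via Strong Subject Reduction, Propositions~\ref{prop:b} and~\ref{prop:IO}, and the observation that balancing entails weak balancing is precisely the intended (but unwritten) argument. The only ingredient not already in the paper, preservation of full balancing along reductions, is correctly identified by you as necessary for applying Theorem~\ref{thm:nom} to every lockstep derivative; the paper never states such a lemma even though it implicitly relies on it (e.g.\ in the inductive step of the proof of Theorem~\ref{thm:nom}), so supplying it as you sketch is a legitimate completion rather than a departure.
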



\section{The undecidability of (weak) balancing} 
\label{sect:dec-algo}

The cornerstone we build on to prove well-typed sessions have progress is the balancing property. 
This has a non-trivial definition mixing coinduction and induction, hence a natural question which may arise is whether it is decidable. 

In this section we answer this question, proving that (weak) balancing is actually undecidable. 
Then, in the next section, 
we will define an algorithm to test balancing of a type configuration proving it is sound with respect to the coinductive definition. 
 We carry out these results for the balancing property and only discuss how they can be easily adapted to the weak version.

The undecidability proof follows the same strategy used to show the undecidability of asynchronous session subtyping \cite{BCZ17}. 
Indeed, we provide a reduction from the complement of the acceptance problem on queue machines, known to be undecidable as queue machines are Turing complete \cite{Kozen97}, to the problem of checking if a given type configuration has the balancing property. 
We first recall basic definitions about queue machines. 

\begin{defi}[Queue machine] \label{def:qm}
A \emph{queue machine} is a tuple $M = \ple{Q,\Sigma,\uG,\$,s,\delta}$ where 
\begin{itemize}
\item $Q$ is a finite set of \emph{states};
\item $\uG$ is a finite set named \emph{queue alphabet};
\item $\Sigma \subseteq\uG$ is a finite set named \emph{input alphabet};
\item $\$ \in \uG\setminus\Sigma$ is the \emph{initial queue symbol};
\item $s \in Q$ is the \emph{initial state};
\item $\fun{\delta}{Q\times\uG}{Q\times\uG^\star}$ is the \emph{transition function}. 
\end{itemize}
A \emph{configuration} of $M$ is a pair $\ple{q,\ug} \in Q\times\uG^\star$, where $q$ is the current state and $\ug$ is a sequence of symbols in $\uG$ modelling the queue. 
An \emph{initial configuration} has shape $\ple{s,\ug\$}$ where $\ug \in \Sigma^\star$. 

The \emph{transition relation} $\to_M$ is a binary relation on configurations of $M$  defined  by 
\[ \ple{q,a\alpha} \to_M \ple{q',\alpha\ub} \quad\text{ if }\quad \delta(q,a) = \ple{q',\ub} \]
\end{defi}
\noindent
Note that, since $\delta$ is a total function, computation is deterministic and the only   final configurations of the machine $M$    have shape $\ple{q, \epsilon }$. 
That is, $M$ terminates only when the queue is empty. 

We can now define the acceptance condition for a queue machine. 
As usual, we denote by $\to_M^\star$ the reflexive and transitive closure of the transition relation. 

\begin{defi}[Acceptance] \label{def:queue.accept}
Let $M = \ple{Q,\Sigma,\uG,\$,s,\delta}$ be  a queue machine. 
A sequence $\ug \in \Sigma^\star$ is \emph{accepted} iff $\ple{s,\ug\$} \to_M^\star \ple{q, \epsilon }$, for some $q \in Q$. 
\end{defi}
Intuitively, this means that a sequence $\ug$ on the input alphabet is accepted iff starting from $\ple{s,\ug\$}$ the machine terminates. 

Let us now define the encoding which will give us the reduction. 
Assume a queue machine $M=(Q,\Sigma,\uG,\$,s,\delta)$.  
We define a global type whose players are $\pp$ and $\q$ and labels are symbols in $\uG$. 
Since $\pp$ is always the sender and $\q$ the receiver in defining the global type we avoid prefixing $!$ and $?$ by $\pp\,\q$. 
Moreover in the queue we only write the exchanged labels, that is, we identify message queues (in type configurations) with elements of $\uG^\star$. 
For every state $q \in Q$, let $\G_q$ be defined by the following equation: 
 \begin{align*} 
\G_q = ?\set{a;!b^a_1;\dots ;!b^a_{n_a}; \G_{q'}}_{a\in\uG} && 
\text{if }\delta(q,a)=(q',b^a_1\dots b^a_{n_a}) 
\end{align*} 
I.e. the global type $\G_q$ is an input reading all characters $a$ of the queue alphabet followed by the outputs of the characters $b^a_1,\dots, b^a_{n_a}$ which are put on the queue and then by the global type $\G_{q'}$, where $q'$ is the following state.   

The encoding of the machine configuration $\ple{q,\ug}$, denoted by $\textlbrackdbl\ple{q,\ug}\textrbrackdbl$,  is the type configuration 
$\G_q\parG \ug$. 

Our goal is to relate the acceptance problem  of  the queue machine $M$ with the balancing property of the type configurations obtained through the above encoding. 
The next theorem formally states this relationship. 

\begin{thm} \label{thm:reduction}
The machine $M$ does not accept $\ug \in \Sigma^\star$  iff the encoding of its initial configuration $\ple{s,\ug\$}$ is balanced, that is, 
$\tupleOK{\G_s}{\ug\$}$. 
\end{thm}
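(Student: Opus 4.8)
The plan is to prove the two implications by relating runs of the queue machine $M$ to derivations of the balancing judgement $\tupleOK{\G_q}{\ug}$, and to the weak-balancing / reading judgements underneath it. The key observation is that one ``macro-step'' of a balancing derivation starting from $\tupleOK{\G_q}{a\alpha}$ must first apply Rule \rn{B-In} reading the label $a$ at the head of the queue (this is forced: the queue is nonempty, $\G_q$ is an input choice over \emph{all} of $\uG$, and \rn{B-In} is the only applicable rule), yielding $\tupleOK{!b^a_1;\dots;!b^a_{n_a};\G_{q'}}{\alpha}$; then $n_a$ applications of Rule \rn{B-Out} append $b^a_1\cdots b^a_{n_a}$ to the queue, leaving $\tupleOK{\G_{q'}}{\alpha\,b^a_1\cdots b^a_{n_a}}$. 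Since $\delta(q,a)=(q',b^a_1\cdots b^a_{n_a})$, this is exactly $\textlbrackdbl\ple{q',\alpha\ub}\textrbrackdbl$ where $\ple{q,a\alpha}\to_M\ple{q',\alpha\ub}$. So a balancing derivation, read top-down, simulates the deterministic run of $M$. The only way the derivation can ``finish'' a branch is at $\End$, via \rn{B-End}, which requires the queue to be empty; but $\G_q$ is never $\End$ for any $q$, so the derivation for $\tupleOK{\G_s}{\ug\$}$ never terminates by \rn{B-End} — it is an infinite derivation precisely when $M$ runs forever from $\ple{s,\ug\$}$, and it gets \emph{stuck} (no rule applies) precisely when $M$ reaches a configuration $\ple{q,\epsilon}$, i.e. when $M$ accepts.

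**Proof of the two directions.** For the ``only if'' direction: suppose $M$ does not accept $\ug$. Then the run $\ple{s,\ug\$}\to_M\ple{q_1,\alpha_1}\to_M\cdots$ never reaches an empty queue, hence is infinite, and every configuration $\ple{q_k,\alpha_k}$ has $\alpha_k\neq\epsilon$. I would build the candidate coinductive derivation for $\tupleOK{\G_s}{\ug\$}$ by stacking the macro-steps described above: each $\tupleOK{\G_{q_k}}{\alpha_k}$ reduces (via one \rn{B-In} and finitely many \rn{B-Out}) to $\tupleOK{\G_{q_{k+1}}}{\alpha_{k+1}}$. Every instance of \rn{B-Out} and \rn{B-In} carries a side condition, respectively $\algread{\G_{q_k}}{\alpha_k}$ and a reading condition on the residual type; these must be discharged. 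Here I use that $\G_{q}$ reads every possible head symbol: for \emph{any} nonempty queue $\ms\cdot\Msg$, the judgement $\algread{\G_q}{\ms\cdot\Msg}$ holds — one applies \rn{In-R1} (the head message is readable since $\G_q$ ranges over all of $\uG$), then the premises $\algread{!b_1;\dots;!b_n;\G_{q'}}{\Msg}$, which reduce by \rn{Out-R} to $\algread{\G_{q'}}{\Msg}$ on a strictly shorter queue; so $\algread{\G_q}{-}$ follows by an easy induction on queue length, with base case \rn{Empty-R}. Thus the side conditions are met at every node and the infinite tree is a legitimate balancing derivation, so $\tupleOK{\G_s}{\ug\$}$.

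**The converse.** For the ``if'' direction, I argue contrapositively: suppose $M$ accepts $\ug$, say $\ple{s,\ug\$}\to_M^\star\ple{q,\epsilon}$ in $m$ steps, and suppose toward a contradiction that $\tupleOK{\G_s}{\ug\$}$. Using the forced shape of balancing derivations (the macro-step analysis above — \rn{B-In} is the unique applicable rule at an input node with nonempty queue, and each intermediate $!b;\ldots$ type forces \rn{B-Out}), any derivation of $\tupleOK{\G_s}{\ug\$}$ must, after exactly the $m$ macro-steps mirroring the run of $M$, arrive at a node $\tupleOK{\G_q}{\epsilon}$. But $\G_q$ is an input type, not $\End$, and the queue is empty, so neither \rn{B-End} nor \rn{B-In} (which needs a head message) nor \rn{B-Out} applies: there is no rule with this conclusion, contradicting the assumption that a derivation exists. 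Hence $\tupleOK{\G_s}{\ug\$}$ fails. Combining the two directions gives the equivalence. Finally, I would note the statement for weak balancing is obtained by the same argument with the side conditions dropped (Rules \rn{WB-Out}, \rn{WB-In} in \refToFig{wwfConfig}), which if anything only makes the forced-shape analysis simpler.

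**Main obstacle.** The routine calculations are the macro-step unfolding and the length-induction for $\algread{\G_q}{-}$; neither is hard. The one point demanding care is the ``forced shape'' claim used in both directions — formally, that in any (possibly infinite) balancing derivation the first rule applied to $\tupleOK{\G_q}{a\alpha}$ is \rn{B-In} reading $a$, and then exactly $n_a$ instances of \rn{B-Out} follow — together with handling the queue structural congruence $\equiv$ correctly (the head symbol of the queue is well-defined because all messages here have the same sender $\pp$ and receiver $\q$, so $\equiv$ is trivial on these queues, which is why the encoding identifies queues with words in $\uG^\star$). I expect this bookkeeping about determinism of rule application, plus the termination-vs-stuck dichotomy at empty queues, to be the crux; the Turing-completeness of queue machines \cite{Kozen97} then delivers undecidability as a corollary.
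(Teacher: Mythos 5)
Your proposal is correct and follows essentially the same route as the paper: the forward direction is the paper's coinductive construction (Lemma~\ref{lem:div-iom}, with the read side conditions discharged by your induction on queue length), and your ``forced shape'' analysis of the derivation is exactly the paper's pair of lemmas that balancing is preserved by machine steps (Lemma~\ref{lem:tr-sr}) and forces the machine to be able to step (Lemma~\ref{lem:tr-prog}), merely packaged contrapositively (acceptance yields a stuck node $\tupleOK{\G_q}{\epsilon}$) rather than as ``balanced implies divergent.'' No gaps.
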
 

To prove the theorem we rely on  some  
lemmas. 
The first one shows that diverging computations in a queue machine correspond to derivations of the balancing property. 

\begin{lem}\label{lem:div-iom}
If $\ple{q,\ug}$ diverges, then $\tupleOK{\G_q}{\ug}$ holds. 
\end{lem}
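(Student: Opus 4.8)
\textbf{Proof plan for Lemma~\ref{lem:div-iom}.} The plan is to proceed by coinduction on the definition of the balancing judgement $\tupleOK{-}{-}$. Concretely, I would exhibit a set $\mathcal{S}$ of pairs $\pair{\G_q}{\ug}$ and show it is backward closed under the rules \rn{B-End}, \rn{B-Out}, \rn{B-In} of \refToFig{wfConfig}, where we restrict attention to those $\ple{q,\ug}$ from which the queue machine $M$ diverges. Since computation in $M$ is deterministic, from a diverging configuration $\ple{q,\ug}$ the queue is necessarily non-empty (a configuration $\ple{q,\epsilon}$ is final, hence does not diverge), so $\ug = a\alpha$ for some $a\in\uG$ and $\delta(q,a) = (q',b^a_1\cdots b^a_{n_a})$, and $\ple{q,a\alpha}\to_M\ple{q',\alpha\,b^a_1\cdots b^a_{n_a}}$ is again diverging.

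The key step is to show how one step of $\to_M$ is simulated by a bounded number of applications of the balancing rules. Given $\G_q = \agtI{\pp}{\q}{a}{\uG}{!b^a_1;\cdots;!b^a_{n_a};\G_{?}}$ and queue $a\alpha$, Rule \rn{B-In} (with $h$ the index corresponding to $a$, and side condition $\algread{\G_q}{a\alpha}$) reduces the goal to $\tupleOK{!b^a_1;\cdots;!b^a_{n_a};\G_{q'}}{\alpha}$. Then $n_a$ successive applications of Rule \rn{B-Out} move $b^a_1,\dots,b^a_{n_a}$ onto the tail of the queue, reducing the goal to $\tupleOK{\G_{q'}}{\alpha\,b^a_1\cdots b^a_{n_a}}$, which is exactly the encoding of $\ple{q',\alpha\,b^a_1\cdots b^a_{n_a}}$ — a configuration from which $M$ still diverges. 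Thus $\mathcal{S}$ is closed, and coinduction yields $\tupleOK{\G_q}{\ug}$.

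The main obstacle is discharging the side conditions of Rules \rn{B-In} and \rn{B-Out}, namely the readability judgements $\algread{\G_q}{a\alpha}$ and $\algread{!b^a_1;\cdots;!b^a_{n_a};\G_{q'}}{\alpha}$, which are inductive and so must be proved outright rather than coinductively. I would handle these by a separate auxiliary lemma: for every state $q$ and every finite queue $\ug\in\uG^\star$, $\algread{\G_q}{\ug}$ holds, proved by induction on the length of $\ug$. The point is that each $\G_q$ is an input type reading \emph{every} symbol $a\in\uG$, so Rule \rn{In-R1} always applies to consume the head of the queue, and each $\G_q$ is productive so that after finitely many steps the queue is emptied and Rule \rn{Empty-R} closes the derivation; the intervening output prefixes $!b^a_1;\cdots;!b^a_{n_a}$ are handled by Rule \rn{Out-R}, which leaves the queue unchanged. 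Since the queue strictly shrinks at each input and the alphabet is finite (so the ``match'' case of \rn{In-R1} is always the one that fires), this induction terminates, giving readability for all reachable configurations, which is precisely what the side conditions in the coinductive argument above require.
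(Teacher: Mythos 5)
Your proposal is correct and follows essentially the same route as the paper: a coinductive argument exhibiting a set of judgements closed under the balancing rules, with the readability side conditions discharged separately by observing that every $\G_q$ inputs all of $\uG$ so the queue is always consumable. The one point to tighten is that your set $\mathcal{S}$ must contain not only the encodings $\tupleOK{\G_{q}}{\ug}$ of machine configurations but also the intermediate judgements $\tupleOK{!b^a_j;\dots;!b^a_{n_a};\G_{q'}}{\alpha\, b^a_1\cdots b^a_{j-1}}$ arising from peeling off the output prefixes (the paper's sets $A_i$ list these explicitly), since consistency must hold for single rule applications; your step-by-step description already identifies exactly these judgements, so this is only a matter of stating the set correctly.
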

\begin{proof}
Since $\ple{q,\ug}$ diverges, we have a sequence $\ple{q_i,\ug_i}_{i \in \N}$ such that 
$\ple{q,\ug} = \ple{q_0,\ug_0}$ and, for all $i \in \N$, $\ple{q_i,\ug_i} \to_M \ple{q_{i+1},\ug_{i+1}}$. 
By definition of $\to_M$ we have $\ug_i = a_i\ub_i$ and $\delta(q_i,a_i) = \ple{q_{i+1},b_1^{a_i}\ldots b_{n_{a_i}}^{a_i}}$. 
For all $i \in \N$, define the set   $A_i$ of balancing judgments as follows:   
\[ 
 A_i = \{ \tupleOK{\G_{q_i}}{\ug_i}\ ,\
\tupleOK{!b_1^{a_i}; \ldots !b_{n_{a_i}}^{a_i}; \G_{q_{i+1}}}{\ub_i}\ ,\  
\ldots\ ,\ 
\tupleOK{!b_{n_{a_i}}^{a_i}; \G_{q_{i+1}}}{\ub_i b_1^{a_i}\ldots b_{n_{a_i}-1}^{a_i}} \} 
\]
and let $A = \bigcup_{i \in \N} A_i$. 
We prove that $A$ is consistent with respect to the rules defining balancing, then the    statement    follows by coinduction. 

We can  easily  prove that for every judgement   $\tupleOK{\G}{\ub}$ in $A$, the judgement $\algread{\G}{\ub}$ is derivable. In fact each $\ub$ only contains symbols in $\uG$ and $\G$ always offers some outputs  and then a choice of inputs for all symbols in $\uG$. 
  
Therefore, to conclude, we have just to show that every $\tupleOK{\G}{\ub}$ in $A$ is the consequence of a rule whose premises are still in $A$. 
By definition of $A$, we have that $\tupleOK{\G}{\ub}$ belongs to $A_i$ for some $i \in \N$. 
We distinguish three cases. 
\begin{itemize}
\item If $\G = \G_{q_i}$, then $\ub = \ug_i = a_i\ub_i$ and, by definition of $\G_{q_i}$, 
  $\G = ?\set{a;\G_a}_{a\in\uG}$    with $\G_{a_i} =  !b_1^{a_i}; \ldots !b_{n_{a_i}}^{a_i}; \G_{q_{i+1}}$. 
Thus, $\tupleOK{\G}{\ub}$ is the consequence of Rule \rn{B-In} with   unique    premise $\tupleOK{\G_{a_i}}{\ub_i}$, that belongs to $A_i \subseteq A$ by definition. 
\item If $\G = !b_k^{a_i}; \ldots !b_{n_{a_i}}^{a_i}; \G_{q_{i+1}}$ for some $k < n_{a_i}$, then $\ub = \ub_i b_1^{a_i}\ldots b_{k-1}^{a_i}$. 
Thus, $\tupleOK{\G}{\ub}$ is the consequence of Rule \rn{B-Out} with unique    premise 
$\tupleOK{!b_{k+1}^{a_i}; \ldots !b_{n_{a_i}}^{a_i}; \G_{q_{i+1}}}{\ub b_k^{a_i}}$, that belongs to $A_i \subseteq A$ by definition. 
\item If $\G = !b_{n_{a_i}}^{a_i}; \G_{q_{i+1}}$, then $\ub = \ub_ib_1^{a_i}\ldots b_{n_{a_i}-1}^{a_i}$. 
Thus, $\tupleOK{\G}{\ub}$ is the consequence of Rule \rn{B-Out} with unique    premise 
$\tupleOK{\G_{q_{i+1}}}{\ub b_{n_{a_i}}^{a_i}}$ that belongs to $A_{i+1}\subseteq A$ as $\ub b_{n_{a_i}}^{a_i} = \ub_i b_1^{a_i}\ldots b_{n_{a_i}}^{a_i} = \ug_{i+1}$ by definition. \qedhere
\end{itemize}
\end{proof}

The following lemma shows that transitions in the queue machine preserve the balancing property. 

\begin{lem}\label{lem:tr-sr} 
If $\ple{q,\ug}\to_M\ple{q',\ug'}$ and $\tupleOK{\G_q}{\ug}$, then   $\tupleOK{\G_{q'}}{\ug'}$.
\end{lem}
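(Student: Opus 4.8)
The plan is to unfold the single queue-machine transition $\ple{q,\ug}\to_M\ple{q',\ug'}$ into a corresponding sequence of transitions on type configurations, and then invoke Proposition~\ref{prop:IO}-style reasoning adapted to balancing. Since $\ple{q,a\alpha}\to_M\ple{q',\alpha\ub}$ with $\delta(q,a)=\ple{q',b^a_1\cdots b^a_{n_a}}$, we have $\ug = a\alpha$ and $\ug' = \alpha b^a_1\cdots b^a_{n_a}$. By definition of the encoding, $\G_q = {?}\{a;{!}b^a_1;\dots;{!}b^a_{n_a};\G_{q'}\}_{a\in\uG}$, so its only move on queue $\ug = a\alpha$ is the input step reading $a$ at the head, landing in $\agt!{b^a_1};\dots;{!}b^a_{n_a};\G_{q'}\parG\alpha$. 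Then $n_a$ output steps put $b^a_1,\dots,b^a_{n_a}$ onto the tail of the queue, landing in $\G_{q'}\parG\alpha b^a_1\cdots b^a_{n_a} = \G_{q'}\parG\ug'$.

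First I would formalise this chain of transitions using Rules \rn{Top-In} and \rn{Top-Out} of \refToFig{ltsgtAs}, obtaining $\G_q\parG\ug \stackred{\co_0}\G_{q'}^{(1)}\parG\Msg_1 \stackred{\co_1}\cdots\stackred{\co_{n_a}}\G_{q'}\parG\ug'$ where $\co_0 = \commI\pp{a}\q$ and each $\co_k = \commO\pp{b^a_k}\q$. Then the key step is to observe that each such one-step transition preserves balancing: that is, to prove an analogue of Proposition~\ref{prop:IO} for the judgement $\tupleOK{\cdot}{\cdot}$ rather than $\tupleOKW{\cdot}{\cdot}$. In fact, for the transitions that occur here only the \emph{top-level} rules \rn{Top-In} and \rn{Top-Out} are used, never the ``inside'' rules, so the argument is much simpler than the full proposition. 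For \rn{Top-Out}: if $\tupleOK{\agtO\pp\q iI\la\G}\Msg$ then this must be concluded by Rule \rn{B-Out}, whose premises give $\tupleOK{\G_h}{\addMsg\Msg{\mq\pp{\la_h}\q}}$ for every $h\in I$ — exactly the balancing judgement for the target configuration. For \rn{Top-In}: if $\tupleOK{\agtI\pp\q\la\G}{\addMsg{\mq\pp{\la_h}\q}\Msg}$ then this is concluded by Rule \rn{B-In} (note the side condition $\algread{\agtII\pp\q iI\la\G}{\addMsg{\mq\pp{\la_h}\q}\Msg}$ holds), whose premise $\tupleOK{\G_h}\Msg$ is the balancing of the target. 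Chaining these $n_a+1$ implications starting from $\tupleOK{\G_q}{\ug}$ yields $\tupleOK{\G_{q'}}{\ug'}$.

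I would present this by stating the two single-step preservation facts as inline observations (or cite that they are instances of the same kind of argument as Proposition~\ref{prop:IO}), then composing them. The main obstacle — though it is mild here — is making sure the side conditions on the balancing rules are discharged: specifically, when applying \rn{B-In} one needs $\algread{\G_q}{\ug}$, and when applying \rn{B-Out} one needs $\algread{\,\cdot\,}{\cdot}$ for the intermediate output types. But these readability facts follow exactly as in the proof of Lemma~\ref{lem:div-iom}: every queue here contains only symbols in $\uG$, and every global type $\G_q$ (and each of its output-prefixed residuals) offers, after finitely many outputs, a choice of inputs covering \emph{all} symbols of $\uG$, so Rule \rn{In-R1} always applies and readability is derivable. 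Hence the side conditions are automatic, and the chain of implications goes through, giving $\tupleOK{\G_{q'}}{\ug'}$ as required.
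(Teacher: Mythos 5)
Your proposal is correct and is essentially the paper's own argument: the paper also unfolds $\ug=a\alpha$, $\ug'=\alpha b_1^a\cdots b_{n_a}^a$ and then inverts Rule \rn{B-In} once and Rule \rn{B-Out} $n_a$ times on the given derivation of $\tupleOK{\G_q}{\ug}$, reaching $\tupleOK{\G_{q'}}{\ug'}$ as a sub-derivation; your ``single-step preservation'' facts for \rn{Top-In}/\rn{Top-Out} are exactly those inversion steps. The only superfluous part is your concern about discharging the $\vdash_{\mathsf{read}}$ side conditions: since you are deconstructing an existing derivation rather than building a new one, they hold automatically.
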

\begin{proof}
If $\ple{q,\ug}\to_M \ple{q',\ug'}$, then $\ug = a\alpha$, $\delta(q,a) = \ple{q',\ub}$ and $\ug' = \alpha\ub$ with $\ub = b_1^a\ldots b_{n_a}^a$. 
If $\tupleOK{\G_q}{\ug}$ holds, 
since   $\G_q = ?\set{a'; b_1^{a'};\ldots b_{n_{a'}}^{a'}; \G_{q_{a'}}}_{{a'}\in\uG}$ with $\delta(q,{a'}) = \ple{q_{a'},b_1^{a'}\ldots b_{n_{a'}}^{a'}}$, we get $q_a=q'$ and $\tupleOK{\G_q}{\ug}$   
 is derived by Rule \rn{B-In}  with 
unique premise 
$\tupleOK{!b_1^a ; \ldots b_{n_a}^a; \G_{q'}}{\alpha}$.
This premise  
is obtained by applying $n_a$ times Rule \rn{B-Out} to the judgement \linebreak
$\tupleOK{\G_{q'}}{\alpha b_1^a\ldots b_{n_a}^a}$, hence it holds as needed. 
\end{proof}

Finally, the next lemma shows that a configuration $\ple{q,\ug}$ whose encoding has the balancing property is not stuck. 

\begin{lem}\label{lem:tr-prog}
If $\tupleOK{\G_q}{\ug}$ holds, then $\ple{q,\ug}\to_M$. 
\end{lem}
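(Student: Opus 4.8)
The plan is to reduce the statement to the fact that a queue machine configuration can get stuck only when its queue is empty. Indeed, by construction $\G_q$ is an input choice offering \emph{every} symbol of the queue alphabet, say $\G_q = \agtI{\pp}{\q}{a}{\G}$ with $a$ ranging over $\uG$ and continuations $\G_a = {!b_1^a;\ldots;!b_{n_a}^a;\G_{q_a}}$ given by $\delta(q,a) = \ple{q_a, b_1^a\ldots b_{n_a}^a}$. Since $\delta$ is a total function, the only way $\ple{q,\ug}$ fails to move is $\ug = \epsilon$; so it suffices to derive from $\tupleOK{\G_q}{\ug}$ that $\ug\ne\epsilon$.

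For this I would inspect the rule of which the judgment $\tupleOK{\G_q}{\ug}$ is the conclusion (this is legitimate even though balancing is defined coinductively: the root of any, possibly infinite, derivation is still an instance of one of the rules of \refToFigure{fig:wfConfig}). Rule \rn{B-End} is excluded because $\G_q\ne\End$, and Rule \rn{B-Out} is excluded because $\G_q$ is an input choice, not an output choice. Hence the derivation of $\tupleOK{\G_q}{\ug}$ ends with Rule \rn{B-In}, whose premise shape forces $\ug\equiv\addMsg{\mq\pp{a}\q}{\alpha}$ for some $a\in\uG$; recalling that in the encoding message queues are identified with words of $\uG^\star$, this says $\ug = a\alpha$ with $a\in\uG$, so in particular $\ug\ne\epsilon$.

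It then remains to exhibit the transition: writing $\ug = a\alpha$ and taking the unique pair $\ple{q',\ub} = \delta(q,a)$ (unique and defined since $\delta$ is total), the definition of $\to_M$ yields $\ple{q,a\alpha}\to_M\ple{q',\alpha\ub}$, i.e.\ $\ple{q,\ug}\to_M$, which is the claim. I do not expect a genuine obstacle here; the only point requiring a little care is checking that the case analysis on the balancing rules is exhaustive and that the side condition of \rn{B-In} is read correctly — namely that the (possibly $\equiv$-reordered) head of the queue must match one of the offered input labels, which is automatic as soon as the queue is non-empty because $\G_q$ offers all of $\uG$.
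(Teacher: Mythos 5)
Your proposal is correct and follows essentially the same route as the paper's proof: inspect the last rule of the (coinductive) derivation of $\tupleOK{\G_q}{\ug}$, observe that only \rn{B-In} can apply since $\G_q$ is an input choice, conclude $\ug = a'\ug'$ with $a'\in\uG$, and use totality of $\delta$ to exhibit the transition. The only cosmetic difference is that the paper reads off the move directly from the non-empty queue rather than phrasing it as ``stuck only if the queue is empty,'' but the content is identical.
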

\begin{proof}
By definition we have $\G_q = ?\set{a; !b_1^a; \ldots !b_{n_a}^a; \G_{q_a}}_{a\in\uG}$ with $\delta(q,a) = \ple{q_a,b_1^a\ldots b_{n_a}^a}$. 
Then, $\tupleOK{\G_q}{\ug}$ is derived by Rule \rn{B-In}, therefore 
$\ug = a'\ug'$ for some $a' \in \uG$ and $\ug' \in \uG^\star$ and so 
$\ple{q,\ug}\to_M\ple{q_{a'},\ug'b_1^{a'}\ldots b_{n_{a'}}^{a'}}$. 
\end{proof}

\begin{proof}[Proof of \refToThm{reduction}]
To prove the left-to-right implication, note that 
if $M$ does not accept $\ug \in \Sigma^\star$, then $\ple{s,\ug\$}$ diverges, hence, by \refToLem{div-iom}, we get $\tupleOK{\G_s}{\ug\$}$. 
Towards a proof of the other implication, we inductively construct an infinite computations $\ple{q_i,\ug_i}$ where $\tupleOK{\G_{q_i}}{\ug_i}$ holds for all $i \in \N$ as follows: 
\begin{itemize}
\item set $\ple{q_0,\ug_0} = \ple{s,\ug\$}$, then $\tupleOK{\G_{q_0}}{\ug_0}$ holds by hypothesis; 
\item since $\tupleOK{\G_{q_i}}{\ug_i}$ holds by induction hypothesis, by \refToLem{tr-prog}, we have $\ple{q_i,\ug_i}\to_M\ple{q',\ug'}$ and, by \refToLem{tr-sr}, we get $\tupleOK{\G_{q'}}{\ug'}$ holds. 
Then, we set $\ple{q_{i+1},\ug_{i+1}} = \ple{q',\ug'}$. 
\end{itemize}
Therefore, $\ple{s,\ug\$}$ diverges and so $\ug$ is not accepted. 
\end{proof}

\begin{cor}\label{cor:undecidability}
The balancing property is undecidable. 
\end{cor}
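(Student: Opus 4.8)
The plan is to obtain the corollary as an immediate consequence of \refToThm{reduction} via a standard many-one reduction argument. First I would observe that the encoding used there is effective: given a queue machine $M = \ple{Q,\Sigma,\uG,\$,s,\delta}$ and an input word $\ug\in\Sigma^\star$, the family $\set{\G_q}_{q\in Q}$ is specified by a finite system of regular equations, one for each state $q\in Q$, and each right-hand side mentions only the finitely many labels and outputs dictated by $\delta$. Hence $\G_s$, together with the initial queue $\ug\$$, can be computed from $M$ and $\ug$, so the map $(M,\ug)\mapsto \tupleOK{\G_s}{\ug\$}$ is a computable reduction from the predicate ``$M$ does not accept $\ug$'' to the balancing predicate.

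Next I would invoke \refToThm{reduction}, which states precisely that $M$ does not accept $\ug$ if and only if $\tupleOK{\G_s}{\ug\$}$. Therefore, if the balancing property were decidable, composing a decision procedure for it with the computable encoding above would decide the set $\set{(M,\ug)\mid M\text{ does not accept }\ug}$. Since the complement of a decidable set is decidable, the acceptance problem for queue machines would then be decidable as well; but queue machines are Turing complete \cite{Kozen97}, so this problem is undecidable, a contradiction. Hence balancing is undecidable.

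There is no real obstacle here: all the substantive work is already carried out in \refToThm{reduction} and its supporting lemmas, and the only points to check are that the reduction is genuinely computable (immediate from the finiteness of $Q$ and $\uG$ and the regularity of the encoded global types) and that one reduces from a genuinely undecidable problem (the acceptance problem for Turing-complete queue machines). The same argument yields undecidability of weak balancing, using the version of \refToThm{reduction} for $\tupleOKW{\cdot}{\cdot}$ indicated at the beginning of this section: the global types produced by the encoding offer, at each input point, a branch for every symbol of $\uG$, and the queues only contain symbols of $\uG$, so on these configurations the readability side conditions hold automatically and weak balancing and balancing coincide.
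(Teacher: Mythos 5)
Your proposal is correct and follows exactly the route the paper intends: the corollary is drawn as an immediate consequence of \refToThm{reduction}, using the computability of the encoding and the undecidability of the acceptance problem for Turing-complete queue machines. The remark about weak balancing also matches the paper's closing observation in that section.
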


 We can get the same result for weak balancing just removing the parts of the proofs dealing with the readability judgement in Lemmas~\ref{lem:div-iom}, \ref{lem:tr-sr}, as they do not play any relevant role in the reduction.



\section{Recovering effectiveness} 
\label{sect:algo}

In order to recover from the undecidability result in \refToCor{cor:undecidability}, 
we define an inductive version of the balancing property, proving it is sound with respect to the coinductive definition (\refToThm{alg-sound}), thus getting a sound algorithm to check balancing. 
Note that this soundness result is enough to ensure that well-typed sessions - where the corresponding type configuration is successfully checked by this algorithm - has progress, thus obtaining an effective type system.  

This inductive definition, described by  the rules in \refToFig{wfConfigA},  follows a standard pattern used to deal with regular structures: 
we consider an enriched judgement $\Hset\OKA\G\Msg$, where 
$\Hset$ is an auxiliary set  of pairs  of  types, queues used to detect cycles in global types and test a condition on them (see the side condition of Rule \rn{\ib{Cycle}}), thus avoiding non-termination. 
Other rules are essentially the same as in the coinductive version. 

\begin{figure} 
\begin{math} 
\begin{array}{c}
\NamedRule{\rulename{\ib{End}}}{} 
{\Hset\OKA{\End}{\emptyset}} {}
\quad\quad
\NamedRule{\rulename{\ib{Cycle}}}{} 
{\Hset,(\G,\Msg)\OKA{\G}{\Msg'}} {\OK\G\Msg{\Msg'}}
\\[3ex]
\NamedRule{\rulename{\ib{Out}}}
{\Hset,(\agtO{\pp}{\q}i I{\la}{\G},\Msg)\OKA{\G_i}{\addMsg\Msg{\mq\pp{\la_i}\q}}\quad \forall i\in I} 
{\Hset\OKA{\agtO{\pp}{\q}i I{\la}{\G}}{\Msg}}
{ \algread{\agtO{\pp}{\q}{i}{I}{\la}{\G}}{\Msg}}
\\[3ex]
\NamedRule{\rulename{\ib{In}}}
{\Hset,(\agtI \pp\q \la \G,\mq\pp{\la_h}\q\cdot\Msg)\OKA{\G_h}{\Msg}} 
{\Hset\OKA{\agtI \pp\q \la \G}{\mq\pp{\la_h}\q\cdot\Msg}}  
{\algread{\agtII{\pp}{\q}{i}{I}{\la}{\G}}{\addMsg{\mq\pp{\la_h}\q}{\Msg}} \  h\in I}
\end{array}
\end{math} 
\caption{Balancing of type configurations (inductive version).} 
\label{fig:wfConfigA}
\end{figure}

To make this pattern work, we have to appropriately define the judgement $\OK\G\Msg{\Msg'}$. 
One obvious possibility is to require $\Msg \equiv \Msg'$  and $\algread{\G}{\Msg}$, which is always sound,  because it forces derivations to be regular. 
However, this is too restrictive, as it accepts only type configurations where the queue cannot grow indefinitely. 
This for instance is not the case for  our running example. 
Therefore, we consider a more sophisticated definition of $\OK\G\Msg{\Msg'}$ which goes beyond regular derivations. 
The judgement is defined in \refToFig{algwf}. 

\begin{figure}[b] 
\begin{math}
\begin{array}{c}
\NamedRule{}{
  \algcomp{}{\G}{\Msg''} \quad 
  \algreadinf{}{\G}{\Msg''} \quad 
  \algread{\G}{\Msg} 
}{ \OK\G\Msg{\Msg'} }{ \Msg' \equiv \addMsg{\Msg}{\Msg''} } 
\\[3ex]
\hline
\\[2ex]
\NamedRule{\rn{\agr{End}}}{}{\algcomp{\Bset}{\End}{\Msg}}{} 
\qquad
\NamedRule{\rn{\agr{Cycle}}}{}{ \algcomp{\Bset,(\G,\Msg)}{\G}{\Msg} }{} 
\\[3ex]
\hspace*{-2mm} 
\NamedRule{\rn{\agr{Out}}}
{ \algcomp{\Bset, (\agtO{\pp}{\q}{i}{I}{\la}{\G},\Msg)}{\G_i}{\Msg_i} \ \ \forall i\in I }
{ \algcomp{\Bset}{\agtO{\pp}{\q}{i}{I}{\la}{\G}}{\Msg} }
{ \begin{array}{c}  \addMsg\Msg{\mq\pp{\la_i}\q} \equiv_{\G_i} \addMsg{\mq\pp{\la_i}\q}{\Msg_i}~\forall i\in I \end{array}}
\hspace*{-5mm} 
\\[3ex]
\NamedRule{\rn{\agr{In}}}
{ \algcomp{\Bset,(\agtI{\pp}{\q}{\la}{\G},\Msg)}{\G_i}{\Msg}  \ \ \forall i\in I}
{ \algcomp{\Bset}{\agtI{\pp}{\q}{\la}{\G}}{\Msg}} {} 
\\[3ex]
\hline
\\[2ex]
\NamedRule{\rn{End-DR}}{}
{ \algreadinf{\Gset}{\End}{\emptyset} } {} 
\qquad
\NamedRule {\rn{Cycle-DR}}
{ \algread{\G}{\Msg} }
{ \algreadinf{\Gset,\G}{\G}{\Msg} } {}
\\[3ex] 
\NamedRule {\rn{Out-DR}}
{ \algreadinf{\Gset,\agtO{\pp}{\q}{i}{I}{\la}{\G}}{\G_i}{\Msg} \ \ \forall i \in I } 
{ \algreadinf{\Gset}{\agtO{\pp}{\q}{i}{I}{\la}{\G}}{\Msg} } {}
\\[3ex] 
\NamedRule {\rn{In-DR}}
{ \algreadinf{\Gset,\agtI{\pp}{\q}{\la}{\G}}{\G_i}{\Msg}  \ \ \forall i\in I}
{ \algreadinf{\Gset}{\agtI{\pp}{\q}{\la}{\G}}{\Msg} } {}
\end{array}
\end{math}
\caption{Ok, agreement and deep read judgments.} \label{fig:algwf} 
\end{figure}

The first condition we require is  the equivalence $\Msg' \equiv \addMsg\Msg{\Msg''}$. 
This is needed because, if a message in $\Msg$ is not in $\Msg'$,   then   a coinductive derivation
of the judgement $\tupleOK{\G}{\Msg'}$ would get stuck on Rule \rn{B-In}, i.e., the judgement would not be derivable, making the algorithm unsound. 
However, we allow the presence of additional messages, so that 
the queue between two occurrences 
of the same global type can grow. 

In order to ensure that
{\em the messages in the queue $\Msg''$} do not interfere with the balancing 
of $\G$, we demand that 
they {\em can be exchanged with the outputs of $\G$.} 
This condition is checked by the judgement $\algcomp{}{\G}{\Msg''}$, whose definition relies on 
an extension of the equivalence $\equiv$ on queues, parametrised over a global type $\G$, 
denoted by $\equiv_\G$, see \refToDef{def:Gequiv}. 
The key rule is $\rn{\agr{Out}}$. 
This rule requires that for each branch $\G_i$ of an
output the message put at the end of the queue can go on top when queues are considered modulo  $\equiv_{\G_i}$.

Finally, we have to check that all messages in $\Msg'$ will be eventually read. 
To this end, we use judgements $\algread{\G}{\Msg}$ and $\algreadinf{}{\G}{\Msg''}$, where the latter ensures that 
messages in $\Msg''$  can be read in any path of $\G$ an unbounded number of times. 
This is needed because 
many copies of messages in $\Msg''$ will be accumulated at the end of the queue and we do not know in which path of $\G$ they will be read, thus 
they must be readable in all of them. 
In $\vdash_{\mathsf{dread}}$ all branches of a global type are explored without removing any message until getting $\End$ or an already considered global type. The first case is successful if the queue is empty, in the second case the judgement $\vdash_{\mathsf{read}}$ is required. 
For instance, if $\G=\agtIS{\pp}{\q}{\set{\Seq{\la}{\End},\Seq{\la'}{\End}}}$, then
$\algreadinfNot{}{\G}{\mq\pp{\la}\q}$, since the message $\mq\pp{\la}\q$ is not read in all branches.
However, $\algread{\G}{\mq\pp{\la}\q}$, since the message $\mq\pp{\la}\q$ is read in one of the branches.

To formally give the equivalence $\equiv_\G$ on queues  
we first define when two messages are $\G$-indistinguishable, meaning that  
the behaviour of $\G$ is independent on reading one or the other. Then, the equivalence $\equiv_\G$ 
is defined by extending the equivalence on queues to allow the replacement of messages with 
$\G$-indistinguishable ones.
\begin{defi}\label{def:Gequiv}
\begin{enumerate}
\item  The messages $\mq\pp\la\q$ and $\mq\pp{\la'}\q$ are \emph{$\G$-indistinguishable} if $\la$ and $\la'$ 
occur in $\G$ and 
for every input choice $\agtII{\pp}{\q}{i}{I}{\la}{\G}$ occurring in $\G$, 
\begin{itemize}
\item
either $\{\la,\la'\}\cap \{\la_i \mid i \in I\} = \emptyset$ 
\item or $\la = \la_h$, $\la' = \la_k$ and $\G_h = \G_k$ for some $h,k \in I$. 
\end{itemize}
\item The equivalence $\equiv_\G$ between queues is obtained extending $\equiv$ by the following clause: 
\[
\Msg_1\cdot \mq\pp\la\q \cdot \Msg_2 \equiv_\G \Msg_1\cdot \mq\pp{\la'}\q \cdot \Msg_2 
\quad \text{if}\quad 
\text{$\mq\pp\la\q,\mq\pp{\la'}\q$ are $\G$-indistinguishable}
\]
\end{enumerate}
\end{defi}
Notice that, if $\Msg \equiv_\G \Msg_1\cdot\mq\pp\la\q\cdot\Msg_2$, then 
$\Msg \equiv \Msg'_1\cdot\mq\pp{\la'}\q\cdot\Msg'_2$ with $\mq\pp\la\q,\mq\pp{\la'}\q$ $\G$-indistinguishable and $\Msg_1 \equiv_\G \Msg'_1$ and $\Msg_2 \equiv_\G \Msg'_2$. 

Taking $\G=\agtIS{\pp}{\q}{\set{\Seq{\la}{\End},\Seq{\la'}{\End}}}$, we get that 
$\addMsg{\mq\pp{\la}\q}{\mq\pp{\la'}\q}\equiv_\G\addMsg{\mq\pp{\la'}\q}{\mq\pp{\la}\q}$,
whereas $\addMsg{\mq\pp{\la}\q}{\mq\pp{\la'}\q}\not\equiv\addMsg{\mq\pp{\la'}\q}{\mq\pp{\la}\q}$. 
We will see that  this parametrised equivalence plays a crucial role in  deriving  the agreement
judgement  for the example of the hospital.

To prove the soundness theorem (\refToThm{alg-sound}), 
we need properties of the auxiliary judgements.

First properties concern the two readability judgements. 
\refToLemma{lem:alg-read} shows that queues with the readability property can be split preserving readability. 
\refToLemma{lem:alg-read-inf}(\ref{lem:alg-read-inf:1}) and \refToLemma{lem:alg-read-inf}(\ref{lem:alg-read-inf:2}) prove weakening and cut elimination for deep readability, respectively.  
Relying on these two properties, 
\refToLemma{lem:alg-read-inf}(\ref{lem:alg-read-inf:3}) proves a strong form of inversion for deep readability, deriving it for any 
subtree  of a deeply  readable  global type with the same queue (when the set of 
hypotheses is empty). 
\refToLemma{lem:alg-read-inf}(\ref{lem:alg-read-inf:4}) and  \refToLemma{lem:read-merge}(\ref{lem:read-merge:1}) 
connect readability with  deep  readability. 
\refToLemma{lem:read-merge}(\ref{lem:read-merge:3}) extends \refToLemma{lem:alg-read} to deep readability. 
Finally, \refToProp{prop:equivG} shows that readability and deep readability are preserved by the parametrised equivalence $\equiv_\G$.

\begin{lem}\label{lem:alg-read}
If $\algread{\G}{\addMsg\Msg{\Msg'}}$, then $\algread{\G}{\Msg}$ and $\algread{\G}{\Msg'}$.
\end{lem}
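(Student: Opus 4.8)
The plan is to reduce both conclusions to a single strengthened statement and then prove that by induction on the readability derivation. Since queues are taken modulo $\equiv$, a queue is exactly the family, indexed by channels $(\pp,\q)$, of the finite sequence of labels sent from $\pp$ to $\q$, and every rule defining $\vdash_{\mathsf{read}}$ only inspects and possibly shortens the component of a single channel (\rn{Out-R} and \rn{In-R2} leave the queue untouched, \rn{In-R1} drops one head, \rn{Empty-R} needs the empty queue). The strengthened statement I would prove is: \emph{if $\algread{\G}{\Msg_1}$ and, for every channel $(\pp,\q)$, the $\pp\to\q$ component of $\Msg_2$ is a contiguous subsequence (a \emph{factor}) of the $\pp\to\q$ component of $\Msg_1$, then $\algread{\G}{\Msg_2}$.} From this the lemma is immediate: the $\pp\to\q$ component of $\Msg$ is a prefix, and that of $\Msg'$ a suffix, of the $\pp\to\q$ component of $\addMsg\Msg{\Msg'}$, and both prefixes and suffixes are factors.

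The strengthened statement is proved by induction on the derivation $\mathcal{D}$ of $\algread{\G}{\Msg_1}$. If $\Msg_2 = \emptyset$ we conclude by \rn{Empty-R}; otherwise $\Msg_1 \neq \emptyset$ too and $\mathcal{D}$ ends with \rn{Out-R}, \rn{In-R1}, or \rn{In-R2}. The \rn{Out-R} case is immediate from the induction hypothesis applied to the premises, which all carry the same queues. For \rn{In-R1} and \rn{In-R2} we have $\G = \agtI\pp\q\la\G$, with index set $I$, and the rule touches only the $\pp\to\q$ component; write $w$ for the $\pp\to\q$ component of $\Msg_2$, a factor of the $\pp\to\q$ component $u$ of $\Msg_1$. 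I would apply \rn{In-R1} to $\Msg_2$ when $w$ is nonempty and starts with some $\la_k$, $k\in I$, and \rn{In-R2} otherwise — the latter side condition being exactly ``$w$ is empty or starts with a non-matching label''. A short case analysis, on whether $\mathcal{D}$ removed the head of $u$ or left $u$ unchanged and on whether $w$ uses that head, shows that the $\pp\to\q$ component produced for $\Msg_2$ is again a factor of the $\pp\to\q$ component appearing in the matching premise of $\mathcal{D}$, while all other channels are untouched on both sides. Since the premises of \rn{Out-R}, \rn{In-R1} and \rn{In-R2} all concern $\G_i$ for every $i\in I$, the induction hypothesis applies to each, and the rule selected for $\Msg_2$ then derives $\algread{\G}{\Msg_2}$.

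The key point, and the place where a naive proof fails, is that the choice of \rn{In-R1} versus \rn{In-R2} for $\Msg_2$ need not agree with the choice made in $\mathcal{D}$: deleting the prefix $\Msg$ changes which message heads a channel, so a step of $\mathcal{D}$ that consumes via \rn{In-R1} a message belonging to $\Msg$ must be matched on the $\Msg'$ side by a \rn{In-R2} step consuming nothing, and conversely $\Msg'$ may expose a matching head where $\addMsg\Msg{\Msg'}$ did not. A direct simultaneous induction on $\algread{\G}{\addMsg\Msg{\Msg'}}$ proving ``$\algread{\G}{\Msg}$ and $\algread{\G}{\Msg'}$'' breaks exactly here, because after an unexpected \rn{In-R1} one would need to strip a further head from $\Msg'$ and the bare hypothesis about $\Msg'$ is not strong enough. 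The ``channel-wise factor'' relation repairs this: it is closed under dropping a prefix, under dropping a suffix, and — crucially — under removing one more matching head, which is all the induction ever uses. The remaining ingredients are routine: finiteness of readability derivations makes the induction well founded, and the rules being phrased modulo $\equiv$ makes the per-channel reformulation legitimate.
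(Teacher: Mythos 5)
Your proof is correct and follows essentially the same route as the paper's: both generalise the statement to closure of readability under a sub-queue relation (the paper uses channel-wise subsequences modulo $\equiv$, you use channel-wise contiguous factors) and then induct on the derivation for the larger queue, with the only delicate point being that the \rn{In-R1}/\rn{In-R2} choice may flip between the two queues. The difference in the chosen relation is immaterial, since both are closed under dropping a prefix, a suffix, and one further matching head, which is all the induction uses.
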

\begin{proof}  
We show that  $\algread{\G}{\Msg}$ implies $\algread{\G}{\Msg'}$ when  $\Msg'$ is a subsequence of $\Msg$ (modulo $\equiv$).
The proof is by induction on the derivation of $\algread{\G}{\Msg}$.  
The only interesting cases are when the last applied rule is either \rn{In-R1} or \rn{In-R2}.  
\begin{description}
\item [\rn{In-R1}] We have $\G = \agtI{\pp}{\q}{\la}{\G}$ and 
$\Msg \equiv \addMsg{\mq\pp{\la_h}\q}{\widehat\Msg}$ with $h\in I$  and $\algread{\G_i}{\widehat\Msg}$ for all $i\in I$. 
If $\Msg' \equiv \addMsg{\mq\pp{\la_k}\q}{\Msg''}$ for some $k \in I$ (note that $k$ and $h$ may differ), then $\Msg''$ is a subsequence of $\widehat\Msg$. 
Then, by induction hypothesis, we get  $\algread{\G_i}{\Msg''}$ for all $i\in I$  and so we conclude  by applying Rule  \rn{In-R1}.  
Otherwise, $\Msg'$ is a subsequence of $\widehat\Msg$, then by induction hypothesis  we get $\algread{\G_i}{\Msg'}$ for all $i \in I$ and so we conclude  by applying Rule \rn{In-R2}. 
\item [\rn{In-R2}] We have $\G = \agtI{\pp}{\q}{\la}{\G}$ and 
$\Msg \not\equiv \addMsg{\mq\pp{\la_i}\q}{\widehat\Msg}$  and $\algread{\G_i}{\Msg}$  for all $i\in I$.  
If $\Msg' \equiv \addMsg{\mq\pp{\la_h}\q}{\Msg''}$   with $h\in I$, 
since $\Msg''$ is still a subsequence of $\Msg$, by induction hypothesis,  we get $\algread{\G_i}{\Msg''}$ for all $i \in I$  and so we conclude  by applying Rule  \rn{In-R1}.  
Otherwise, again by induction hypothesis  we get $\algread{\G_i}{\Msg'}$, for all $i \in I$,  and so we conclude  by applying Rule \rn{In-R2}. \qedhere
\end{description}
\end{proof}

\begin{lem}\label{lem:alg-read-inf}\label{lem:read-merge}
The following properties hold. 
\begin{enumerate}
\item\label{lem:alg-read-inf:1} If $\algreadinf{\Gset}{\G}{\Msg}$, then $\algreadinf{\Gset,\Gset'}{\G}{\Msg}$. 
\item\label{lem:alg-read-inf:2} If $\algreadinf{\Gset}{\G}{\Msg}$ and $\algreadinf{\Gset,\G}{\G'}{\Msg}$, then $\algreadinf{\Gset}{\G'}{\Msg}$. 
\item\label{lem:alg-read-inf:3} If $\algreadinf{}{\G}{\Msg}$, then $\algreadinf{}{\G'}{\Msg}$ for any  type  $\G'$  occurring in  $\G$. 
\item\label{lem:alg-read-inf:4} If $\algreadinf{\Gset}{\G}{\Msg}$, then $\algread{\G}{\Msg}$. 
\item\label{lem:read-merge:1} If $\algread{\G}{\Msg_1}$ and  $\algreadinf{}{\G}{\Msg_2}$, then $\algread{\G}{\addMsg{\Msg_1}{\Msg_2}}$. 
\item\label{lem:read-merge:3} If $\algreadinf{\Gset}{\G}{\addMsg{\Msg_1}{\Msg_2}}$, then $\algreadinf{\Gset}{\G}{\Msg_1}$ and $\algreadinf{\Gset}{\G}{\Msg_2}$. 
\end{enumerate}
\end{lem}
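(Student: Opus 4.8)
The plan is to establish the six items essentially in the stated order, postponing item~(\ref{lem:read-merge:1}) to the very end since its proof relies on items~(\ref{lem:alg-read-inf:3}), (\ref{lem:alg-read-inf:4}) and~(\ref{lem:read-merge:3}). Items~(\ref{lem:alg-read-inf:1}) and~(\ref{lem:alg-read-inf:2}) are proved by induction on the derivation of the $\vdash_{\mathsf{dread}}$ hypothesis, the only mildly delicate rule being \rn{Cycle-DR}. For weakening, a cycle closed on some $\G\in\Gset$ can still be closed in the larger set $\Gset,\Gset'$, since the side premise $\algread{\G}{\Msg}$ is untouched. For cut, if $\algreadinf{\Gset,\G}{\G'}{\Msg}$ ends with \rn{Cycle-DR} closing on the very type $\G$ being cut, the conclusion $\algreadinf{\Gset}{\G}{\Msg}$ is exactly the first hypothesis; closing on any other $\H\in\Gset$ survives unchanged; and in the \rn{Out-DR}/\rn{In-DR} cases one pushes the cut into each branch, first weakening the first hypothesis $\algreadinf{\Gset}{\G}{\Msg}$ by the output/input type using item~(\ref{lem:alg-read-inf:1}).

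From items~(\ref{lem:alg-read-inf:1}) and~(\ref{lem:alg-read-inf:2}) I would extract a \emph{composition principle}: if $\algreadinf{\Gset}{\G}{\Msg}$ and $\algreadinf{}{\H}{\Msg}$ holds for every $\H\in\Gset$, then $\algreadinf{}{\G}{\Msg}$. This follows by induction on $\lgh\Gset$, cutting the members of $\Gset$ one at a time: pick $\H\in\Gset$, weaken $\algreadinf{}{\H}{\Msg}$ to $\algreadinf{\Gset\setminus\set\H}{\H}{\Msg}$, cut it against $\algreadinf{\Gset}{\G}{\Msg}$ to get $\algreadinf{\Gset\setminus\set\H}{\G}{\Msg}$, and recurse. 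This principle is the key to item~(\ref{lem:alg-read-inf:3}), for which a naive induction on the derivation does \emph{not} work, because descending into a branch enlarges the auxiliary set. Instead I induct on the number $d$ of immediate-subterm steps from $\G$ to $\G'$. If $d=0$ then $\G'=\G$. If $d>0$ then $\G\ne\End$, and since the hypothesis set is empty the derivation of $\algreadinf{}{\G}{\Msg}$ must end with \rn{Out-DR} or \rn{In-DR}, so for the branch $\G_k$ leading towards $\G'$ we have $\algreadinf{\set\G}{\G_k}{\Msg}$; the composition principle, using $\algreadinf{}{\G}{\Msg}$ itself to discharge the single hypothesis $\G$, gives $\algreadinf{}{\G_k}{\Msg}$, and the induction hypothesis applied to $\G_k$ (from which $\G'$ is reachable in $d-1$ steps) concludes.

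Item~(\ref{lem:alg-read-inf:4}) is a straightforward induction on the $\vdash_{\mathsf{dread}}$ derivation, converting \rn{End-DR}, \rn{Cycle-DR}, \rn{Out-DR}, \rn{In-DR} into \rn{Empty-R}, the side premise, \rn{Out-R}, and one of \rn{In-R1}/\rn{In-R2} respectively; in the \rn{In-DR} case one checks whether the queue has a matching message that can be brought to the front, shrinking the queue via \refToLemma{lem:alg-read} when applying \rn{In-R1}. Item~(\ref{lem:read-merge:3}) is an equally direct induction on the $\vdash_{\mathsf{dread}}$ derivation of $\algreadinf{\Gset}{\G}{\addMsg{\Msg_1}{\Msg_2}}$, using \refToLemma{lem:alg-read} in the \rn{Cycle-DR} case to split the side premise. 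Finally item~(\ref{lem:read-merge:1}) is proved by induction on the derivation of $\algread{\G}{\Msg_1}$: the \rn{Empty-R} case is item~(\ref{lem:alg-read-inf:4}); in the \rn{Out-R} and \rn{In-R1} cases one uses item~(\ref{lem:alg-read-inf:3}) to push $\algreadinf{}{\G}{\Msg_2}$ into the branches $\G_i$ and applies the induction hypothesis; the \rn{In-R2} case needs a case analysis on whether $\addMsg{\Msg_1}{\Msg_2}$ still has no front message matching the input choice — then reapply \rn{In-R2} — or a matching message floats to the front necessarily out of $\Msg_2$, in which case $\Msg_1$ contains no message on that channel, $\Msg_2\equiv\addMsg{\mq\pp{\la_h}\q}{\Msg_2'}$, one obtains $\algreadinf{}{\G}{\Msg_2'}$ by item~(\ref{lem:read-merge:3}), and then applies \rn{In-R1}.

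I expect the main obstacle to be item~(\ref{lem:alg-read-inf:3}): the value of the proof is in recognising that it cannot be pushed through by induction on the $\vdash_{\mathsf{dread}}$ derivation, and that the right route is to first isolate the composition principle (itself squeezed out of weakening and cut) and then run a separate induction on subterm distance. A secondary nuisance is the \rn{In-R2} case of item~(\ref{lem:read-merge:1}), where one must argue carefully, with the structural equivalence on queues, about which messages of the concatenated queue $\addMsg{\Msg_1}{\Msg_2}$ can be commuted to its front.
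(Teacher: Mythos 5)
Your proof is correct and follows essentially the same route as the paper: the same induction structures for each item, including the key move of proving item~(3) by induction on the subterm distance rather than on the derivation, and the same case analysis on queue shapes in items~(4) and~(5). The only cosmetic difference is your ``composition principle'', which in the only instance you actually use (a singleton hypothesis set) is exactly item~(2) with $\Gset$ empty, so the paper invokes item~(2) directly instead.
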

\begin{proof}
(\ref{lem:alg-read-inf:1}). The proof is by a straightforward induction on the derivation of $\algreadinf{\Gset}{\G}{\Msg}$. 

(\ref{lem:alg-read-inf:2}). The proof is by a straightforward induction on the derivation of $\algreadinf{\Gset,\G}{\G'}{\Msg}$ using Item (\ref{lem:alg-read-inf:1}) on $\algreadinf{\Gset}{\G}{\Msg}$ to properly extend $\Gset$ in order to apply the induction hypothesis. 

(\ref{lem:alg-read-inf:3}). 
The proof is by induction on the distance of  the subtree  $\G'$ from the root  in the tree of $\G$. 
If the distance is $0$, then $\G' = \G$ and so the statement is trivial. 
Otherwise, $\G'$ is the direct subtree of another subtree $\G''$ of $\G$, which is closer to the root; 
then by induction hypothesis, we have $\algreadinf{}{\G''}{\Msg}$. 
Let  $\G'' = \agtbg{\pp}{\q}{i\in I}{\la_i}{\G_i}$ and $\G' = \G_h$ for some $h \in I$. 
We get $\algreadinf{\G''}{\G'}{\Msg}$, as it is a premise of the rule used to derive $\algreadinf{}{\G''}{\Msg}$, 
then, by Item (\ref{lem:alg-read-inf:2}), we get $\algreadinf{}{\G'}{\Msg}$, as needed. 

(\ref{lem:alg-read-inf:4}). The proof is by induction on the derivation of $\algreadinf{\Gset}{\G}{\Msg}$, splitting cases on the last applied rule in the derivation. 
The only non-trivial case is for Rule \rn{In-DR}. 
We have $\G = \agtI{\pp}{\q}{\la}{\G}$ and, by induction hypothesis,  we get $\algread{\G_i}{\Msg}$ for all $i\in I$. 
We have two cases: 
if $\Msg\not\equiv\addMsg{\mq\pp{\la_i}\q}{\Msg'}$  for all $i\in I$, the statement follows by Rule  \rn{In-R2}. 
 Otherwise, $\Msg = \addMsg{\mq\pp{\la_h}\q}{\Msg'}$ for some $h \in I$ and, by \refToLem{alg-read} applied to $\algread{\G_i}{\Msg}$,  we get $\algread{\G_i}{\Msg'}$ for all $i \in I$,  thus the statement follows by Rule \rn{In-R1}. 

(\ref{lem:read-merge:1}). The proof is by induction on the derivation of $\algread{\G}{\Msg_1}$. 
We split cases on the last applied rule.
\begin{description}
\item [\rn{Empty-R}] We have $\Msg_1 = \emptyset$ and so the statement follows by Item~(\ref{lem:alg-read-inf:4}). 
\item [\rn{Out-R}] We have $\G = \agtO{\pp}{\q}{i}{I}{\la}{\G}$ and, since by hypothesis we have $\algreadinf{}{\G}{\Msg_2}$, by Item~(\ref{lem:alg-read-inf:3}) we get $\algreadinf{}{\G_i}{\Msg_2}$, for all $i\in I$. 
Then, by induction hypothesis, 
we get $\algread{\G_i}{\addMsg{\Msg_1}{\Msg_2}}$, for all $i\in I$ and so the statement follows by Rule \rn{Out-R}. 
\item [\rn{In-R1}] We have $\G = \agtI{\pp}{\q}{\la}{\G}$ and $\Msg_1 \equiv \addMsg{\mq\pp{\la_h}\q}{\Msg'}$ for some $h\in I$, and, since by hypothesis we have $\algreadinf{}{\G}{\Msg_2}$, by Item~(\ref{lem:alg-read-inf:3}) we get  $\algreadinf{}{\G_i}{\Msg_2}$ for all $i\in I$. 
Then, by induction hypothesis,  
we get $\algread{\G_i}{\addMsg{\Msg'}{\Msg_2}}$ for all $i\in I$  and so the statement follows by Rule \rn{In-R1}. 
\item [\rn{In-R2}] We have $\G = \agtI{\pp}{\q}{\la}{\G}$ and $\Msg_1 \not\equiv \addMsg{\mq\pp{\la_i}\q}{\Msg'}$ for all $i\in I$,  and, since by hypothesis we have $\algreadinf{}{\G}{\Msg_2}$, by Item~(\ref{lem:alg-read-inf:3}), we get $\algreadinf{}{\G_i}{\Msg_2}$  for all $i\in I$. 
Then, by induction hypothesis, 
we get $\algread{\G_i}{\addMsg{\Msg_1}{\Msg_2}}$  for all $i\in I$.  
Now, if $\addMsg{\Msg_1}{\Msg_2}\equiv\addMsg{\mq\pp{\la_h}\q}{\widehat\Msg}$ for some $h\in I$, \refToLemma{lem:alg-read} implies  $\algread{\G_i}{\widehat\Msg}$  for all $i\in I$ 
and so the statement follows by Rule \rn{In-R1}. 
Otherwise  the statement follows by Rule  \rn{In-R2}. 
\end{description}

(\ref{lem:read-merge:3}). The proof is by a straightforward induction on the derivation of $\algreadinf{\Gset}{\G}{\addMsg{\Msg_1}{\Msg_2}}$ using \refToLem{alg-read}. 
\end{proof}

\begin{prop}\label{prop:equivG}
Let $\Msg\equiv_\G\Msg'$. 
\begin{enumerate}
\item\label{prop:equivG:1} If $\algread{\G}{\Msg}$, then  $\algread{\G}{\Msg'}$. 
\item\label{prop:equivG:2} If $\algreadinf{\Gset}{\G}{\Msg}$, then  $\algreadinf{\Gset}{\G}{\Msg'}$. 
\end{enumerate}
\end{prop}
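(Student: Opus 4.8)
The plan is to reduce the parametrised equivalence $\equiv_\G$ to elementary moves and then argue by induction on the derivations of the two readability judgements. First, since all the rules in \refToFig{wfConfig} and \refToFig{algwf} defining $\algread{\G}{\Msg}$ and $\algreadinf{\Gset}{\G}{\Msg}$ treat queues modulo the basic structural equivalence $\equiv$, both judgements are trivially invariant under $\equiv$. Hence, unfolding \refToDef{def:Gequiv}(2), it suffices to treat the case in which $\Msg'$ is obtained from $\Msg$ by replacing one occurrence of a message $\ms = \mq\pp\la\q$ by a $\G$-indistinguishable message $\ms' = \mq\pp{\la'}\q$: the general case follows by chaining finitely many such replacements and $\equiv$-moves, and since $\equiv_\G$ is an equivalence relation the two directions of each implication are symmetric.

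\textbf{The key device.} I would introduce the notion of \emph{weak $\G$-indistinguishability}, obtained from \refToDef{def:Gequiv}(1) by dropping the clause ``$\la$ and $\la'$ occur in $\G$'' and keeping only the condition on the input choices $\agtII{\pp}{\q}{i}{I}{\la}{\G}$ occurring in $\G$. The point is that $\G$-indistinguishability is \emph{not} inherited by subterms (the labels may disappear), whereas weak indistinguishability is: if $\ms,\ms'$ are weakly $\G$-indistinguishable and $\G'$ occurs in $\G$, then they are weakly $\G'$-indistinguishable, because the input choices from $\pp$ to $\q$ occurring in $\G'$ are a subset of those occurring in $\G$. Also $\G$-indistinguishability implies weak $\G$-indistinguishability. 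I would then prove, by simultaneous induction on derivations, the single-step statement: if $\ms,\ms'$ are weakly $\G$-indistinguishable and $\Msg'$ is obtained from $\Msg$ by the replacement above, then $\algread{\G}{\Msg}$ implies $\algread{\G}{\Msg'}$, and $\algreadinf{\Gset}{\G}{\Msg}$ implies $\algreadinf{\Gset}{\G}{\Msg'}$. In the inductive steps the subderivations concern subterms $\G_i$ of $\G$, and the replacement $\Msg \to \Msg'$ either stays unchanged or restricts to a replacement of the same two messages in a suffix of the queue; by monotonicity the messages remain weakly $\G_i$-indistinguishable, so the induction hypothesis applies.

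\textbf{The case analysis.} For the $\algread$-part: \rn{Empty-R} is vacuous (there is no message to replace), and \rn{Out-R} follows at once from the induction hypothesis on the branches. In Rule \rn{In-R1}, with $\G = \agtII{\pp}{\q}{i}{I}{\la}{\G}$ and head message $\mq\pp{\la_h}\q$: if the replaced message lies in the tail, the induction hypothesis on the branches and \rn{In-R1} conclude; if the replaced message \emph{is} the head (so $\la = \la_h$), then, applying weak indistinguishability to the input choice $\G$ itself, we must have $\la' = \la_k$ with $\G_h = \G_k$ for some $k\in I$, so the very same premises $\algread{\G_i}{\Msg}$ ($i\in I$) used for $\Msg$ justify $\algread{\G}{\Msg'}$ via \rn{In-R1} with index $k$. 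In Rule \rn{In-R2} one must check its applicability is preserved, i.e. $\Msg'\not\equiv\addMsg{\mq\pp{\la_i}\q}{\Msg''}$ for all $i\in I$: this is immediate unless the replaced message is the first one on channel $\pp\q$, in which case its label $\la$ was not among the $\la_i$, hence by weak indistinguishability neither is $\la'$; then the induction hypothesis on the branches and \rn{In-R2} conclude. For the $\algreadinf$-part: \rn{End-DR} is vacuous, \rn{Cycle-DR} follows from the $\algread$-part just proved (its only hypothesis is the side condition $\algread{\G}{\Msg}$), and \rn{Out-DR}, \rn{In-DR} leave the queue untouched and are dispatched by the induction hypothesis on the branches together with monotonicity of weak indistinguishability. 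Finally, both items of the proposition follow by chaining the single-step statement along a decomposition of $\Msg \equiv_\G \Msg'$ into elementary moves.

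\textbf{Main obstacle.} The crux is the mismatch between $\equiv_\G$, which is defined relative to the fixed global type $\G$, and the recursive shape of the readability judgements, which descends into subterms $\G_i$: a naive induction breaks because $\G$-indistinguishability does not survive the descent. Passing to weak indistinguishability and establishing its monotonicity is exactly what repairs this; the remaining delicate point is the head-of-an-input-choice case in Rule \rn{In-R1}, where weak indistinguishability is precisely strong enough to force the two candidate head messages to select continuations with $\G_h = \G_k$, so that readability of the branches transfers unchanged.
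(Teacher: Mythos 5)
Your proof is correct and follows the same skeleton as the paper's: both items are proved by induction on the derivation of the readability judgement (with the deep-readability part reduced to the first part at Rule \rn{Cycle-DR}), and both hinge on the same observation in the input cases, namely that indistinguishability with respect to the input choice itself forces a replaced head label $\la_h$ to be traded for another label $\la_k$ of the same choice in \rn{In-R1}, and preserves non-membership of the head label in \rn{In-R2}. The one genuine difference is your \emph{weak indistinguishability} device, and it buys something real. The paper's proof simply asserts that $\widehat\Msg \equiv_\G \widehat\Msg'$ implies $\widehat\Msg \equiv_{\G_i} \widehat\Msg'$ for the continuations $\G_i$, but as \refToDef{def:Gequiv}(1) is literally written this inheritance can fail: the clause requiring $\la$ and $\la'$ to occur in the type is not preserved when descending into a subterm, since a label occurring in $\G$ may occur only in a sibling branch. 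Dropping that clause, as you do, makes indistinguishability monotone under descent while still supplying exactly what the two input cases need (the occurrence clause is never used in this argument; it matters elsewhere, for the regularity of agreement derivations), so your version closes a small gap that the paper's proof glosses over. Your additional decomposition of $\equiv_\G$ into single-message replacements is a harmless reorganisation; the paper handles the whole queue at once, which works equally well once monotonicity is in place.
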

\begin{proof}
(\ref{prop:equivG:1}). 
The proof is by induction on the derivation of $\algread{\G}{\Msg}$, splitting cases on the last applied rule. 
The only relevant case is for Rule \rn{In-R1}. 
We have $\G = \agtII{\pp}{\q}{i}{I}{\la}{\G}$ and $\Msg \equiv \addMsg{\mq\pp{\la_h}\q}{\widehat\Msg}$ with $h \in I$. 
Then, we have $\Msg' \equiv \addMsg{\mq\pp{\la'}\q}{\widehat\Msg'}$ with $\mq\pp{\la_h}\q,\mq\pp{\la'}\q$ $\G$-indistinguishable and $\widehat\Msg \equiv_\G \widehat\Msg'$, which implies 
$\widehat\Msg\equiv_{\G_i}\widehat\Msg'$ for all $i\in I$.
Hence, by induction hypothesis,  we get $\algread{\G_i}{\widehat\Msg'}$ for all $i\in I$. By definition of $\G$-indistinguishability we have $\la' = \la_k$ 
for some $k \in I$.  
Finally, we get the statement by Rule \rn{In-R1}. 

(\ref{prop:equivG:2}). It follows by a straightforward induction on the derivation of $\algreadinf{\Gset}{\G}{\Msg}$, using Item~(\ref{prop:equivG:1}). 
\end{proof}

To reason about the agreement judgement with an empty set of hypotheses, it is often convenient to consider an equivalent coinductive formulation reported in \refToFig{agree-coind}. 
These two formulations are equivalent because all infinite derivations using rules in \refToFig{agree-coind} are actually regular. 
In fact, 
 in a derivation of $\pairA\G\Msg$, there are only finitely many different global types, as $\G$ is regular. 
Moreover, there are finitely many different queues,  since they must all have the same length as the initial one and labels must occur in $\G$, by definition of $\G$-equivalence. 
Hence, such a derivation has only finitely many different nodes, namely, it is regular.  
This  implies  that  the judgement in \refToFig{agree-coind} is equivalent to the agreement judgement in \refToFig{algwf}  by \cite[Theorem~5.2]{Dagnino21}. 

\begin{figure}
\begin{math} 
\begin{array}{c} 
\NamedCoRule{\rn{\agr{End$'$}}}{}{\pairA\End\Msg}{} 
\\[3ex] 
\NamedCoRule{\rn{\agr{Out$'$}}}{ \pairA{\G_i}{\Msg_i}\ \ \forall i\in I }{ \pairA{\agtO{\pp}{\q}{i}{I}{\la}{\G}}\Msg }
{  \addMsg\Msg{\mq\pp{\la_i}\q} \equiv_{\G_i} \addMsg{\mq\pp{\la_i}\q}{\Msg_i}\ \forall i\in I } 
\\[3ex]
\NamedCoRule{\rn{\agr{In$'$}}}{ \pairA{\G_i}\Msg \ \ \forall i\in I}{ \pairA{\agtI{\pp}{\q}{\la}{\G}}\Msg }{} 
\end{array}
\end{math} 
\caption{Agreement judgement (coinductive version).}
\label{fig:agree-coind}
\end{figure}

The next lemma proves inversion for Rules \rn{\agr{Out}} and \rn{\agr{In}}. 

\begin{lem}\label{lem:comp-inv}
Let $\algcomp{}{\G}{\Msg}$, then the following hold. 
\begin{enumerate}
\item\label{lem:comp-inv:2} If $\G = \agtO{\pp}{\q}{i}{I}{\la}{\G}$, then,  for each $i\in I$, there is $\Msg_i$ such that  
$\algcomp{}{\G_i}{\Msg_i}$ and $\addMsg{\Msg}{\mq\pp{\la_i}\q} \equiv_{\G_i} \addMsg{\mq\pp{\la_i}\q}{\Msg_i}$. 
\item\label{lem:comp-inv:1}If $\G = \agtI{\pp}{\q}{\la}{\G}$, then $\algcomp{}{\G_i}{\Msg}$ for all $i\in I$. 
\end{enumerate} 
\end{lem}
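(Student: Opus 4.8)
The plan is to reduce both items to a one-step inversion performed on the \emph{coinductive} presentation of the agreement judgement given in \refToFig{agree-coind}. The first move is to replace $\algcomp{}{\G}{\Msg}$ by the equivalent judgement $\pairA\G\Msg$: this equivalence is precisely the one discussed just before the statement, obtained from \cite[Theorem~5.2]{Dagnino21} together with the observation that every infinite derivation built with the rules of \refToFig{agree-coind} is regular (only finitely many global types appear, since $\G$ is regular, and only finitely many queues, since $\equiv_{\G'}$ preserves length and the labels must occur in $\G$). Since this equivalence holds for arbitrary global types and queues, it can also be used in the reverse direction on the immediate subterms of $\G$.

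For item~(\ref{lem:comp-inv:2}) I would note that, when $\G = \agtO{\pp}{\q}{i}{I}{\la}{\G}$, the judgement $\pairA\G\Msg$ is neither $\End$ nor an input, so it can only be the conclusion of Rule \rn{\agr{Out$'$}} of \refToFig{agree-coind}; reading off its premises and side condition yields, for each $i \in I$, a queue $\Msg_i$ with $\pairA{\G_i}{\Msg_i}$ and $\addMsg{\Msg}{\mq\pp{\la_i}\q} \equiv_{\G_i} \addMsg{\mq\pp{\la_i}\q}{\Msg_i}$. Converting $\pairA{\G_i}{\Msg_i}$ back to $\algcomp{}{\G_i}{\Msg_i}$ gives the thesis. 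Item~(\ref{lem:comp-inv:1}) is completely parallel: when $\G = \agtI{\pp}{\q}{\la}{\G}$, the only rule that can conclude $\pairA\G\Msg$ is Rule \rn{\agr{In$'$}}, whose premises are $\pairA{\G_i}{\Msg}$ for all $i\in I$, hence $\algcomp{}{\G_i}{\Msg}$ for all $i\in I$.

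The only subtle point — and the reason for routing through \refToFig{agree-coind} rather than arguing directly with the inductive rules of \refToFig{algwf} — is the treatment of the auxiliary set of hypotheses. From $\algcomp{}{\G}{\Msg}$ with $\G$ an output one does get $\algcomp{(\G,\Msg)}{\G_i}{\Msg_i}$ by Rule \rn{\agr{Out}}, but the extra pair $(\G,\Msg)$ cannot in general be discarded, because the sub-derivation may close a cycle on it via Rule \rn{\agr{Cycle}}; so a naive inductive inversion would not deliver the empty-hypothesis conclusion we want. The coinductive formulation carries no such bookkeeping, so once the reformulation is in place both items are immediate, and the whole weight of the argument rests on the regularity fact recalled above, which has already been established in the text preceding the statement.
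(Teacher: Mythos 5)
Your proposal is correct and follows essentially the same route as the paper, which likewise dismisses the lemma as straightforward by inversion on the equivalent coinductive characterisation of the agreement judgement in \refToFig{agree-coind}. Your additional remark explaining why one cannot simply invert the inductive rules of \refToFig{algwf} (because the hypothesis set $(\G,\Msg)$ may be needed to close a cycle) is a sound justification of the detour that the paper leaves implicit.
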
 
\begin{proof}
 This is straightforward by the equivalent coinductive characterisation of $\algcomp{}{\G}{\Msg}$ given in \refToFig{agree-coind}.  
\end{proof}

 We now prove  
that agreement and deep readability judgements (when the set of hypothesis is empty) are preserved by concatenation of queues. 

\begin{lem}\label{lem:sound-aux}
The following properties hold. 
\begin{enumerate}
\item\label{lem:sound-aux:1} If $\algcomp{}{\G}{\Msg_1}$ and $\algcomp{}{\G}{\Msg_2}$, then $\algcomp{}{\G}{\addMsg{\Msg_1}{\Msg_2}}$. 
\item\label{lem:sound-aux:2} If $\algreadinf{}{\G}{\Msg_1}$ and $\algreadinf{}{\G}{\Msg_2}$, then $\algreadinf{}{\G}{\addMsg{\Msg_1}{\Msg_2}}$. 
\end{enumerate}
\end{lem}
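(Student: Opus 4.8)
The statement to prove is Lemma~\ref{lem:sound-aux}, which has two parts: both the agreement judgement $\algcomp{}{\G}{\Msg}$ and the deep readability judgement $\algreadinf{}{\G}{\Msg}$ (with empty hypothesis sets) are preserved by concatenation of queues. Let me sketch a proof plan.

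\textbf{Proof plan for Lemma~\ref{lem:sound-aux}.}

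For item~(\ref{lem:sound-aux:1}), the plan is to work with the equivalent coinductive characterisation of the agreement judgement given in \refToFig{agree-coind}, since it is much easier to build a coinductive derivation than to manipulate the inductive one with its cycle-detection machinery. Concretely, I would show that the set
\[
A = \set{ \pairA{\G}{\addMsg{\Msg_1}{\Msg_2}} \mid \algcomp{}{\G}{\Msg_1} \text{ and } \algcomp{}{\G}{\Msg_2} }
\]
is consistent with respect to the rules \rn{\agr{End$'$}}, \rn{\agr{Out$'$}}, \rn{\agr{In$'$}}, from which the claim follows by coinduction. Splitting on the shape of $\G$: if $\G = \End$, rule \rn{\agr{End$'$}} applies trivially. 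If $\G = \agtI{\pp}{\q}{\la}{\G}$, then by \refToLemItem{comp-inv}{1} applied to both hypotheses we get $\algcomp{}{\G_i}{\Msg_1}$ and $\algcomp{}{\G_i}{\Msg_2}$ for all $i\in I$, so $\pairA{\G_i}{\addMsg{\Msg_1}{\Msg_2}}\in A$ and rule \rn{\agr{In$'$}} closes the step. The interesting case is $\G = \agtO{\pp}{\q}{i}{I}{\la}{\G}$: by \refToLemItem{comp-inv}{2} there are $\Msg_1^i$ and $\Msg_2^i$ with $\algcomp{}{\G_i}{\Msg_1^i}$, $\algcomp{}{\G_i}{\Msg_2^i}$, $\addMsg{\Msg_1}{\mq\pp{\la_i}\q}\equiv_{\G_i}\addMsg{\mq\pp{\la_i}\q}{\Msg_1^i}$ and $\addMsg{\Msg_2}{\mq\pp{\la_i}\q}\equiv_{\G_i}\addMsg{\mq\pp{\la_i}\q}{\Msg_2^i}$. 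I would then argue that $\addMsg{\addMsg{\Msg_1}{\Msg_2}}{\mq\pp{\la_i}\q} \equiv_{\G_i} \addMsg{\mq\pp{\la_i}\q}{\addMsg{\Msg_1^i}{\Msg_2^i}}$ (the message $\mq\pp{\la_i}\q$ commutes leftward past the $\Msg_2$ block using the second equivalence, then past the $\Msg_1$ block using the first, and $\equiv_{\G_i}$ is a congruence for concatenation), and that $\pairA{\G_i}{\addMsg{\Msg_1^i}{\Msg_2^i}}\in A$ since $\algcomp{}{\G_i}{\Msg_1^i}$ and $\algcomp{}{\G_i}{\Msg_2^i}$. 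This discharges the side condition of rule \rn{\agr{Out$'$}}.

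For item~(\ref{lem:sound-aux:2}), I would proceed by induction on the derivation of $\algreadinf{}{\G}{\Msg_1}$, keeping $\Msg_2$ fixed and using \refToLemItem{alg-read-inf}{3} (strong inversion for deep readability) to pull $\algreadinf{}{\G'}{\Msg_2}$ down to the relevant subtrees $\G'$ of $\G$. Case \rn{End-DR}: then $\G = \End$ and $\Msg_1 = \emptyset$, so $\addMsg{\Msg_1}{\Msg_2} = \Msg_2$ and the hypothesis $\algreadinf{}{\G}{\Msg_2}$ gives the claim directly. Cases \rn{Out-DR} and \rn{In-DR}: here $\G$ is a choice $\agtbg{\pp}{\q}{i\in I}{\la_i}{\G_i}$; by \refToLemItem{alg-read-inf}{3} (applied to $\algreadinf{}{\G}{\Msg_2}$, noting each $\G_i$ occurs in $\G$) we get $\algreadinf{}{\G_i}{\Msg_2}$, and by the induction hypothesis applied to the premises $\algreadinf{\set{\G}}{\G_i}{\Msg_1}$ --- after weakening these to empty hypothesis sets is not possible directly, so instead I would reformulate the induction to carry the hypothesis set, or simpler, observe that $\algreadinf{\set\G}{\G_i}{\Msg_1}$ combined with \refToLemItem{alg-read-inf}{2} and $\algreadinf{}{\G}{\Msg_1}$ itself yields $\algreadinf{}{\G_i}{\Msg_1}$, to which the induction hypothesis applies --- obtaining $\algreadinf{}{\G_i}{\addMsg{\Msg_1}{\Msg_2}}$ for all $i\in I$, and then re-apply the same rule. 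Case \rn{Cycle-DR} cannot occur when the hypothesis set is empty, so it is vacuous.

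\textbf{Main obstacle.} The delicate point in item~(\ref{lem:sound-aux:1}) is verifying the congruence-and-commutation argument for $\equiv_{\G_i}$, i.e.\ that the trailing-message equivalences for $\Msg_1$ and $\Msg_2$ can be composed to move a single copy of $\mq\pp{\la_i}\q$ to the front of the concatenated queue; this needs that $\equiv_\G$ respects concatenation on both sides and that the relevant messages stay $\G_i$-indistinguishable, which follows from the definitions but should be stated carefully. For item~(\ref{lem:sound-aux:2}) the subtlety is the bookkeeping of hypothesis sets in the induction: the premises of \rn{Out-DR}/\rn{In-DR} carry a nonempty $\Gset$, so the induction either has to be stated for arbitrary $\Gset$ (using \refToLemItem{alg-read-inf}{1} and \refToLemItem{alg-read-inf}{2} to reconcile the two derivations' hypothesis sets) or one first reduces to the empty-$\Gset$ case via \refToLemItem{alg-read-inf}{2}. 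Apart from this, both parts are routine rule inductions.
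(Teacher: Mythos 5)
Your treatment of item~(\ref{lem:sound-aux:1}) is correct and is essentially the paper's proof: the same consistent set, the same case split via \refToLemItem{comp-inv}{1} and \refToLemItem{comp-inv}{2}, and the same two-step commutation of $\mq\pp{\la_i}\q$ first past the $\Msg_2$ block and then past the $\Msg_1$ block in the \rn{\agr{Out$'$}} case.

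Item~(\ref{lem:sound-aux:2}) has a gap, concentrated exactly where you hedge. The two routes you offer are not interchangeable. The route ``first reduce the premises $\algreadinf{\set\G}{\G_i}{\Msg_1}$ to $\algreadinf{}{\G_i}{\Msg_1}$ via \refToLemItem{alg-read-inf}{2} and then apply the induction hypothesis'' is not well-founded: the derivation of $\algreadinf{}{\G_i}{\Msg_1}$ produced by cut elimination is a freshly constructed derivation, not a subderivation of the one you are inducting on, so an induction ``on the derivation of $\algreadinf{}{\G}{\Msg_1}$'' gives you no handle on it (and there is no structural induction on $\G$ to fall back on, since $\G$ is an infinite regular tree). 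The route that works is the one the paper takes: generalise the statement to ``if $\algreadinf{\Gset}{\G}{\Msg_1}$ and $\algreadinf{}{\G}{\Msg_2}$, then $\algreadinf{\Gset}{\G}{\addMsg{\Msg_1}{\Msg_2}}$'' and induct on the first derivation, using \refToLemItem{alg-read-inf}{3} to push $\Msg_2$ down to the $\G_i$ in the \rn{Out-DR}/\rn{In-DR} cases. But under that generalisation \rn{Cycle-DR} is no longer vacuous --- it is the crux: there $\Gset = \Gset',\G$, the premise is $\algread{\G}{\Msg_1}$, and you must establish the side condition $\algread{\G}{\addMsg{\Msg_1}{\Msg_2}}$ to re-close the cycle, which is precisely \refToLemItem{read-merge}{1} applied to $\algread{\G}{\Msg_1}$ and $\algreadinf{}{\G}{\Msg_2}$. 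Your dismissal of \rn{Cycle-DR} as vacuous therefore skips the one step where the ordinary readability judgement actually has to be propagated around cycles; the rest of your case analysis for item~(\ref{lem:sound-aux:2}) is fine.
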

\begin{proof}
(\ref{lem:sound-aux:1}). The proof is by coinduction relying on the  
coinductive characterisation of $\algcomp{} {\G}{\Msg}$ in \refToFig{agree-coind}. 
Consider the set $\AS$ defined as follows: 
\begin{align*} 
\pairA{\G}{\widehat\Msg} \text { belongs to }\AS\text { if }\widehat\Msg \equiv \Msg_1\cdot\Msg_2\text { and }\algcomp{}{\G}{\Msg_1}\text { and }\algcomp{}{\G}{\Msg_2}\text { hold}
\end{align*}  
We have to prove that $\AS$ is consistent, that is, every element $\pairA{\G}{\widehat\Msg}$ in $\AS$ is the conclusion of a rule in \refToFig{agree-coind}, whose premises are still in $\AS$. 
We split cases on $\G$. 
\begin{itemize} 
\item If $\G = \End$, the statement follows immediately by Rule \rn{\agr{End$'$}}, as it has no premises. 
\item  If $\G = \agtO{\pp}{\q}{i}{I}{\la}{\G}$, then by \refToLemItem{comp-inv}{2} we get $\algcomp{}{\G_i}{\Msg_i^k}$ and $\addMsg{\Msg_k}{\mq\pp{\la_i}\q} \equiv_{\G_i} \addMsg{\mq\pp{\la_i}\q}{\Msg_i^k}$ for $k=1,2$ and for all $i \in I$.  
Therefore, $\pairA{\G_i}{\addMsg{\Msg_i^1}{\Msg_i^2}}$ belongs to $\AS$ for all $i \in I$. 
Notice that
\[
\addMsg{\addMsg{\Msg_1}{\Msg_2}}{\mq\pp{\la_i}\q} \equiv_{\G_i} \addMsg{\Msg_1}{\addMsg{\mq\pp{\la_i}\q}{\Msg_i^2}} \equiv_{\G_i} \addMsg{\mq\pp{\la_i}\q}{\addMsg{\Msg_i^1}{\Msg_i^2}} 
\]
for all $i\in I$, hence,  we conclude by Rule \rn{\agr{Out$'$}}.
\item If $\G = \agtII{\pp}{\q}{i}{I}{\la}{\G}$, then by \refToLemItem{comp-inv}{1} we get $\algcomp{}{\G_i}{\Msg_k}$ for $k = 1,2$ and for all $i \in I$. 
Therefore, $\pairA{\G_i}{\addMsg{\Msg_1}{\Msg_2}}$ belongs to $\AS$ for all $i \in I$, hence we conclude by Rule \rn{\agr{In$'$}}. 
\end{itemize} 

(\ref{lem:sound-aux:2}). 
We generalise the statement proving that, 
if $\algreadinf{\Gset}{\G}{\Msg_1}$ and $\algreadinf{}{\G}{\Msg_2}$, then $\algreadinf{\Gset}{\G}{\addMsg{\Msg_1}{\Msg_2}}$.  
We get the thesis when $\Gset$ is empty. 
The proof is by induction on the derivation of $\algreadinf{\Gset}{\G}{\Msg_1}$. 
For Rule \rn{End-DR}, we have $\G = \End$, hence both $\Msg_1$ and $\Msg_2$ are empty. 
Then the thesis follows immediately by Rule \rn{End-DR}.  
For Rules \rn{Out-DR} and \rn{In-DR}, we have $\G = \agtb\pp\q{i}{I}\la\G$ and $\algreadinf{\Gset,\G}{\G_i}{\Msg_1}$ for all $i \in I$. 
By \refToLemItem{alg-read-inf}{3}, we get $\algreadinf{}{\G_i}{\Msg_2}$, for all $i \in I$, hence the thesis follows from the induction hypothesis  applying  again Rules \rn{Out-DR} and \rn{In-DR}, respectively. 
For Rule \rn{Cycle-DR}, we have $\Gset = \Gset',\G$ and $\algread{\G}{\Msg_1}$, then \refToLemItem{read-merge}{1} implies $\algread{\G}{\addMsg{\Msg_1}{\Msg_2}}$ using 
$\algreadinf{}{\G}{\Msg_2}$. 
So the thesis follows applying Rule \rn{Cycle-DR}. 
\end{proof}

 Lastly, we prove  
that deep readability is preserved when the queue is changed as in Rule \rn{\agr{Out}} of \refToFig{algwf}.

\begin{lem}\label{lem:read-rev}
If $\algreadinf{}{\G}{\Msg}$ and $\addMsg{\Msg}{\mq\pp\la\q} \equiv_\G \addMsg{\mq\pp\la\q}{\Msg'}$, then $\algreadinf{}{\G}{\Msg'}$. 
\end{lem}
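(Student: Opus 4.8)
The plan is to prove the statement without any new induction, by combining a purely queue-theoretic analysis of the hypothesis with three facts already at our disposal: preservation of deep readability under $\equiv_\G$ (\refToProp{prop:equivG}(\ref{prop:equivG:2})), compatibility of deep readability with splitting a queue (\refToLemItem{read-merge}{3}), and with merging two queues (\refToLemItem{sound-aux}{2}).

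The first step is to unpack $\addMsg{\Msg}{\mq\pp\la\q} \equiv_\G \addMsg{\mq\pp\la\q}{\Msg'}$. Using the remark following \refToDef{def:Gequiv}, this $\equiv_\G$-equivalence can be realised by the plain equivalence $\equiv$ up to replacing the displayed copy of $\mq\pp\la\q$ by some $\G$-indistinguishable $\mq\pp{\la'}\q$; since $\equiv$ does not reorder messages exchanged between the same pair of participants, comparing, channel by channel, the subsequences of messages in the two queues forces one of two situations. Either $\Msg$ contains no message of the form $\mq\pp{\la'}\q$, in which case $\Msg$ and $\Msg'$ differ on no channel and $\Msg \equiv_\G \Msg'$; or the first message of $\Msg$ on the channel $\pp\q$ carries a label $\G$-indistinguishable from $\la$, and there is a queue $\Msg_1$ with $\Msg \equiv_\G \addMsg{\mq\pp\la\q}{\Msg_1}$ and $\Msg' \equiv_\G \addMsg{\Msg_1}{\mq\pp\la\q}$ --- intuitively, the only effect of the hypothesis is to move one copy of $\mq\pp\la\q$ from the front of $\Msg$ to the back of $\Msg'$, rotating the remaining $\pp\q$-messages and leaving every other channel untouched.

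In the first situation the thesis is immediate from $\algreadinf{}{\G}{\Msg}$ by \refToProp{prop:equivG}(\ref{prop:equivG:2}). In the second one I would argue in four steps: from $\algreadinf{}{\G}{\Msg}$ and $\Msg \equiv_\G \addMsg{\mq\pp\la\q}{\Msg_1}$, \refToProp{prop:equivG}(\ref{prop:equivG:2}) gives $\algreadinf{}{\G}{\addMsg{\mq\pp\la\q}{\Msg_1}}$; splitting this queue by \refToLemItem{read-merge}{3} yields $\algreadinf{}{\G}{\mq\pp\la\q}$ and $\algreadinf{}{\G}{\Msg_1}$; recombining them in the opposite order by \refToLemItem{sound-aux}{2} gives $\algreadinf{}{\G}{\addMsg{\Msg_1}{\mq\pp\la\q}}$; and since $\addMsg{\Msg_1}{\mq\pp\la\q} \equiv_\G \Msg'$, a last use of \refToProp{prop:equivG}(\ref{prop:equivG:2}) yields $\algreadinf{}{\G}{\Msg'}$. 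Note that \refToLemItem{read-merge}{3} applied with empty set of hypotheses and \refToLemItem{sound-aux}{2}, which is stated only for the empty set of hypotheses, apply exactly in the form needed here.

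I expect the main obstacle to be the first step: one has to check carefully that the cyclic displacement of the single message $\mq\pp\la\q$ really is the \emph{only} difference between $\Msg$ and $\Msg'$, i.e. that nothing changes on a channel other than $\pp\q$ and that on $\pp\q$ itself only the sequence of $\G$-indistinguishability classes of the surviving messages is rotated. This relies essentially on the remark following \refToDef{def:Gequiv} together with the observation that, restricted to a single channel, $\equiv_\G$ amounts to equality of the sequences of $\G$-indistinguishability classes of messages. Once this is in place, the rest is the bookkeeping reassembly described above, with no further case analysis.
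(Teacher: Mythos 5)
Your proposal is correct and follows essentially the same route as the paper's proof: the same case split on whether $\Msg$ contains a message on the channel $\pp\q$, then splitting the queue with \refToLemItem{read-merge}{3}, recombining in the opposite order with \refToLemItem{sound-aux}{2}, and absorbing the label changes via \refToPropItem{equivG}{2}. The only cosmetic difference is that you carry the label $\la$ through $\equiv_\G$ where the paper names the $\G$-indistinguishable labels $\la'$ and $\la''$ explicitly.
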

\begin{proof}
From the hypothesis we have $\addMsg{\Msg}{\mq\pp\la\q} \equiv \addMsg{\mq\pp{\la'}\q}{\widehat\Msg}$ with 
$\mq\pp\la\q$ and $\mq\pp{\la'}\q$ $\G$-indistinguishable and $\widehat\Msg \equiv_\G \Msg'$. 
We have two cases. 
\begin{itemize}
\item If there is no message from $\pp$ to $\q$ in $\Msg$, then 
$\addMsg{\Msg}{\mq\pp\la\q} \equiv \addMsg{\mq\pp\la\q}{\Msg}$, hence we have $\la  = \la'$ and $\Msg \equiv \widehat\Msg$. 
Therefore, $\Msg \equiv_\G \Msg'$ and the statement follows by \refToPropItem{equivG}{2}. 
\item Otherwise, $\Msg \equiv \addMsg{\mq\pp{\la'}\q}{\Msg_1}$ and so 
$\addMsg{\Msg_1}{\mq\pp\la\q} \equiv \widehat\Msg \equiv_\G \Msg'$, which implies 
$\Msg' \equiv \addMsg{\Msg_2}{\mq\pp{\la''}\q}$ with $\mq\pp\la\q$ and $\mq\pp{\la''}\q$ $\G$-indistinguishable and $\Msg_1\equiv_\G \Msg_2$. 
By \refToLemItem{read-merge}{3}, we have $\algreadinf{}{\G}{\mq\pp{\la'}\q}$ and $\algreadinf{}{\G}{\Msg_1}$ and, 
by \refToPropItem{equivG}{2}, we get $\algreadinf{}{\G}{\Msg_2}$. 
Note that $\mq\pp{\la'}\q\equiv_\G \mq\pp\la\q \equiv_\G \mq\pp{\la''}\q$, hence by \refToPropItem{equivG}{2} we get $\algreadinf{}{\G}{\mq\pp{\la''}\q}$. 
Finally, by \refToLemItem{sound-aux}{2}, we get $\algreadinf{}{\G}{\addMsg{\Msg_2}{\mq\pp{\la''}\q}}$, which is the statement. \qedhere
\end{itemize}
\end{proof}

Now we are able to state and prove the soundness result for the inductive  balancing judgement. 

\begin{thm}[Soundness]\label{thm:alg-sound}
If $\algwf{}{\G}{\Msg}$, then $\tupleOK{\G}{\Msg}$. 
\end{thm}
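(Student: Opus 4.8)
The plan is to prove this by coinduction, using the rules \rn{B-End}, \rn{B-Out}, \rn{B-In} of \refToFig{wfConfig} that define the balancing judgement coinductively: it suffices to exhibit a set $\mathcal S$ of type configurations that contains $\G\parG\Msg$ and is \emph{consistent} with respect to those rules, i.e.\ every element of $\mathcal S$ is the conclusion of one of them with all side conditions satisfied and all premises again in $\mathcal S$. The starting observation is that the derivation of $\algwf{}{\G}{\Msg}$ (with empty set of hypotheses) is a \emph{finite} tree whose nodes carry judgements $\algwf{\Hset}{\G'}{\Msg'}$: the auxiliary set $\Hset$ grows only at applications of \rn{\ib{Out}} and \rn{\ib{In}}, and whenever \rn{\ib{Cycle}} fires at a node, the pair $(\G',\Msg'')$ it uses was inserted into $\Hset$ by such an application at a proper ancestor, so $\G'\parG\Msg''$ labels that ancestor node (which is therefore derived by \rn{\ib{Out}} or \rn{\ib{In}}). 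Call \emph{non-cycle} the nodes derived by \rn{\ib{End}}, \rn{\ib{Out}} or \rn{\ib{In}}.

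I would then take $\mathcal S$ to be the set of pairs $(\G',\widehat\Msg)$ such that some non-cycle node of the derivation carries $\G'\parG\Msg_0$ and $\widehat\Msg \equiv_{\G'} \addMsg{\Msg_0}{\Msg_1}$ for some queue $\Msg_1$ with $\algcomp{}{\G'}{\Msg_1}$ and $\algreadinf{}{\G'}{\Msg_1}$. Intuitively $\Msg_1$ collects the messages that accumulate at the tail of the queue between two successive visits of $\G'$; the agreement judgement $\vdash_{\mathsf{agr}}$ guarantees they can be shuffled past the outputs of $\G'$, while the deep-readability judgement $\vdash_{\mathsf{dread}}$ guarantees they are consumed along every path of $\G'$. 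Since the root of the derivation is a non-cycle node — its $\Hset$ is empty, so \rn{\ib{Cycle}} cannot apply there — and since $\algcomp{}{\G}{\emptyset}$ and $\algreadinf{}{\G}{\emptyset}$ hold for every $\G$ (the former by the coinductive characterisation of \refToFig{agree-coind}, the latter by a straightforward induction exploiting regularity of global types), we get $(\G,\Msg)\in\mathcal S$ by choosing $\Msg_1=\emptyset$.

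Consistency of $\mathcal S$ is then established by cases on the rule deriving the non-cycle node witnessing $(\G',\widehat\Msg)\in\mathcal S$. For \rn{\ib{End}} we have $\G'=\End$, $\Msg_0=\emptyset$, and $\algreadinf{}{\End}{\Msg_1}$ forces $\Msg_1=\emptyset$, so $\widehat\Msg\equiv\emptyset$ and \rn{B-End} applies. For \rn{\ib{Out}}, with $\G'=\agtO{\pp}{\q}i I{\la}{\G}$, the matching rule is \rn{B-Out}: its side condition $\algread{\G'}{\widehat\Msg}$ follows from $\algread{\G'}{\Msg_0}$ (a side condition of \rn{\ib{Out}}) and $\algreadinf{}{\G'}{\Msg_1}$ by \refToLemItem{read-merge}{1} and \refToProp{prop:equivG}; for branch $i$, \refToLemItem{comp-inv}{2} yields $\Msg_1^i$ with $\algcomp{}{\G_i}{\Msg_1^i}$ and $\addMsg{\Msg_1}{\mq\pp{\la_i}\q}\equiv_{\G_i}\addMsg{\mq\pp{\la_i}\q}{\Msg_1^i}$, whence $\algreadinf{}{\G_i}{\Msg_1^i}$ follows from $\algreadinf{}{\G_i}{\Msg_1}$ (obtained by \refToLemItem{alg-read-inf}{3}) via \refToLemma{lem:read-rev}; this gives that the coinductive branch-$i$ premise queue $\addMsg{\widehat\Msg}{\mq\pp{\la_i}\q}$ is $\equiv_{\G_i}$-equivalent to $\addMsg{(\addMsg{\Msg_0}{\mq\pp{\la_i}\q})}{\Msg_1^i}$, where $\addMsg{\Msg_0}{\mq\pp{\la_i}\q}$ is the queue at the inductive branch-$i$ node. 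If that node is non-cycle we conclude directly; if it is a \rn{\ib{Cycle}}-node we switch to the non-cycle ancestor it points to, absorb the extra messages supplied by the $\vdash_{\mathsf{ok}}$ side condition into the extra queue, and close $\vdash_{\mathsf{agr}}$ and $\vdash_{\mathsf{dread}}$ under concatenation using \refToLemma{lem:sound-aux}. The case \rn{\ib{In}} is analogous, using \rn{B-In}, and reading off from $\widehat\Msg\equiv_{\G'}\addMsg{\mq\pp{\la_h}\q}{\addMsg{\Msg_0}{\Msg_1}}$ that some $\mq\pp{\la_k}\q$ with $k\in I$ lies at the head of $\widehat\Msg$ (here $\G'$-indistinguishability, together with $\la_h$ being among the input-choice labels, forces $\G_k=\G_h$).

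The delicate point — the step I expect to cost the most care — is that the parametrised equivalence $\equiv_{\G'}$ must descend well from $\G'$ to its branches $\G_i$: a swap of $\G'$-indistinguishable messages need not a priori be a swap of $\G_i$-indistinguishable ones, since the labels involved need not occur in $\G_i$. This is harmless here because every queue $\mathcal S$ manipulates is readable — the node queue $\Msg_0$ satisfies $\algread{\G'}{\Msg_0}$ (hence, by \rn{Out-R}, also $\algread{\G_i}{\Msg_0}$) and the extra queue $\Msg_1$ satisfies $\algreadinf{}{\G'}{\Msg_1}$ (hence $\algreadinf{}{\G_i}{\Msg_1}$ and $\algread{\G_i}{\Msg_1}$), so all labels occurring in these queues occur in the subterm being entered, while the input-choice clause of $\G'$-indistinguishability is inherited downward; thus the instances of $\equiv_{\G_i}$ used above do hold. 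Together with the cycle-node bookkeeping described in the previous paragraph, this is essentially everything that is needed; the remaining verifications are routine applications of Lemmas~\ref{lem:alg-read}--\ref{lem:read-rev} and \refToProp{prop:equivG}.
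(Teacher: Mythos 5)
Your proof is correct and takes essentially the same route as the paper: a coinductive consistency argument for the set of configurations whose queue is $\equiv_{\G'}$-equivalent to the queue at a derivation node of $\algwf{}{\G}{\Msg}$ extended by an extra part satisfying $\vdash_{\mathsf{agr}}$ and $\vdash_{\mathsf{dread}}$, discharged with Lemmas~\ref{lem:alg-read}--\ref{lem:read-rev} and \refToProp{prop:equivG}. The only difference is bookkeeping for cycles --- you redirect \rn{\ib{Cycle}} nodes to their non-cycle ancestors in the finite derivation tree, whereas the paper replaces $\Hset$ by a \emph{coherent} sequence and does an auxiliary induction on its length --- and the two devices are interchangeable.
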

\begin{proof}
First of all we observe that the definition of the judgement $\algwf{\Hset}{\G}{\Msg}$ can be equivalently expressed assuming $\Hset$ to be a sequence rather than a set, 
the only difference is Rule \rn{\ib{Cycle}} which will have the following shape\\
\[
\NamedRule{\rulename{\ib{Cycle$'$}}}{
 \algread{\G}{\Msg} \quad 
  \algcomp{}{\G}{\Msg''} \quad 
  \algreadinf{}{\G}{\Msg''} 
}{ \Hset_1,(\G,\Msg),\Hset_2\OKA{\G}{\Msg'} }{\Msg' \equiv \addMsg{\Msg}{\Msg''}} 
\] 
where we have  expanded the $\vdash_{\sf ok}$ judgment.  
We say that a sequence $\Hset$ is coherent if $\algwf{\Hset_1}{\G}{\Msg}$ holds for any decomposition $\Hset = \Hset_1,(\G,\Msg),\Hset_2$. 

The proof is by coinduction on the definition of $\tupleOK{\G}{\Msg}$ (see \refToFig{wfConfig}). 
To this end, we define the set $\AS$ as follows:
\begin{align*} 
{\widehat\G}\parG{\widehat\Msg} \in \AS \ \text{ if }\  &\widehat\Msg \equiv_{\widehat\G} \addMsg{\Msg_1}{\Msg_2} \text{ and }
\algwf{\Hset}{\widehat\G}{\Msg_1} 
\text{ for some coherent $\Hset$} \\
&\text{ and }\algcomp{}{\widehat\G}{\Msg_2} \text{ and } \algreadinf{}{\widehat\G}{\Msg_2}  
\end{align*} 
From the hypothesis $\algwf{}{\G}{\Msg}$, we have immediately that $\G\parG\Msg\in \AS$, since $\Msg \equiv_\G \addMsg{\Msg}{\emptyset}$, $\algcomp{}{\G}{\emptyset}$ and $\algreadinf{}{\G}{\emptyset}$ always hold, and  the empty sequence is coherent. 
Thus, to conclude the proof, we just have to show that $\AS$ is consistent with respect to the rules in \refToFig{wfConfig}. 
Then, we prove that for all coherent $\Hset$, $\widehat\G$, $\Msg_1$ and $\Msg_2$, 
if $\widehat\Msg \equiv_{\widehat\G} \addMsg{\Msg_1}{\Msg_2}$  and  $\algwf{\Hset}{\widehat\G}{\Msg_1}$ and $\algcomp{}{\widehat\G}{\Msg_2}$ and $\algreadinf{}{\widehat\G}{\Msg_2}$,  then $\tupleOK{\widehat\G}{\widehat\Msg}$  is the conclusion of a rule in \refToFig{wfConfig} whose premises are in $\AS$. 
The proof is by induction on the length of $\Hset$, splitting cases on the last rule used to derive $\algwf{\Hset}{\widehat\G}{\Msg_1}$. 
\begin{description}
\item [\rn{\ib{End}}] We have $\widehat\G = \End$ and $\Msg_1 = \emptyset$.
 By \refToLemItem{alg-read-inf}{4} $\algreadinf{}{\widehat\G}{\Msg_2}$ implies $\algread{\widehat\G}{\Msg_2}$.
 From $\widehat\G = \End$ we get  
 $\Msg_2 = \emptyset$, and so the statement is immediate by Rule \rn{B-End}.  
\item [\rn{\ib{Cycle$'$}}] 
We have $\Hset = \Hset_1,(\widehat\G,\Msg'),\Hset_2$ and $\Msg_1 \equiv \addMsg{\Msg'}{\Msg''}$ and $\algcomp{}{\widehat\G}{\Msg''}$ and $\algreadinf{}{\widehat\G}{\Msg''}$, and, 
since $\Hset$ is coherent, we have $\algwf{\Hset_1}{\widehat\G}{\Msg'}$ and $\Hset_1$ is coherent as well. 
Note that, since $\equiv_{\widehat\G}$ extends $\equiv$, we have 
$\widehat\Msg \equiv_{\widehat\G} \addMsg{\Msg_1}{\Msg_2} \equiv_{\widehat\G}  \addMsg{\Msg'}{\addMsg{\Msg''}{\Msg_2}}$ and, 
by \refToLemItem{sound-aux}{1} and \refToLemItem{sound-aux}{2}, we have $\algcomp{}{\widehat\G}{\addMsg{\Msg''}{\Msg_2}}$ and $\algreadinf{}{\widehat\G}{\addMsg{\Msg''}{\Msg_2}}$. Then the statement follows by induction hypothesis, as $\Hset_1$ is strictly shorter than $\Hset$. 
\item [\rn{\ib{Out}}] 
We have $\widehat\G = \agtO{\pp}{\q}{i}{I}{\la}{\G}$,  $\algread{\widehat\G}{\Msg_1}$    and $\algwf{\Hset,(\widehat\G,\Msg_1)}{\G_i}{\addMsg{\Msg_1}{\mq\pp{\la_i}\q}}$  for all $i\in I$. 
 We need to show that $\algread{\widehat\G}{\widehat\Msg}$ and $\G_i\parG\addMsg{\widehat\Msg}{\mq\pp{\la_i}\q}$ belongs to $\AS$ for all $i\in I$, then the statement follows by Rule \rn{B-Out}. 
Note that $\algread{\widehat\G}{\widehat\Msg}$ follows from $\widehat\Msg\equiv_{\widehat\G}\addMsg{\Msg_1}{\Msg_2}$ and $\algread{\widehat\G}{\Msg_1}$ and $\algreadinf{}{\widehat\G}{\Msg_2}$ by \refToLemItem{read-merge}{1} and \refToPropItem{equivG}{1}. 
We now prove that $\G_i\parG\addMsg{\widehat\Msg}{\mq\pp{\la_i}\q}$ belongs to $\AS$ for all $i\in I$.    
The coherence of $\Hset$ implies the coherence of $\Hset,(\widehat\G,\Msg_1)$. From $\algcomp{}{\widehat\G}{\Msg_2}$ we get, for all $i\in I$, $\algcomp{}{\G_i}{\Msg'_i}$  for some $\Msg_i'$ and  $\addMsg{\Msg_2}{\mq\pp{\la_i}\q} \equiv_{\G_i} \addMsg{\mq\pp{\la_i}\q}{\Msg'_i}$ by \refToLemItem{comp-inv}{2}. From $\algreadinf{}{\widehat\G}{\Msg_2}$ we get $\algreadinf{}{\G_i}{\Msg_2}$ by  \refToLemItem{alg-read-inf}{3} for all $i\in I$. \refToLem{read-rev} implies $\algreadinf{}{\G_i}{\Msg'_i}$
for all $i\in I$. 
From $\widehat\Msg \equiv_{\widehat\G} \addMsg{\Msg_1}{\Msg_2}$ we have 
$\addMsg{\widehat\Msg}{\mq\pp{\la_i}\q} \equiv_{\G_i} \addMsg{\addMsg{\Msg_1}{\mq\pp{\la_i}\q}}{\Msg'_i}$ for all $i\in I$.
 
%
\item [\rn{\ib{In}}] In this case we have $\widehat\G = \agtI{\pp}{\q}{\la}{\G}$, $\algread{\widehat\G}{\Msg_1}$ and
$\Msg_1 \equiv \addMsg{\mq\pp{\la_h}\q}{\Msg'}$ and \mbox{$\algwf{\Hset,(\widehat\G,\Msg_1)}{\G_h}{\Msg'}$} for some $h\in I$. We need to show that $\algread{\widehat\G}{\widehat\Msg}$ and $\G_h\parG\Msg'\cdot\Msg_2$ belongs to $\AS$, then the statement follows by Rule \rn{B-In}. 
Note that $\algread{\widehat\G}{\widehat\Msg}$ follows from $\widehat\Msg\equiv_{\widehat\G}\addMsg{\Msg_1}{\Msg_2}$ and $\algread{\widehat\G}{\Msg_1}$ and $\algreadinf{}{\widehat\G}{\Msg_2}$ by \refToLemItem{read-merge}{1} and \refToPropItem{equivG}{1}. 
We now prove that $\G_h\parG\Msg'\cdot\Msg_2$ belongs to $\AS$.
The coherence of $\Hset$ implies the coherence of $\Hset,(\widehat\G,\Msg_1)$.  From $\algcomp{}{\widehat\G}{\Msg_2}$ we get   $\algcomp{}{\G_h}{\Msg_2}$ by \refToLemItem{comp-inv}{1}. From $\algreadinf{}{\widehat\G}{\Msg_2}$ we get $\algreadinf{}{\G_h}{\Msg_2}$ by \refToLemItem{alg-read-inf}{3}. 
\qedhere
\end{description}
\end{proof}

\begin{figure}
\prooftree
\prooftree
\prooftree
\prooftree
\Hset,(\s\cl?\ok;\G,\mq\s\ok\cl)\OKA{\G}\emptyset
\justifies
\Hset\OKA{\s\cl?\ok;\G}{\mq\s\ok\cl}
\endprooftree
\hspace*{-2pt} 
\prooftree
\prooftree
\prooftree 
\algcomp{}{\G}{\mq\cl\lql\s}\quad \algreadinf{}{\G}{\mq\cl\lql\s}\quad\algread{\G}{\emptyset} 
\justifies
\Hset,(\s\cl?\ko;\cl\s!\lql;\G,\mq\s\ko\cl),(\cl\s!\lql;\G,\emptyset)\OKA{\G}{\mq\cl\lql\s}
\endprooftree
\justifies
\Hset,(\s\cl?\ko;\cl\s!\lql;\G,\mq\s\ko\cl)\OKA{\cl\s!\lql;\G}{\emptyset}
\endprooftree
\justifies
\Hset\OKA{\s\cl?\ko;\cl\s!\lql;\G}{\mq\s\ko\cl}
\endprooftree
\justifies
(\G,\emptyset), (\G_1,\mq\cl\hq\s)\OKA{\G_2}{\emptyset}
\endprooftree
\justifies
(\G,\emptyset)\OKA{\G_1}{\mq\cl\hq\s}
\endprooftree
\justifies
\OKA{\G}{\emptyset}
\endprooftree

\bigskip

\raggedright
\qquad where $\Hset=\set{(\G,\emptyset), (\G_1,\mq\cl\hq\s),(\G_2,\emptyset)}$
\caption{Inductive balancing of the hospital global type with the empty queue.}\label{fig:hb}
\end{figure}
\begin{figure}
\prooftree
\prooftree 
\prooftree 
\prooftree
\algcomp{\Hset',(\s\cl?\ok;\G,\Msg)}{\G}{\Msg}
\justifies
\algcomp{\Hset'}{\s\cl?\ok;\G}{\Msg}
\endprooftree
\quad
\prooftree
\prooftree
\algcomp{\Hset',(\s\cl?\ko;\cl\s!\lql;\G,\Msg),(\cl\s!\lql;\G,\Msg)}{\G}{\Msg}
\justifies
\algcomp{\Hset',(\s\cl?\ko;\cl\s!\lql;\G,\Msg)}{\cl\s!\lql;\G}{\Msg}
\endprooftree
\justifies
\algcomp{\Hset'}{\s\cl?\ko;\cl\s!\lql;\G}{\Msg}
\endprooftree
\justifies
\algcomp{(\G,\Msg),(\G_1,\Msg)}{\G_2}{\Msg}
\endprooftree
\justifies
\algcomp{(\G,\Msg)}{\G_1}{\Msg}
\endprooftree
\justifies
\algcomp{}{\G}{\Msg}
\using \quad  \Msg\cdot\mq\cl\hq\s\equiv_{\G_1}\mq\cl\hq\s\cdot\Msg 
\endprooftree

\bigskip

where $\Msg=\mq\cl\lql\s$ and   $\Hset'=\set{(\G,\Msg),(\G_1,\Msg),(\G_2,\Msg)}$ 

\caption{Agreement of the hospital global type with the queue $\mq\cl\lql\s$.}\label{fig:ha}
\end{figure}

\begin{figure}
\prooftree
\prooftree 
\prooftree 
\prooftree
\prooftree
\algread{\G}{\Msg}
\justifies
 \algreadinf{ \G,\G_1,\G_2,\s\cl?\ok;\G}{\G}{\Msg}
 \endprooftree
\justifies
 \algreadinf{ \G,\G_1,\G_2}{\s\cl?\ok;\G}{\Msg}
\endprooftree
\quad
\prooftree
\prooftree
\prooftree
\algread{\G}{\Msg}
\justifies
 \algreadinf{ \G,\G_1,\G_2,\s\cl?\ko;\cl\s!\lql;\G,\cl\s!\lql;\G}{\G}{\Msg}
 \endprooftree
\justifies
 \algreadinf{ \G,\G_1,\G_2,\s\cl?\ko;\cl\s!\lql;\G}{\cl\s!\lql;\G}{\Msg}
\endprooftree
\justifies
 \algreadinf{ \G,\G_1,\G_2}{\s\cl?\ko;\cl\s!\lql;\G}{\Msg}
\endprooftree
\justifies
 \algreadinf{\G,\G_1}{\G_2}{\Msg}
\endprooftree
\justifies
 \algreadinf{\G}{\G_1}{\Msg}
\endprooftree
\justifies
 \algreadinf{}{\G}{\Msg}
\endprooftree

\bigskip

\raggedright
\qquad \qquad where $\Msg=\mq\cl\lql\s$
\caption{Deep read of the hospital global type with the queue $\mq\cl\lql\s$.}\label{fig:hd}
\end{figure}

\begin{figure}
\prooftree
\prooftree
\algread{\G_2}{\emptyset}
\justifies
\algread{\G_1}{\mq\cl\lql\s}
\endprooftree
\justifies
\algread{\G}{\mq\cl\lql\s}
\endprooftree
\caption{Read of the hospital global type with the queue $\mq\cl\lql\s$.}\label{fig:hr}
\end{figure}

 Finally, we observe that we can obtain a sound inductive version of weak balancing as well. 
We just have to remove premises involving readability and deep readability from the rules in \refToFig{wfConfigA}.  

In Figures~\ref{fig:hb}, ~\ref{fig:ha},~\ref{fig:hd} and~\ref{fig:hr} we prove that the hospital global type with the empty queue is inductively balanced using the agreement, deep read and read judgments. 
Notice that in \refToFigure{fig:ha} we need $\mq\cl\lql\s\cdot\mq\cl\hq\s\equiv_{\G_1}\mq\cl\hq\s\cdot\mq\cl\lql\s$. 
This holds since the messages $\mq\cl\lql\s$ and $\mq\cl\hq\s$ are $\G_1$-indistinguishable, being $\G_1=\agtIS{\pp}{\ps}{\set{\Seq{\nD}{\G_2}, \Seq{\pR}{\G_2}}}$, where $\G_2$ does not contain different inputs from $\pp$ to $\s$.


\section{Related and future works}\label{sec:rfw}

Since their introduction in~\cite{CHY08} multiparty sessions have been equipped with global types. The literature on multiparty sessions is very large, see~\cite{H2016} for a survey. In almost all papers the building blocks of global types are synchronous communications between senders and receivers. This does not fit well with asynchronous communications. To fill this gap a suitable subtyping is proposed in~\cite{HMY09}. This subtyping enjoys completeness~\cite{GPPSY21}, i.e.  
any extension of it would be unsound. Unfortunately the obtained type system is not effective, since this subtyping is undecidable~\cite{BCZ17,LY17}. Decidable restrictions of this subtyping relation have been proposed~\cite{BCZ17,LY17,BCZ18}.  In particular, subtyping is decidable when both internal and external
choices are forbidden in one of the two compared processes~\cite{BCZ17}.  This result is improved in~\cite{BCZ18},
where both the subtype and the supertype can contain either internal or external choices.  More interestingly, the work~\cite{BCLYZ21}
presents a sound (though not complete) algorithm for checking asynchronous subtyping  for binary sessions.  It is worthwhile to notice that  Example 3.21 of~\cite{BCLYZ21} is  typable in our system  with a provably balanced global type. Instead, in standard type systems it requires a subtyping which is not derivable by the algorithm in~\cite{BCLYZ21}. 
In our notation  this example  
is $\pP\pp\PP\parN\pP\q\Q$ where $\PP=\q?\{\la_1;\PP_1,\la_2;\PP_2\}$, $\PP_1=\q!\la;\q!\la;\q!\la;\PP$, $\PP_2=\q!\la;\PP_2$ and $\Q=\pp!\{\la_1;\Q_1,\la_2;\Q_2\}$, $\Q_1=\pp?\la;\Q$, $\Q_2=\pp?\la;\Q_2$. The reduction of this network has a loop in which for each cycle the number of messages $\mq\pp{\la}\q$ in the queue increases by two. Our type inference algorithm gives between others the global type  
$\G=\agtSOS{\q}{\pp}{  \{\Seq{\la_1}{\Seq{\CommAsI{\q}{\la_1}{\pp}}{\G_1} }\, ,\Seq{\la_2}{\Seq{\CommAsI{\q}{\la_2}{\pp}}{{\G_2} }}\} }$, where 
$\G_1={\Seq{\CommAs{\pp}{\la}{\q}}{ \Seq{\CommAs{\pp}{\la}{\q}}{  \Seq{\CommAs{\pp}{\la}{\q}}{ \Seq{\CommAsI{\pp}{\la}{\q}}{  \G }   } }  }   }$ and 
$\G_2=\Seq{\CommAs{\pp}{\la}{\q}}{\Seq{\CommAsI{\pp}{\la}{\q}}{  \G_2 }  }$. 

To the best of our knowledge global types with split outputs and inputs are considered only in~\cite{CDG20,DGD21,CDG22}. 
The above network can be typed in~\cite{DGD21} and in~\cite{CDG22}, but not in~\cite{CDG20}. 
In~\cite{CDG20,DGD21} there is no choice of inputs and the running example  of the current paper  cannot be typed. 
Instead~\cite{CDG22} further extends the syntax of global types by allowing multiple senders in internal choices and multiple receivers in external choices. 
The obtained typing is undecidable and it does not ensure the absence of orphan messages. 
The present work springs from~\cite{DGD21}, but with many improvements:
\begin{itemize}
\item there are choices of inputs in global types;
\item networks are typed by global types, while in all other papers, but~\cite{CDG22}, either global types are projected onto local types and local types are assigned to processes or 
global types are projected onto processes;
\item the proof that the type system allows us to type an arbitrary network is new; 
\item the type inference algorithm is new and very simple compared for example to the algorithm in~\cite{GHH21}, essentially thanks to the flexible syntax of global types; 
\item the conditions for taming types are more permissive: this also is due to the new syntax of global types; 
\item  the proof of undecidability for the (weak) balancing predicate is original also if inspired by the proof in~\cite{BCZ17};
\item the decision algorithm for balancing  is empowered by  the use of the equivalence on queues parametrised on global types.
\end{itemize}
The  type system of~\cite{DGD21} is implemented  in co-logic programming, see~\cite{BD21}. The tool  is available at \cite{Impl21}. We plan to extend this implementation to the present type system.
   
As future work we want to compare typability in our type system with  typability  
in the standard type system enriched by asynchronous subtyping.

 \paragraph{Acknowledgment} 
We are grateful to Ilaria Castellani and Elena Zucca for enlightening discussions on the subject of this paper. 
We are strongly indebted to the anonymous referees since thanks to their constructive remarks the current version of the paper  greatly improves the  submitted one. 
 \bibliographystyle{alphaurl}
 \bibliography{agtbib}

\end{document}